\newcommand {\TriangleCut}[2]{ 
     \foreach [count=\j] \vv in {1,1,1}{  
     \draw[decoration={aspect=0.5, segment length=02, amplitude=0.7,coil},decorate] (0,0) to (-180 +120*\j-120 :1);
     \draw (-180 +120*\j-120 :0.49) 
             to  node [pos=0.6,above] {\small $J$}  (-180 +120*\j-120:0.51);}

    
\draw [
    postaction={decorate,decoration={markings,mark=at position 0.11 with {\arrow[line width=1pt]{<}}}},
    postaction={decorate,decoration={markings,mark=at position 0.247 with {\arrow[line width=1pt]{>}}}},
    postaction={decorate,decoration={markings,mark=at position 0.44 with {\arrow[line width=1pt]{<}}}},
    postaction={decorate,decoration={markings,mark=at position 0.584 with {\arrow[line width=1pt]{>}}}},
    postaction={decorate,decoration={markings,mark=at position 0.904 with {\arrow[line width=1pt]{>}}}},
    postaction={decorate,decoration={markings,mark=at position 0.77 with {\arrow[line width=1pt]{<}}}},
    ] (0:2) to  (120:2) to  (-120:2) to  cycle ;
       \foreach  [count=\j]  \vv in {1,1,1} {
        \draw [line width=0.4pt, 
        blue,postaction={decorate,decoration={markings,mark=at position 0.55 with {\arrow[line width=1pt]{<}}}}] (0,0) to  node[pos=0.4, above]   
           {\small $L $} (120*\j-120:2);
        \draw [fill](120*\j-120:2) circle[radius=0.02];
      }

}
\newcommand {\Triangleplain}[2]{ 
  
\ifthenelse{#1=0}{
     \foreach [count=\j] \vv in {0,0,0}{ 
          \ifthenelse{\vv=1}
          { \draw [decoration={aspect=0.5, segment length=02, amplitude=0.7,coil},decorate]  (0,0) 
             to  node [pos=0.6,above,sloped] {\ifthenelse{#2=1}{$i\sigma_2$}{}} (-180 +120*\j-120:1);}
          {\draw [decoration={aspect=0.5, segment length=02, amplitude=0.7,coil},decorate] 
              (0,0) to  node [pos=0.6,above,sloped] {\ifthenelse{#2=1}{$i\sigma_2$}{}} (-180 +120*\j-120 :1);}
       }
      }{}
  \ifthenelse{#2=0} {
    \fill  [red!30!white, opacity=0.5 ] (60:1) to (0,0) to (0:2) to cycle;
    \fill [red!30!white, opacity=0.5 ] (-60:1) to (0,0) to (-120:2) to cycle;
    \fill [red!30!white, opacity=0.5 ] (120:2) to (0,0) to (-180:1) to cycle;
  }{}
     \draw [
    postaction={decorate,decoration={markings,mark=at position 0.106 with {\arrow[line width=1.5pt]{<}}}},
    postaction={decorate,decoration={markings,mark=at position 0.247 with {\arrow[line width=1.5pt]{>}}}},
    postaction={decorate,decoration={markings,mark=at position 0.436 with {\arrow[ line width=1.5pt]{<}}}},
    postaction={decorate,decoration={markings,mark=at position 0.5765 with {\arrow[ line width=1.5pt]{>}}}},
    postaction={decorate,decoration={markings,mark=at position 0.91 with {\arrow[line width=1.5pt]{>}}}},
    postaction={decorate,decoration={markings,mark=at position 0.77 with {\arrow[line width=1.5pt]{<}}}},
    ] (0:2) to  (120:2) to  (-120:2) to  cycle ;
       \foreach  [count=\j]  \vv in {1,1,1} {
        \ifthenelse{\vv=1}{
        \draw [line width=0.4pt, 
        blue,postaction={decorate,decoration={markings,mark=at position 0.55 with {\arrow[line width=1.2pt]{<}}}}] (0,0) to  node[pos=0.4, above,sloped]   
           {\ifthenelse{#2=1}{\tiny $L $}{}} (120*\j-120:2);
        \draw [fill](120*\j-120:2) circle[radius=0.02];
        }
}}
\def\Xint#1{\mathchoice
{\XXint\displaystyle\textstyle{#1}}%
{\XXint\textstyle\scriptstyle{#1}}%
{\XXint\scriptstyle\scriptscriptstyle{#1}}%
{\XXint\scriptscriptstyle\scriptscriptstyle{#1}}%
\!\int}
\def\XXint#1#2#3{{\setbox0=\hbox{$#1{#2#3}{\int}$ }
\vcenter{\hbox{$#2#3$ }}\kern-.6\wd0}}
\def \Bil#1#2{\le\langle #1, #2\ri\rangle}
\def\slint{\Xint\setminus}
\def\Fcal{{\mathcal F}}
\def\Gcal{{\mathcal G}}
\def\Qcal{{\mathcal Q}}
\def\e{\hbar}
\def\Qcal{{\mathcal Q}}
\def\Qcal{{\mathcal Q}}
\def \CC{{\mathcal C}}
\newlength{\dinwidth}
\def\le{\left}
\def \QED{\hfill $\blacksquare$\par \vskip 4pt}
\def\ri{\right}
\newlength{\dinmargin}
\def \bea#1\eea {\begin{align} #1 \end{align}}
\def \wt{ \widetilde }
\def \&{\hspace{-15pt}&}
\def \d{\delta}
\def\res{\mathop{ \mathrm {res}}}
\newtheorem{theorem}{Theorem}[section]
\newtheorem{proposition}{Proposition}[section]
\newtheorem{corollary}{Corollary}[section]
\newtheorem{remark}{Remark}[section]
\newtheorem{lemma}{Lemma}[section]
\def\be{\begin{equation}}
\def\ee{\end{equation}}
\def\ben{\begin{displaymath}}
\def\een{\end{displaymath}}
\def\baa{\begin{eqnarray}}
\def\eaa{\end{eqnarray}}
\def\ba{\begin{array}}
\def\ea{\end{array}}
\def\Acal{\mathcal A}
\def\qd{Q}
\def \eqref #1{(\ref{#1})}
\def \1{\mathbf 1}
\def \br{\begin{remark}}
\def\er{\end{remark}}
\def\vt\tilde{v}
\def\C{{\mathbb C}}
\def\Z{{\mathbb  Z}}
\def\a{\alpha}
\def\g{\gamma}
\def\b{\beta}
\def\l{\lambda}
\def\p{\partial}
\def\Ccal{{\mathcal C}}
\def\Ch{{\widehat{{\mathcal C}}}}
\def\pa{\partial}
\def\f{\frac}
\def\la{\label}
\def\Scal{{\mathcal S}} 
\def\res{\mathop{\mathrm {res}}\limits_}
\def\tr{{\rm tr}}
\def\la{\label}
\def\CC{{\mathcal C}}
 \title{Generating function of monodromy symplectomorphism for Fuchsian systems on ${\bf CP}1$ and its WKB 
 expansion}
\begin{document}
\begin{center}
\huge{Generating function of monodromy symplectomorphism for $2\times 2$  Fuchsian systems  and its WKB 
 expansion}
\end{center}

\begin{center}
\bigskip
M. Bertola$^{\dagger\ddagger\diamondsuit}$\footnote{Marco.Bertola@concordia.ca, mbertola@sissa.it},  
D. Korotkin$^{\dagger\ddagger}$ \footnote{Dmitry.Korotkin@concordia.ca},
F. del Monte$^{\dagger\ddagger}$ \footnote{Fabrizio.Delmonte@concordia.ca}
\\
\bigskip
\begin{small}
$^{\dagger}$ {\it   Department of Mathematics and
Statistics, Concordia University\\ 1455 de Maisonneuve W., Montr\'eal, Qu\'ebec,
Canada H3G 1M8} \\
\smallskip
$^{\ddagger}$ {\it  Centre de recherches math\'ematiques,
Universit\'e de Montr\'eal\\ C.~P.~6128, succ. centre ville, Montr\'eal,
Qu\'ebec, Canada H3C 3J7} \\
\smallskip
$^{\diamondsuit}$ {\it  SISSA/ISAS,  Area of Mathematics\\ via Bonomea 265, 34136 Trieste, Italy }\\
\end{small}
\end{center}
\vspace{0.5cm}

{\bf Abstract.} We study the WKB expansion of $2\times 2$ system of linear differential equations with four
fuchsian singularities. The main focus is on the generating function of the monodromy symplectomorphism
which, according to a recent paper \cite{BK2iso}  is closely related to the Jimbo-Miwa tau-function.
We compute the first three terms of the WKB expansion of the generating function and establish the link to the Bergman tau-function.

\tableofcontents

\section{Introduction}
Although a subject of by now venerable age, the Wentzel-Kramers-Brillouin (WKB) approximation, used since early days of quantum 
mechanics to study the quasi-classical limit of the Schr\"odinger equation, has enjoyed a surge of interest in the past decades, with a positive feedback of results between the mathematical and physical community. After the method was developed further by many mathematicians in the decades preceding the turn of the millennium (see the relatively recent reviews \cite{DDP,kawai2005algebraic}), a new surge of interest in the subject was prompted by the emergence of a connection between the WKB approximation and the geometry of four-dimensional supersymmetric field theories in \cite{GMN}, where the rich geometry arising from the WKB graph and differentials was used to study BPS states of four-dimensional supersymmetric theories. This perspective has been later made more mathematically precise in  \cite{BS,Bri,AlBrid}. The central object in this analysis is the WKB curve, a Riemann surface arising from the leading semiclassical approximation, that coincides with the Seiberg-Witten curve of the associated quantum field theory. Over the curve $\CC$ one introduces the graph of horizontal trajectories for the projective connection entering as potential in the Schr\"odinger equation, defining a triangulation that allows to relate periods of the WKB differential to Fock-Goncharov coordinates \cite{FG,BK_JDG}, defined on the monodromy manifold of the second order ODE.

The monodromy manifold of the Schr\"odinger equation is the $SL_2(\mathbb{C})$ character variety, that can be parametrized by Fock-Goncharov coordinates (in turn related to WKB periods). Over this space the Goldman Poisson bracket \cite{Gold84} is defined, inverted by the symplectic form $\Omega_G$ found in \cite{AlekMal2} on symplectic leaves $V^{{\bf r}}$, which was computed in \cite{BK_JDG} using complex shear (Fock-Goncharov) coordinates. In the paper \cite{BK_TMF} a natural set of Darboux coordinates for the Goldman symplectic form $\Omega_G$, called homological shear coordinates, were found. The symplectic leaf $V^{{\bf r}}$ of the $SL(2)$ character variety, where $e^{\pm2\pi i r_j}$ are the monodromy eigenvalues at the punctures, is the image under the monodromy map $\mathcal{F}$ of the moduli space of meromorphic flat connections with fixed residues, $\Acal^{{\bf r}}$. This is also a symplectic manifold, endowed with the Atiyah-Bott (pre-)symplectic form, that reduces to the Kirillov-Konstant symplectic form $\Omega_{KK}$ in genus 0. According to the theorem proved in \cite{Hitchin,KorSam,AlekMal2}, the monodromy map for a Fuchsian differential equation  is a symplectomorphism between the two spaces  $V^{{\bf r}}$ and $\Acal^{{\bf r}}$,  i.e.
\begin{equation}
\mathcal{F}^*\Omega_G=2\pi i\Omega_{KK}.
\end{equation}

Another connection between monodromy  of linear ODEs and supersymmetric QFTs comes from the theory of tau functions of isomonodromic systems, first introduced in the '80s by the Japanese School \cite{JMU1}. Starting from \cite{GIL2012}, the tau function of a large class of isomonodromy problems, including in particular the sixth Painlev\'e equation and the Schlesinger system \cite{ILT2015,GL2018} as cases relevant to this work, was identified with a Fourier series of non-perturbative four-dimensional gauge theory partition functions. The  quantum field theory  corresponding to a given isomonodromic problem can be identified by the singularity structure of the linear system, or equivalently by its spectral curve, which coincides with the WKB curve \cite{BLMST2017}. In this context, the tau function is defined by requiring its logarithmic derivatives to be the isomonodromic Hamiltonian, which makes it determined  up to an arbitrary function of the monodromy data.

In \cite{BK2iso}, after previous results in this direction in the papers \cite{ItLTy2014,ILP}, it was shown that it is possible to extend the definition of the isomonodromic tau function for Fuchsian systems on the Riemann sphere in a way that fixes not only the time dependence, but also the dependence on the monodromy parameters, by defining it to be the generating function for the monodromy symplectomorphism, that is, given symplectic potentials $\theta_{KK}$ and $\theta_G$ such that $\delta\theta_{KK}=\Omega_{KK}$, $\delta\theta_G=\Omega_G$, the tau function $\mathcal{T}$ is defined as
\begin{equation}\label{eq:dlogT}
d\log\mathcal{T}:=  \theta_{KK}-\sum_{k=1}^{n}H_k \,dt_k {-\frac 1{2i\pi}} \Fcal^* \theta_G.
\end{equation}
In \cite{DMDG2022} it was shown that this same expression, together with its generalization for Fuchsian systems over genus one Riemann Surfaces, arises from the Fredholm determinant representation of the tau function \cite{GL2018,DMDGG2020}. The definition \eqref{eq:dlogT} has the conceptual advantage of fully fixing the functional dependence of $\mathcal{T}$ on all local coordinates of $\Acal^{{\bf r}}$. In particular, it allows to determine the ratio of tau functions expressed in different monodromy coordinates, allowing to determine the so-called connection constant \cite{Its2016,ILP}. In terms of 2d CFT, the connection constant is interpreted as the ratio between conformal blocks in different channels, known as the fusion kernel \cite{ILTy2013,GMS2020,BGG2021}, and in terms of the corresponding 4d gauge theory \cite{Nekrasov:2020qcq,Jeong:2020uxz} it is the ratio of dual partition functions defined in different gauge theory regimes\footnote{A different point of view was taken in \cite{CLT}, where tau functions were characterized as \textit{difference} generating functions. While there are clear similarities between the two definitions, the precise relation is not yet completely clear, and we leave it to future studies.}. 

In this paper we take the first step towards the WKB analysis of the generating function of monodromy symplectomorphism, defined by
\begin{equation}\label{eq:Gdef}
\delta\log\Gcal:=\Fcal^* \theta_G-2\pi i\theta_{KK},
\end{equation}
for the case of a general Fuchsian system on the Riemann sphere, by computing explicitly its first three contributions. From a WKB standpoint, the main difference with respect to previous works such as \cite{GMN,BK_TMF} is that   the $2\times 2$ Schlesinger system gives rise to the  Schr\"odinger equation with apparent singularities. Furthermore, we consider variations $\delta$ that do not move the position of the (non-apparent) singularities, that we denote by $z_1,\dots,z_{g+2}$. 
Even though the apparent singularities  introduce technical complications, it is still possible to explicitly integrate the equation for the generating function in the first three orders, resulting in Theorems \ref{thm:Gm2}, \ref{thm:Gm1}, \ref{thm:G0}.

To obtain the WKB-expansion of the  {\it  isomonodromic} tau function from our computation, one would have to consider also variation of the positions of the poles, and then impose that the apparent singularities evolve according to the isomonodromic deformation equations, as it was  done in \cite{Iwaki:2019zeq,BM} for the case of Painlev\'e I. In this case $R_j$'s become $\hbar$ and time-dependent, so to get the true asymptotic expansion in $\hbar$, one would have to further expand the resulting expression imposing the isomonodromic time evolution.

Let us now introduce the necessary definitions and notations (for more details about the notations we refer the reader to the beginning of Section \ref{sec3}).   We are going to  study the $\hbar$-expansion for the equation

\be
\label{int1}
 \frac{d\Psi}{d z}= R(z)\Psi(z)= \frac{1}{\hbar}\sum_{j=1}^{g+2} \frac {R_j}{z-z_j}\Psi(z).
\ee

Let
\be
R_\infty = R_{g+3}=-\sum_{j=1}^{g+2} R_j\;,
\ee
and assume that $R_\infty$ is diagonal,
\be
R_\infty=\left(\ba{cc} r_\infty & 0 \\ 0 & -r_\infty\ea\right).
\ee
Denote the eigenvalues of the matrices $R_j$ by $r_j$ and $-r_j$, $j=1,\dots,g+3$.
The solution $\Psi$ of (\ref{int1}) has monodromies $M_1,\dots,M_{g+3}$ around the points 
$z_1,z_2,\dots,z_{g+2},\infty$ which satisfy
the relation
$$
M_1\dots M_{g+3}=I.
$$
Assume that the matrices $R_j$ are diagonalizable, and let
\be
R_j= G_j L_j G_j^{-1}
\la{Adia}
\ee
where $L_j={\rm diag}(r_j, -r_j)$. Then, on the space $\Acal^{{\bf r}}$ which is the symplectic leaf $r_j=const$ quotiented over simultaneous transformations of the form
$R_j\to G R_j G^{-1}$, $G\in SL(2,\mathbb{C})$, the Kirillov-Kostant symplectic form  is defined by  
\be
\Omega_{KK}=-\frac{1}{\hbar}\tr \sum_{k=1}^g L_k G_k^{-1}\delta G_k \wedge G_k^{-1}\delta G_k
\la{sf}\ee
 with the symplectic potential $\theta_{KK}$ (such that $\d \theta_{KK}=\omega_{KK}$) given by
\be
\theta_{KK}=\frac{1}{\hbar}\tr\sum_{k=1}^g L_k G_k^{-1}\delta G_k .
\la{thetaKK}
\ee
The construction of the Darboux homological shear coordinates for the Goldman symplectic form \cite{BK_TMF}, parametrizing the monodromy representation of solutions of \eqref{int1}, looks as follows:
write the coefficient matrix of the linear system \eqref{int1} as
\begin{equation}\label{abcd}
R(z) = \le(\begin{array}{cc}
a(z) &b(z)\\
c(z) & -a(z)
\end{array}\ri),
\end{equation}
and define
\be
Q_0 (z) \equiv -\det R(z) = a(z)^2+b(z)c(z).
\la{Q0}
\ee
We will assume all zeros $x_j,\,j=1,\dots,2g+2$ of $Q_0$ to be simple; then $Q_0$ takes the form
\be
Q_0(z) = C_0\frac{\prod_{j=1}^{2g+2}(z-x_j)}{\prod_{j=1}^{g+2} (z-z_j)^2},
\ee
where $C_0$ is a proportionality constant. Consider the hyperelliptic curve $\CC$ of genus $g$ with branch points at $x_1,\dots x_{2g+2}$ defined by
\be
\mu^2=Q_0(z)\;,
\la{spcurve}
\ee
and introduce the following meromorphic differential of the third kind, with $2g$ poles  on $\CC$:
\be
v=\mu(z) d z.
\la{defv}
\ee
The horizontal trajectories of $v$ generically connect its zeros $x_j$ with its poles $z_j$; denote the resulting critical graph by $\Gamma$. From the graph $\Gamma$ one can construct two graphs embedded in
the Riemann sphere: the graph $\Sigma$ with vertices at $z_j$ whose faces are triangles, and the tri-valent graph $\Sigma^*$ dual to $\Sigma$ with tri-valent vertices at $x_j$, as in Figure \ref{FigTrian}.

\begin{figure}[htb]
\begin{center}
\includegraphics[width=0.5\textwidth]{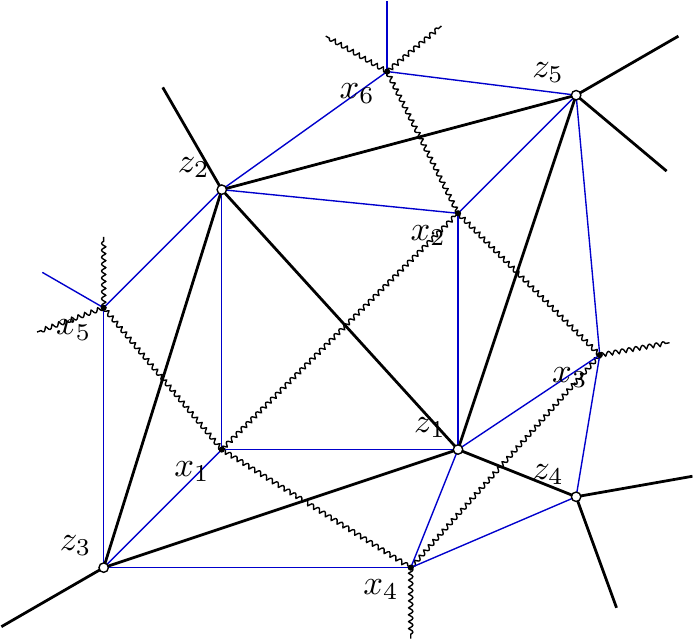}
\end{center}
\caption{Horizontal critical trajectories (blue lines) connect poles $z_j$ with zeros $x_k$ of $Q$ and form the
critical graph $\Gamma$. Black edges connecting poles $z_j$ form the graph $\Sigma$ (the triangulation of $\CC$) while the  zigzag curves connect zeros $x_j$ and form the dual tri-valent graph $\Sigma^*$.
All edges of  $\Sigma^*$ can be chosen to be  the branch cuts of $\CC$.  }
\label{FigTrian}
\end{figure}

The 
(logarithmic) complex shear coordinates on $V$ are assigned to edges of $\Sigma^*$; according to
\cite{BK_TMF} these coordinates can be extended by linearity to get homological shear coordinates 
assigned to elements of $H_1(\CC)$.  Introduce a canonical basis of cycles $(a_j,b_j)_{j=1}^g$ on $\CC$ with the intersection index 
and consider the corresponding set of logarithmic homological shear coordinates  $(\rho_{a_j},\rho_{b_j})$
on $V^{\textbf{r}}$ (see App of \cite{BK_TMF}), among which there are $g$ relations. In terms of $(\rho_{a_j},\rho_{b_j})$, Goldman's symplectic form $\Omega_G$ on $V^{{\bf r}}$ looks as follows:
\be
\Omega_G ={}\sum_{j=1}^g \d \rho_{a_j}\wedge \d\rho_{b_j}
\la{Gold}
\ee
The corresponding symplectic potential on $V^{{\bf r}}$ satisfying $\d \theta_G=\Omega_G$ will be chosen as follows:
\be\label{eq:thetaG}
\theta_G =\f{1}{2}\sum_{j=1}^g  (\rho_{a_j} \d\rho_{b_j} - \rho_{b_j}\d  \rho_{a_j}).
\ee 
Denote the monodromy map by $\Fcal$. The  theorem  of \cite{Hitchin,KorSam,AlekMal2} states that
\be
2\pi i \Omega_{KK}= \Fcal^* \Omega_G,
\la{KKG}
\ee
so that there exists the generating function $\Gcal$ such that
\be
\d \Gcal=\Fcal^* \theta_G-2\pi i\theta_{KK}.
\la{defG}
\ee

In this paper we compute explicitly the first three non-trivial terms in the $\hbar$-expansion of $\Gcal$ for fixed times:
\begin{equation}
\Gcal=\f{\Gcal_{-2}}{\hbar^2}+ \f{\Gcal_{-1}}{\hbar}+ \Gcal_0+\dots\;
\la{Ghbar}
\end{equation}
in Theorems \ref{thm:Gm2}, \ref{thm:Gm1} and \ref{thm:G0}. Note that, due to the almost trivial $\hbar$ dependence of the linear system \eqref{int1}, the Kirillov-Konstant symplectic potential in \eqref{defG} contributes only at order $\mathcal{O}(\hbar^{-1})$, while the potental
$\theta_G$ contains terms of every order starting from $\hbar^{-2}$. The existence of the monodromy symplectomorphism $\mathcal{F}$ implies the highly nontrivial consequence that the WKB expansion of Goldman's symplectic potential $\theta_G$ gives a closed form at all orders except for $\mathcal{O}(\hbar^{-1} )$. The computation of the coefficients in the expansion \eqref{Ghbar} is based on the $\hbar$-expansion of the homological shear coordinates $\rho_\ell$ for 
$\ell\in H_1(\CC,\Z)$ \cite{AlBrid,BK_TMF}:
\be
\rho_{\ell} = \int_{\ell}\left(\f{v}{\hbar}+ v_0+ \hbar v_1+\dots\right)
\la{rhohbar}
\ee
where $v_0, v_1, v_2,\dots$ are meromorphic differentials on $\CC$ arising from the WKB expansion. 
The integrals in (\ref{rhohbar}) are called Voros symbols after \cite{Voros}.

Our main result is the integration of definition \ref{eq:Gdef}, resulting in the explicit determination of $\Gcal_{-2},\,\Gcal_{-1},\,\Gcal_{0}$ in terms of contour integrals of WKB differentials $v,v_0,v_1$ on the WKB curve \eqref{spcurve}. The leading contribution to $\Gcal$ is

\be
\Gcal_{-2}=-\pi i \left(r_\infty {\rm reg}\int_{\infty^{(2)}}^{\infty^{(1)}} v +\sum_{j=1}^g r_j {\rm reg}\int_{z_j^{(2)}}^{z_j^{(1)}} v\right)
\la{regG2intro}
\ee
where $p^{(j)}$ denotes the image of the point $p\in\mathbb{P}^1$ on the $j$-th sheet of $\Ccal$, and the regularized integrals are defined by
\begin{equation}
{\rm reg}\int_{z_j^{(2)}}^{z_j^{(1)}} v:=\lim_{\epsilon\to 0}\left(\int_{z_j^{(2)}+\epsilon}^{z_j^{(1)} +\epsilon} v -2r_j\log\epsilon\right)
\end{equation}
and
\begin{equation}
{\rm reg}\int_{\infty^{(2)}}^{\infty^{(1)}} v:=\lim_{R\to \infty}\left(\int_{R^{(2)}}^{R^{(1)}}  v +2r_\infty \log R\right).
\end{equation}

The subleading term $\Gcal_{-1}$ is  given by the formula
\be
\Gcal_{-1}= \f{1}{2}\langle v, v_0\rangle   -2\pi i \sum_{j=1}^g q_j  A_j  -\pi i \sum_{k=1}^g B_k - \pi i  \sum_{j=1}^g j A_j,
\la{Gm1intro}\ee
where $A_j$, $B_j$ are the A- and B-periods of $v$, and $\langle\,,\,\rangle$ is the antisymmetric pairing defined by Riemann's bilinear relations
\be
\langle w, \tilde{w}\rangle:= \oint_{\partial \tilde{\Ccal}} \left(\int^x w\right) \tilde{w} 
\la{RBL}
\ee
where $\tilde{\Ccal}$ is the fundamental polygon of $\Ccal$. Finally, the constant term $\Gcal_0$ in the WKB expansion of the generating function is 

\be
 \Gcal_{0}=  -12\pi i \ln \tau_{B}(CP^1,Q_0) + F - \frac 1 2 \langle v_1,v\rangle.
 \ee
 \be
\frac{F}{i \pi } =  \frac 1 2 \sum_{j=1}^g \slint_{\l_j^{(2)} }^{\l_j^{(1)} }\!\!\!\! v_0  + \frac 1 2\, \slint_{\infty^{(2)} }^{\infty^{(1)} } \!\!\!\! \!\!v_0
+ \ln \frac {\prod_{a,k} (\l_a-z_k) } {\prod_j \mu_j\prod_{a<b} (\l_a- \l_b)}
\\
-\frac 1 {4r_\infty} \, \slint_{\infty^{(2)} }^{\infty^{(1)} }\!\!\!\!  v 
-
\sum_{k=1}^{g+2} \frac 1 {4r_k}\,  \slint_{z_k^{(2)} }^{z_k^{(1)} }  v  
.
\ee
Here $\lambda_j$'s are the location of the zeros of the $(2,1)$ entry of the matrix $R(z)$ \eqref{int1} , $\tau_B$ is Bergman's tau function (see Appendix \ref{secBerg})  and the regularization in the integrals is defined by
\be
\slint^{\l_j^{(1)} }_{\l_j^{(2)} }  v_0 := \lim_{p\to \l_j^{(1)} \atop q\to \l_j^{(2)} } \int_q^p v_0 - \frac 1 2 \ln (z(p)-\l_j)- \frac 1 2 \ln  (z(q)-\l_j).
\ee 
where $z:\mathcal C\to \C$ is the projection on the $z$--coordinate.\\
\noindent {\bf Acknowledgements.}
 The work of  MB was
supported in part by the Natural Sciences and Engineering Research Council of Canada (NSERC)  grant  RGPIN-2016-06660.  The work of  DK was
supported in part by the Natural Sciences and Engineering Research Council of Canada (NSERC)  grant  RGPIN-2020-06816.

\section{Second order equation and its WKB expansion}

The starting point of our discussion is the linear system \eqref{int1}, with coefficient matrix \eqref{abcd}. 
 Let us denote the zeros of the component $c(z)=R_{21}(z)$ by $\l_1,\dots,\l_g$.
Then, since  $R_\infty$ is diagonal, we have $c(z)\sim C/z^2$ as $z\to\infty$ and we can write:
\be
c(z) = C\frac {\prod_{j=1}^{g} (z-\l_j)}{\prod_{k=1}^{g+2} (z-z_k)}
\label{cdef}
\ee
for some constant $C$. If we also denote by
\be
\mu_j=a(\lambda_j)\;, \hskip0.7cm j=1,\dots, g\;
\la{muj}
\ee
we can write $a(z)$ as follows: 
\begin{equation}
a(z)=\sum_{j=1}^g\mu_j\prod_{k=1}^n\frac{\lambda_j-z_k}{z-z_k}\prod_{l\ne k}^g\frac{z-\lambda_l}{\lambda_j-\lambda_l}\; .
\end{equation}	

\begin{proposition}
Let $\psi(z;\hbar)$  be the second component of the vector-valued solution of \eqref{int1}; then the function $f= \sqrt{\frac{\hbar}{c(z)}}\psi(z;\hbar)$ satisfies the following second order ODE:
\begin{equation}\label{eq:Schro}
\f{d^2 f}{d z^2}-Q(z;\hbar) f=0
\end{equation}
where the {\rm potential} $Q(z;\hbar)$ is given by
\be
Q(z;\hbar) = \f{Q_0 }{\hbar^2}+ \f{ Q_1}{\hbar} + Q_2\;.
\la{Qz}
\ee
Here,
\be
Q_0 = -\det R = a^2+bc\;,
\la{defQ0}\ee
\be
Q_1 = a\f{{\rm d}}{{\rm d} z}\ln\le(\frac{c}{a}\ri)\;,
\la{defQ1}
\ee
\be\la{Q2}
Q_2 
= 
\frac 1 4 \le(\frac {c'}c\ri)^2 - \frac 1 2\le(\frac {c'}{c}\ri)'=-\f{1}{2} {\mathcal S}\left(\int^z c(x) \;{\rm d} x,\;z\right),
\ee
where ${\mathcal S}(f,z)$ is the Schwarzian derivative
\begin{equation}
\mathcal{S}(f,z)\equiv \left(\frac{f''}{f'}\right)'-\frac{1}{2}\left(\frac{f''}{f'} \right)^2.
\end{equation} 
\end{proposition}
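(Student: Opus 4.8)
The plan is to reduce the first-order $2\times2$ system to a scalar second-order ODE for its lower component by Gaussian elimination, and then perform a Liouville (gauge) transformation to put it in Schr\"odinger form $f''-Qf=0$, reading off $Q_0,Q_1,Q_2$ by collecting powers of $\hbar$. Writing the system componentwise with $\Psi=(\psi_1,\psi)^T$ and the convention $R=\f{1}{\hbar}\le(\begin{smallmatrix} a & b\\ c & -a\end{smallmatrix}\ri)$ (so that $a,b,c$ are $\hbar$-independent), the two equations read $\hbar\psi_1'=a\psi_1+b\psi$ and $\hbar\psi'=c\psi_1-a\psi$. First I would solve the second equation for $\psi_1=(\hbar\psi'+a\psi)/c$, which is legitimate away from the zeros $\lambda_j$ of $c$, differentiate it, and substitute both $\psi_1$ and $\psi_1'$ into the first equation. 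After clearing the common factor $c$, dividing by $\hbar$, and cancelling the $a\psi'$ terms that appear on both sides, I expect to land on
\be
\psi'' - \f{c'}{c}\psi' + \le[\f{1}{\hbar}\le(\f{c'}{c}a - a'\ri) - \f{1}{\hbar^2}(a^2+bc)\ri]\psi = 0,
\ee
an equation of the form $\psi''+P\psi'+S\psi=0$ with $P=-c'/c$ and $S$ the bracketed coefficient.

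The second step removes the first-derivative term. The standard Liouville substitution $\psi=f\exp\!\le(-\tfrac12\int P\,dz\ri)=f\sqrt{c}$, equivalently $f=\sqrt{\hbar/c}\,\psi$ (the constant $\sqrt{\hbar}$ being a harmless $z$-independent normalization), transforms the equation into $f''-Qf=0$ with $Q=\tfrac14 P^2+\tfrac12 P'-S$. Inserting $P=-c'/c$ produces $\tfrac14(c'/c)^2-\tfrac12(c'/c)'$ from the $P$-terms and $\tfrac1\hbar\big(\tfrac{c'}{c}a-a'\big)+\tfrac1{\hbar^2}(a^2+bc)$ from $-S$. Collecting by powers of $\hbar$ then identifies $Q_0=a^2+bc$, $Q_1=\tfrac{c'}{c}a-a'=a\,\tfrac{d}{dz}\ln(c/a)$, and $Q_2=\tfrac14(c'/c)^2-\tfrac12(c'/c)'$, matching \eqref{defQ0}--\eqref{Q2}. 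The Schwarzian rewriting of $Q_2$ is then a one-line check: setting $g(z)=\int^z c\,dx$ so that $g'=c$ and $g''/g'=c'/c$, one has $\mathcal S(g,z)=(c'/c)'-\tfrac12(c'/c)^2$, whence $Q_2=-\tfrac12\mathcal S(g,z)$.

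There is no deep obstacle here; the computation is elementary, and the work is mostly careful bookkeeping. The two points that require attention are, first, tracking the $\hbar$-powers correctly: the crucial fact is that every logarithmic derivative such as $c'/c$ is $\hbar$-independent, which is exactly what pins $Q_2$ at order $\hbar^0$ and prevents spurious powers of $\hbar$ from leaking into the lower-order coefficients. Second, the elimination divides by $c$ and the gauge transformation by $\sqrt{c}$; this is harmless generically but introduces regular singular points at the zeros $\lambda_j$ of $c$. These are precisely the \emph{apparent} singularities of the resulting Schr\"odinger equation referred to in the introduction, so the identity $f''-Qf=0$ should be understood on $\mathbb{P}^1$ with the points $\{z_k\}\cup\{\lambda_j\}$ removed.
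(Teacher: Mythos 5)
Your proof is correct and is essentially the same argument as the paper's: the paper performs the singular gauge transformation $\Psi = c^{-\sigma_3/2}\left(\begin{smallmatrix} 1 & \frac{\hbar}{2}\frac{c'}{c}+a \\ 0 & 1 \end{smallmatrix}\right)\hbar^{\sigma_3/2}F$, which brings $R$ to the companion form $\left(\begin{smallmatrix} 0 & Q \\ 1 & 0 \end{smallmatrix}\right)$, and this is precisely the matrix packaging of your two steps (eliminate $\psi_1$ via the second row, then apply the Liouville substitution $f=\sqrt{\hbar/c}\,\psi$), producing the identical $Q_0$, $Q_1$, $Q_2$. One sign slip to fix for internal consistency: the $O(\hbar^{-1})$ coefficient in your displayed intermediate equation should be $\frac{1}{\hbar}\left(a'-\frac{c'}{c}a\right)$, not $\frac{1}{\hbar}\left(\frac{c'}{c}a-a'\right)$ (check, e.g., $c\equiv 1$, $b\equiv 0$, where $\psi=e^{-\int a/\hbar}$ must solve the equation); since in the following step you wrote the correct expression for $-S$, the final $Q_1=a\,\frac{d}{dz}\ln(c/a)$ nevertheless comes out right.
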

\begin{proof}
Let us apply to the linear system \eqref{abcd} the following (singular) gauge transformation
\begin{align}
\la{gauge}
\Psi(z)=c^{-\frac{\sigma_3}{2}}\left( \begin{array}{cc}
1 & \frac{\hbar}{2}\frac{c'}{c}+a \\
0 & 1
\end{array} \right) \hbar^{\frac{\sigma_3}2}  F(z)\equiv g(z)F(z).
\end{align}
which maps the matrix $R(z)$ to the matrix
\begin{equation}
g^{-1}Rg-\hbar g^{-1} \f{{\rm d} g}{{\rm d} z} = \left( \begin{array}{cc}
0 &  Q \\
1 & 0
\end{array} \right),
\end{equation}
where
\begin{equation}
Q=\f{ a^2+bc}{\hbar^2} +\f{a\log\left(\frac{c}{a}\right)'}{\hbar}+\left[\frac{3}{4}\left(\frac{c'}{c}\right)^2-\frac{1}{2}\frac{c''}{c} \right].
\end{equation}
Then, the elements $F_{11},F_{12}$ of the matrix $F$ are the two independent solutions of \eqref{eq:Schro}.
\end{proof}

\subsection{Properties of $Q_0$, $Q_1$ and $Q_2$}

Let us discuss the properties of the meromorphic functions $Q_0$, $Q_1$ and $Q_2$.
\begin{itemize}
\item The function $Q_0$
 can be written as follows:
\begin{equation}\label{eq:Q0exp}
Q_0(z) = C_0\frac {P(z)}{\prod_{j=1}^{g+2}(z-z_j)^2}=\sum_{j=1}^{g+2} \left( \f{r_j^2}{(z-z_j)^2}+\f{H_j}{z-z_j}\right)
\end{equation}
where $P(z)=\prod_{j=1}^{2g+2}(x-x_j)$ is a polynomial of degree $2g+2$ and
\be
C_0=\sum_{j=1}^{g+2} r_j^2
\la{C0}
\ee

\item Function $Q_1$:

Notice that $a\sim - r_\infty /z+\dots$ as $z\to \infty$  while $c$ behaves as $C/z^2$. Therefore, $Q_1$ behaves as $ r_\infty z^{-2}$ as $z\to\infty$. If we write
\be
Q_1(z) = \sum_{j=1}^{g} \frac {\mu_j}{z-\l_j} 
+ \sum_{j=1}^{g+2} \frac {\g_j}{z-z_j}
\la{Q1g}
\ee
for some parameters $\gamma_j\in \C$, we get the following condition on the parameters entering in $Q_1$:
\be
\sum_{j=1}^{g} \mu_j + \sum_{j=1}^{g+2} \g_j=0\;.
\ee

\item Function $Q_2$:

The function $Q_2$ in \eqref{Q2} can be written as follows:
\be
Q_2 = \frac 1 4  \le(\sum_{j=1}^{g}\frac  1{(z-\l_j)}  - \sum_{k=1}^{g+2} \frac 1{z-z_k}\ri)^2
+\frac 1 2 \le(\sum_{j=1}^{g}\frac  1{(z-\l_j)^2}  - \sum_{k=1}^{g+2} \frac 1{(z-z_k)^2}\ri)\;,
\ee
and its Laurent expansion near $\l_\ell$ looks as follows:
\be\label{eq:Q2exp}
Q_2(z)= \frac 3{4(z-\l_\ell )^2} + 
 \frac {E_\ell}{z-\l_\ell }
   +F_\ell+\dots,
\ee
with 
\be\label{Fell}
E_\ell=
\frac 1 2 \le(
  \sum_{i\atop i\neq \ell } \frac 1{\l_\ell -\l_i}
   - \sum_{k=1}^n \frac1 {\l_\ell -z_k}\ri)\;,\hskip0.7cm
F_\ell
= \frac 1 4 \le(
\sum_{i\atop i\neq \ell} \frac 1 {\l_\ell-\l_i} -\sum_{k=1}^{g+2} \frac 1 {\l_\ell-z_k}
\ri)^2\;.
\ee
Near $z=\infty$ we have
\begin{equation}
	Q_2(z)=\frac{(\sum_j\lambda_j-\sum_kz_k)^2+2\sum_j\lambda_j^2-2\sum_kz_k^2}{4z^4}+O(z^{-5}).
\end{equation}

\end{itemize}


The potential (\ref{Qz}) of the resulting Schr\"odinger equation has second order poles at the points $z_1,\dots,z_{g+2}$ with biresidues $\hbar^{-2}r_1^2-\frac{1}{4} ,\dots, \hbar^{-2} r_{g+2}^2-\frac{1}{4}$ :
\be
Q=\left(\f{r_j^2}{\hbar^2}-\f{1}{4}\right)\f{1}{(z-z_j)^2}+ O((z-z_j)^{-1})\; , \hskip0.8cm z\to z_j ,
\la{Qzj}
\ee
it has the following behavior at $z\sim\infty$
$$
Q=\left(\f{r_\infty^2}{\hbar^2}+\f{r_\infty}{\hbar}\right)\f{1}{z^2}+\dots,
$$
and second order poles at the points $\l_1,\dots,\l_{n-3}$ with biresidues $3/4$:
\be
Q= \f{3/4}{(z-\l_j)^2}+\left(\frac{\mu_j}{\hbar}+E_j\right)\frac{1}{z-\lambda_j}+\mathcal{O}(1)
\la{Qlj}
\ee
The singularities at $\l_j$ are {\it apparent}\footnote{The terminology here is the one accepted in the specific  literature but it is a misnomer. According to the classical use of the term, an "apparent" singularity in an ODE is a point of singularity of the coefficients such that all the solutions are {\it analytic} in a neighbourhood thereof. Here, on the other hand, both solutions have a branchpoint with exponents $\pm \frac 1 2$. In general this could be a resonance, but the fact that there are no logarithms in the solution is the property that is termed improperly ``apparent". } i.e. the monodromy of the fundamental matrix of equation (\ref{eq:Schro}) around $\l_j$ is $-\1$. This can be  seen by inspection of the gauge transformation \eqref{gauge} and is a consequence of the following (for a proof, see \cite{BM}):
\begin{lemma}
\label{lemmaapp}
The ODE
\be\label{eq:LemmAppar}
f''(z) = \le(\frac 3{4 z^2} + \frac A z + B + \mathcal O(z)\ri)f(z)
\ee
has two linearly independent solutions with Frobenius exponents $\pm \frac 1 2$  at $z=0$ if and only if $A^2= B$.
\end{lemma}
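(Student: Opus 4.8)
The plan is to treat $z=0$ as a regular singular point and run the Frobenius method, the essential point being that this is a \emph{resonant} (potentially logarithmic) case, so the whole content lies in deciding whether a logarithm is forced. First I would read off the indicial equation from the leading coefficient $\frac{3}{4z^2}$: substituting $f=z^\rho$ into $f''=\frac{3}{4z^2}f+\dots$ gives $\rho(\rho-1)=\frac34$, hence $\rho\in\{-\frac12,\frac32\}$. Both roots are half-odd-integers, so \emph{any} genuine Frobenius solution $z^\rho(\text{analytic})$ has monodromy $e^{2\pi i\rho}=-1$; consequently the total monodromy equals $-\mathbf 1$ (a scalar, hence diagonalizable) precisely when both exponents yield honest power-series solutions, i.e. when the second solution carries no $\log$ term. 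Since the exponents differ by the integer $\frac32-(-\frac12)=2$, this is exactly the resonant situation of Frobenius theory, and the obstruction to the absence of a logarithm sits at index $n=2$.

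Next I would make the expansion explicit. Writing the potential as $\frac{3}{4}z^{-2}+Az^{-1}+B+\sum_{m\ge1}b_mz^m$ and seeking $f=\sum_{n\ge0}c_nz^{n-1/2}$ with $c_0\ne0$, substitution into \eqref{eq:LemmAppar} gives, after using $(n-\frac12)(n-\frac32)-\frac34=n(n-2)$, the recursion
\be
n(n-2)\,c_n = A\,c_{n-1} + B\,c_{n-2} + \sum_{m\ge1} b_m\,c_{n-2-m}.
\ee
At $n=0$ this leaves $c_0$ free (the lower-exponent solution); at $n=1$ it forces $c_1=-A\,c_0$; and at $n=2$ the prefactor $n(n-2)$ vanishes identically. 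Hence $c_2$ can be chosen freely only if the right-hand side at $n=2$ also vanishes, and that right-hand side equals $A\,c_1+B\,c_0=(B-A^2)\,c_0$, because the tail $\sum_{m\ge1}b_m c_{-m}$ contributes nothing at $n=2$. I would stress this last observation as the reason the criterion involves only $A$ and $B$: the resonance occurs at $n=2$, so the $\mathcal O(z)$ part of the potential enters the recursion only from $n=3$ onward and cannot affect the compatibility condition.

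Finally I would conclude in both directions. If $A^2=B$, the obstruction $(B-A^2)c_0$ vanishes, $c_2$ becomes a second free parameter, and solving the recursion for $n\ge3$ produces two linearly independent power series; normalizing them to begin at $z^{-1/2}$ and at $z^{3/2}$ gives two honest Frobenius solutions, no logarithm appears, and the monodromy is the scalar $-\mathbf 1$. If $A^2\ne B$, the recursion is inconsistent at $n=2$, so the standard construction must insert a term proportional to $f_1\log z$ into the second solution; the resulting monodromy is $-\mathbf 1$ times a nontrivial unipotent factor (a Jordan block) rather than $-\mathbf 1$, and the singularity fails to be apparent in the required sense. This establishes the equivalence. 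The main obstacle—indeed the only delicate step—is the careful bookkeeping at the resonant index $n=2$: one must verify both that the coefficient of $c_n$ degenerates exactly there and that the surviving obstruction is $(B-A^2)c_0$, with no contamination from the higher Taylor coefficients $b_m$ of the potential.
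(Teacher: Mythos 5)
Your proof is correct, and it supplies an argument that the paper itself omits: for this lemma the authors simply cite \cite{BM} rather than giving a proof, and the computation in that reference is the same local Frobenius analysis you carry out. Your bookkeeping is accurate at every step: the indicial roots are $-\tfrac12$ and $\tfrac32$ (so the resonance is at index $n=2$), the shifted indicial factor $(n-\tfrac12)(n-\tfrac32)-\tfrac34=n(n-2)$ is right, the obstruction at $n=2$ is exactly $(B-A^2)c_0$, and you correctly observe that the $\mathcal O(z)$ tail of the potential only enters the recursion from $n=3$ on, which is why the criterion involves only $A$ and $B$. One point worth making explicit, since you partially address it by ``normalizing them to begin at $z^{-1/2}$ and at $z^{3/2}$'': the exponents ``$\pm\tfrac12$'' in the statement cannot be meant literally ($+\tfrac12$ is not an indicial root); as the paper's own footnote indicates, they refer to the branch behaviour $z^{\pm 1/2}$ modulo integer shifts, equivalently to the monodromy being the scalar $-\mathbf 1$, which is precisely the condition your argument characterizes in both directions (two honest power-series solutions when $A^2=B$, versus a forced $f_1\log z$ term and a nontrivial unipotent factor in the monodromy when $A^2\neq B$).
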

The triviality (in $\mathbb P SL_2$) of monodromies at the points $\lambda_j$ translates to the following Bethe equations for the quantities $\gamma_k$ appearing in $Q_1$:

\begin{lemma}
The monodromy of the ODE \eqref{eq:Schro} around the apparent singularities $\l_j$ is $-\1$  if and only if  the following set of equations  are satisfied for 
coefficients $\gamma_j$, $j=1,\dots,g+2$  from (\ref{Q1g}):
\be
 \sum_{k=1}^{g+2} \frac {\gamma_k} {\l_\ell-z_k}
=\sum_{i\atop i\neq \ell} \frac {\mu_\ell- \mu_i}{\l_\ell-\l_i}
-\mu_\ell \sum_{k} \frac 1{\l_\ell-z_k} 
\label{Betther}
\ee
for $\ell=1,\dots,g$.
\end{lemma}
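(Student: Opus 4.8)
The plan is to localize the potential $Q(z;\hbar)$ at an apparent singularity $z=\lambda_\ell$, cast it into the normal form of Lemma~\ref{lemmaapp}, and then unpack the no-logarithm condition $A^2=B$ order by order in $\hbar$. First I would record the three pieces of local data. Since the $\lambda_\ell$ are zeros of $c$ and not poles of $Q_0=a^2+bc$, and $c(\lambda_\ell)=0$, the leading part $Q_0$ is regular at $\lambda_\ell$ with $Q_0(\lambda_\ell)=a(\lambda_\ell)^2=\mu_\ell^2$ by \eqref{muj}. From \eqref{Q1g}, $Q_1$ has a simple pole at $\lambda_\ell$ with residue $\mu_\ell$ and regular value
\[
Q_1^{\mathrm{reg}}(\lambda_\ell)=\sum_{i\neq\ell}\frac{\mu_i}{\lambda_\ell-\lambda_i}+\sum_{k=1}^{g+2}\frac{\gamma_k}{\lambda_\ell-z_k},
\]
while $Q_2$ carries the double pole \eqref{eq:Q2exp} with coefficients $E_\ell,F_\ell$ from \eqref{Fell}. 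Assembling these into \eqref{Qz} and writing $w=z-\lambda_\ell$ yields
\[
Q=\frac{3/4}{w^{2}}+\frac{A}{w}+B+\mathcal{O}(w),\qquad A=\frac{\mu_\ell}{\hbar}+E_\ell,\quad B=\frac{\mu_\ell^{2}}{\hbar^{2}}+\frac{Q_1^{\mathrm{reg}}(\lambda_\ell)}{\hbar}+F_\ell,
\]
which is exactly the form \eqref{eq:LemmAppar}, the crucial point being that the double-pole coefficient is precisely $3/4$ (the indicial exponents $-\tfrac12,\tfrac32$ being resonant).

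Next I would invoke Lemma~\ref{lemmaapp}: the monodromy of \eqref{eq:Schro} around $\lambda_\ell$ equals $-\mathbf{1}$ — equivalently, the resonant exponents generate no logarithm — if and only if $A^{2}=B$. Substituting $A^{2}=\frac{\mu_\ell^{2}}{\hbar^{2}}+\frac{2\mu_\ell E_\ell}{\hbar}+E_\ell^{2}$ and comparing with $B$ graded by powers of $\hbar$, the $\hbar^{-2}$ coefficients agree identically, and the $\hbar^{0}$ coefficients agree because the explicit expressions \eqref{Fell} give $E_\ell^{2}=F_\ell$. This last cancellation is the statement that $Q_2$ by itself already carries an apparent singularity, reflecting that $Q_2=-\tfrac12\mathcal{S}(\int^{z}c,z)$ is the Schwarzian produced by the singular gauge factor in \eqref{gauge}. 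Hence the sole surviving constraint is the $\hbar^{-1}$ identity $2\mu_\ell E_\ell=Q_1^{\mathrm{reg}}(\lambda_\ell)$.

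Finally I would rewrite this single equation in the asserted form. Using $E_\ell=\frac12\left(\sum_{i\neq\ell}\frac{1}{\lambda_\ell-\lambda_i}-\sum_{k=1}^{g+2}\frac{1}{\lambda_\ell-z_k}\right)$ from \eqref{Fell}, the condition $2\mu_\ell E_\ell=Q_1^{\mathrm{reg}}(\lambda_\ell)$ reads
\[
\mu_\ell\sum_{i\neq\ell}\frac{1}{\lambda_\ell-\lambda_i}-\mu_\ell\sum_{k=1}^{g+2}\frac{1}{\lambda_\ell-z_k}=\sum_{i\neq\ell}\frac{\mu_i}{\lambda_\ell-\lambda_i}+\sum_{k=1}^{g+2}\frac{\gamma_k}{\lambda_\ell-z_k},
\]
and isolating the $\gamma_k$-terms produces exactly \eqref{Betther}. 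I expect the only delicate points to be purely of bookkeeping character: confirming that the double-pole coefficient is precisely $3/4$ so that Lemma~\ref{lemmaapp} applies verbatim, and checking that the $\hbar^{0}$ cancellation $E_\ell^{2}=F_\ell$ is an exact algebraic identity rather than holding only approximately. Once these two facts are secured, the equivalence is immediate and the remaining manipulations are elementary partial-fraction algebra.
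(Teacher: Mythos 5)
Your proposal is correct and takes essentially the same route as the paper: both apply Lemma \ref{lemmaapp} at $z=\lambda_\ell$ with $A=\mu_\ell/\hbar+E_\ell$ and $B=\mu_\ell^2/\hbar^2+Q_1^{reg}(\lambda_\ell)/\hbar+F_\ell$ (this is exactly the paper's equation \eqref{eq:BetheQ}), and extract the $\hbar^{-1}$ coefficient of $A^2=B$ after the $\hbar^{-2}$ and $\hbar^0$ orders cancel via $\mu_\ell^2=Q_0(\lambda_\ell)$ and $E_\ell^2=F_\ell$. The two ``delicate points'' you flag are indeed exact: the biresidue $3/4$ is recorded in \eqref{Qlj}, and $E_\ell^2=F_\ell$ holds identically by the explicit formulas \eqref{Fell}.
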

\begin{proof}
Use the behaviour \eqref{Qlj} of $Q$ at $\lambda_j$, and apply Lemma \ref{lemmaapp} to the Schr\"odinger equation \eqref{eq:Schro}:
\begin{equation}\label{eq:BetheQ}
\left(\frac{\mu_j}{\hbar}+\res{\lambda_j}Q_2 \right)^2=\frac{Q_0(\lambda_j)}{\hbar^2}+\frac{1}{\hbar}Q_1^{reg}(\lambda_j)+Q_2^{reg}(\lambda_j),
\end{equation}
where $\res{\lambda_j}Q_2 $ denotes the coefficient of $\frac{1}{z-\lambda_j} $ in the expansion of $Q_2$ around $\lambda_j$, as in equation $\eqref{eq:LemmAppar}$, and $Q_i^{reg}$ the regular part of the expansion at $\lambda_j$.  Using the explicit form of $Q_0,Q_1,Q_2$ in equations \eqref{eq:Q0exp}, \eqref{Q1g}, \eqref{eq:Q2exp}, we find
\bea
\le(\frac {\mu_\ell}{\hbar}
 +\frac 1 2 \sum_{i\atop i\neq \ell} \frac 1 {\l_\ell-\l_i}
 - \frac 1 {2}
 \sum_{k=1}^n \frac 1{\l_\ell-z_k}
\ri)^2
= \frac{Q_0(\l_\ell)}{\hbar^2} + \frac 1 \hbar \sum_{k=1}^{n} \frac {\gamma_k} {\l_\ell-z_k}
+ F_\ell
+ \frac 1 \hbar \sum_{i\atop i\neq \ell}\frac {\mu_{i}}{\l_\ell-\l_i}
\eea
On account that $\mu_\ell^2= Q_0(\l_\ell)$ and equation \eqref{Fell} we are left with the  Bethe equations for the quantities $\gamma_k$.
\end{proof}

\subsection{Canonical cover and WKB differentials}

We now introduce the WKB approximation of equation \eqref{eq:Schro}:
\be
f_{zz}- \left(Q_2+ \f{ Q_1}{\hbar} +\f{Q_0 }{\hbar^2}\right)f=0.
\la{fQ}\ee
Introduce the cover $\CC$ given by 
\begin{equation}\label{curve}
v^2=Q_0(z)dz^2.
\end{equation}
The curve $\CC$
 has $2g+2$ branch points which we denote by $x_1,\dots, x_{2g+2}$.  We denote the projection of $\CC$ to the $z$-plane by $f$, and the hyperelliptic involution on $\CC$ by $\nu$.
The homology group of the curve $\CC$, punctured at $2g+4$ points $f^{-1}(z_j)$,
can be represented as  a direct sum of even and odd components under the  involution $\nu$:
\be
H_1\left(\mathcal{C}\setminus\{f^{-1}(z_j)  \}_{j=1}^{g+2} \right)=H_+\oplus H_-, \hskip0.7cm\dim H_+= g+1\;.\hskip0.7cm  \dim H_-=3g+1.
\ee
Denote the points projecting to the poles $z_j$ by $z_j^{(1,2)}$:
$$
f^{-1}(z_j)=\{z_j^{(1)},z_j^{(2)}\}\;.
$$
The enumeration is chosen such that 
$$
\res{z_j^{(1)}} v= r_j\;,\hskip0.7cm 
\res{z_j^{(2)}} v= - r_j\;.
$$

Let small positively-oriented loops around points $\{z_j^{(1)}\}_{j=1}^{g+1}$ be denoted by $\{t_j\}_{j=1}^{g+1}$. Then generators of $H_+$ can be chosen to be $t_j^{+} = t_j+\nu_*t_j$, $j=1,\dots,g+1$.
The generators of $H_-$ can be chosen to be
\be
\{a_j,b_j\}_{j=1}^g\;, \hskip0.5cm \{t_j^{-} \}_{j=1}^{g+1},
\ee
where 
$$t_j^{-} =\f{1}{2}(t_j-\nu_* t_j)\;.$$

Introduce the divisor of degree $g$ given by
 \be
 D=\l^{(1)}_1+\dots+\l^{(1)}_g
 \ee
 where $\l^{(1)}_j=(\lambda_j,\mu_j)$ with $\lambda_j$ being the zeros of $c(z)$ and $\mu_j=a(\lambda_j)$.  Then $\lambda_j^{(2)}$ is the point having the same projection on $z$-plane but lying on another sheet of $\CC$ i.e. $\lambda_j^{(2)}=(\lambda_j,-\mu_j)$.

To study the limit $\hbar\to 0$ of equation \eqref{fQ} we introduce the asymptotic series   $s=\sum_{k=-1}^\infty \hbar^k s_k$  and
write the asymptotic series for the solution $f$ in the form
\be
f=v^{-1/2} \exp\left\{\int_{x_0}^x (\e^{-1} s_{-1}+s_0+\e s_1+\dots)v\right\},
\la{formWKB}
\ee
where $s_k$ are meromorphic functions on $\CC$  and $x_0$ is a basepoint.
We introduce also the meromorphic differentials 
\be
v_k=\f{1}{2}(s_k+\nu^* s_k) v\;,
\la{defvk}
\ee
The differential $v_k$ satisfies
\be
\nu^* v_k= -v_k.
\la{Vrel}
\ee

As a corollary of (\ref{fQ}) the  function $s$ satisfies the Riccati equation which in coordinate-independent form can be written as follows:
\be
  {\rm d} s+ v s^2 =- q v +\f{ Q_1 v }{\hbar}+ \f{v}{\hbar^2}
  \la{Ric}
\ee
where
$q=-Q_2 -\f{\Scal_v}{2}$ and $\Scal_v=\Scal(\int^z v, \cdot)$. Equivalently,  since 
$Q_2=-\f{1}{2} {\mathcal S}\left(\int^z c(x) \;{\rm d} x,\;\cdot\right)$, we can represent the meromorphic function $q$ on $\CC$ in the coordinate-independent form: 
\be
q=\f{1}{2}\left({\mathcal S}(\int^z c(x) \;{\rm d} x,\cdot )  - \Scal(\int^z v, \cdot)  \right)
=\f{1}{2}{\mathcal S}\left(\int^z c(x) \;{\rm d} x,\,\int^z v  \right)
\ee

Equivalently,  (\ref{Ric}) can be written as
$$
{\rm d}\left(\sum_{k=-1}^\infty \e^k s_k\right)+v\left(\sum_{k=-1}^\infty \e^k s_k\right)^2=- q v +\f{ Q_1 v }{\hbar}+ \f{v}{\hbar^2}
$$
The coefficients of  $\e^{-2}$, $\e^{-1}$ and $\e^0$ give
\be
s_{-1}=\pm  1 \;, \hskip0.7cm s_0=\frac{Q_1}{2 Q_0}\;, \hskip0.7cm
 s_1= - \f{s_0^2}{2}-\f{q}{2} -\f{{\rm d} s_0}{2v}
\la{sss}\ee
so that $v_{-1}=\pm v$. We shall choose the $"+"$ sign.

The higher functions $s_k$, $k\geq 1$ can be found recursively from relations
\be
{\rm d} s_k+ v\sum_{j+l=k\atop j,l\geq -1}  s_j s_l =0, \ \ \ k\geq 1\;,
\la{recur}
\ee
which imply
$$
 s_{k+1}=-\f{1 }{2s_{-1}}\left(\f{{\rm d} s_k}{ v}+\sum_{j+l=k,\atop j,l\geq 0} s_j s_l\right)\;. \ \ \ k\geq 1\;.
$$

The first three differentials in the WKB expansion are obtained from (\ref{sss}) taking into account that the only term in 
(\ref{sss}) which is not skew-symmetric under $\nu^*$ is $-{\rm d} s_0/2v$:
\be
\label{eq:WKBdiffs1}
v^2=Q_0(z) ({\rm d}z)^2\; ,\hskip0.7cm 
 v_0=\frac{1}{2v}Q_1(z) ({\rm d} z)^2\; ,
\ee

\be
\label{eq:WKBdiffs2}
v_1  =-\frac{v_0^2}{2v}+\frac{1}{2v}Q_2(z)({\rm d} z)^2+\frac{1}{4v} {\mathcal S} \left(\int^zv,z \right)({\rm d} z)^2.
\ee

\subsection{Properties of WKB differentials}\label{sec:WKBDiffs}

Here are the properties of the WKB differentials \eqref{eq:WKBdiffs1},\eqref{eq:WKBdiffs2}:

\begin{itemize}
\item
The differential of the third kind on $\CC$ given by
\be
v=\mu d z
\la{vdef}
\ee
has first order poles at points $z_j^{(1,2)}$ with the residues $\pm r_j$, and at $\infty^{(1,2)}$ with residues $\pm r_\infty$. The latter statement follows from the fact that near $\infty^{(1,2)}$ we have
\begin{equation}\label{eq:vinfty}
v\sim a(z){\rm d} z=\mp r_\infty \frac{dz}{z},
\end{equation}
\begin{equation}
\res{\infty^{(1,2)}} \;v=\pm r_\infty\;.
\end{equation}
\item
The differential of the third kind 
\be
v_0=\frac{Q_1}{2v}(d z)^2
\ee
is  holomorphic at branch points of $\CC$ (at the branch points, $v$ has second order zero, while the differential ${\rm d} z$ has a first order zero). Let us now discuss the behavior of $v_0$ near $\lambda_j^{(1,2)}$.  Near $\lambda_j^{(1)}$ we have $v\sim a(\lambda_j)d z$ and $Q_1\sim\frac{\mu_j}{z-\lambda_j}$. Therefore, near $\lambda_j^{(1,2)}$ one has the asymptotics
$$
v_0\sim {\pm}\frac{{\rm d} z}{2(\lambda-\lambda_j)}\;.
$$
Moreover, near $\infty^{(1,2)}$ we have 
$Q_1\sim\f{r_\infty}{z}$; thus, using  \eqref{eq:vinfty}, $v_0$ has simple poles at $\infty^{(1,2)}$ with residues ${\pm} 1/2$, respectively.
\item
We can write the differential $v_1$ (which turns out to be of second kind) as
\be
v_1= -\frac {v_0^2}{2v} + \f{Q_2}{2v}({\rm d} z)^2+ \f{1}{4v} {\mathcal S} \le(\int v, z\ri) ({\rm d} z)^2,
\la{defv1}
\ee
or more more symmetrically as 
\be
v_1= -\frac {v_0^2}{2v}+ \f{ ({\rm d} z)^2  }{4v}\left( {\mathcal S} (\int v, z) - {\mathcal S} \le(c , z\ri) \right)\;.
\la{v11}
\ee
Using the composition property of the Schwarzian derivative, it can also be written as follows:
\be
v_1= -\frac {v_0^2}{2v}-\f{v}{4} {\mathcal S} (c,\xi)
\la{v111}
\ee
where $\xi=\int^z v$ is the flat coordinate defined by differential $v$.

To prove (\ref{v111}) we used the composition rule for $c(\xi(z))$:
\begin{equation}
{\mathcal S} (c,z) ({\rm d} z)^2= {\mathcal S} (c,\xi)({\rm d}\xi)^2 + {\mathcal S} (\xi,z)({\rm d} z)^2\;.
\end{equation}

The differential $v_1$ has second order poles at branch points and second order poles  at $\infty^{(1,2)}$ with residues
\begin{equation}
    \res{\infty^{(1,2)}}{v_1}=\mp\frac{1}{4r_\infty}\;.
\end{equation}

\begin{lemma}
The conditions (\ref{Betther}) of triviality of monodromies around $\lambda_j$ are equivalent to equations
\be
{\rm res}|_{\l_j^{(1,2)}} v_1 =0\;.
\la{Beth1}
\ee

\end{lemma}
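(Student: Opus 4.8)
The plan is to compute the residue of $v_1$ at the apparent singularities $\l_j^{(1,2)}$ directly from its explicit expression \eqref{defv1} and show that the vanishing condition $\res{\l_j^{(1,2)}}v_1=0$ reproduces exactly the Bethe equations \eqref{Betther}. First I would localize the analysis near one point $\l_\ell^{(1)}$ (the computation at $\l_\ell^{(2)}$ being obtained by applying the involution $\nu$, under which $v_1$ is odd by a relation analogous to \eqref{Vrel}). Since $\l_\ell$ is \emph{not} a branch point of $\CC$, the local coordinate can be taken to be $z-\l_\ell$ itself, and I would expand each of the three constituent pieces of $v_1$ as a Laurent series in $(z-\l_\ell)$.

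The key inputs are the already-established local behaviors near $\l_\ell^{(1)}$: from the discussion of $v_0$, one has $v_0\sim \tfrac{1}{2}\tfrac{{\rm d}z}{z-\l_\ell}$, and $v\sim \mu_\ell\,{\rm d}z$ with $\mu_\ell=a(\l_\ell)$; from \eqref{eq:Q2exp}, $Q_2$ has the double pole $\tfrac{3}{4(z-\l_\ell)^2}$ with subleading residue $E_\ell$ from \eqref{Fell}; and from \eqref{Q1g}, $Q_1$ has the simple pole $\tfrac{\mu_\ell}{z-\l_\ell}$. The plan is then to assemble the Laurent expansion of $v_1=-\tfrac{v_0^2}{2v}+\tfrac{Q_2}{2v}({\rm d}z)^2+\tfrac{1}{4v}\Scal(\int v,z)({\rm d}z)^2$ through order $(z-\l_\ell)^{-1}$. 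The leading double-pole terms of $-v_0^2/2v$ and $Q_2({\rm d}z)^2/2v$ must cancel (reflecting the Frobenius exponent $\tfrac12$ of the apparent singularity, i.e. the fact that $v_1$ is genuinely of second kind there with no spurious higher-order pole), and the Schwarzian term $\Scal(\int v,z)$ contributes only regularly since $v$ has neither a zero nor a pole at $\l_\ell$. Extracting the coefficient of $(z-\l_\ell)^{-1}$ then yields a single scalar equation.

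Concretely, I would write $v_0 = \bigl(\tfrac{1}{2(z-\l_\ell)}+\beta_\ell+\dots\bigr){\rm d}z$ and $v=\bigl(\mu_\ell+\mu_\ell'(z-\l_\ell)+\dots\bigr){\rm d}z$, compute $v_0^2/(2v)$ and $Q_2({\rm d}z)^2/(2v)$ to the required order, and collect. The subleading coefficient $\beta_\ell$ of $v_0$ is determined by $Q_1$ and $Q_0$ through $v_0=\tfrac{Q_1}{2v}({\rm d}z)^2$, so $\beta_\ell$ encodes precisely the combination $\sum_k \tfrac{\gamma_k}{\l_\ell-z_k}$ together with the apparent-singularity data. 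After simplification I expect the residue condition to collapse — using $\mu_\ell^2=Q_0(\l_\ell)$ — to the same relation derived in the earlier Bethe lemma via Lemma \ref{lemmaapp}, since both conditions are equivalent restatements of the $A^2=B$ criterion governing the absence of a logarithm at the resonant exponent.

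The main obstacle I anticipate is bookkeeping the cross-terms at order $(z-\l_\ell)^{-1}$: the product $v_0^2/(2v)$ mixes the residue and the subleading coefficient of $v_0$ with the Taylor expansion of $1/v$, and the double poles from $v_0^2/2v$ and $Q_2/2v$ must be tracked carefully so that their cancellation is exact and the correct single-pole remainder survives. Rather than grinding this out longhand, the cleanest route is to observe that $v_1$ is built from the Riccati coefficient $s_1$ in \eqref{sss}, and that the residue of $v_1$ at $\l_\ell^{(1)}$ is controlled by the same local Frobenius analysis as \eqref{eq:BetheQ}; matching the $\hbar^{0}$-order term of that equation directly identifies $\res{\l_\ell^{(1,2)}}v_1$ with the vanishing of the right-hand side of \eqref{Betther} minus its left-hand side, giving the asserted equivalence.
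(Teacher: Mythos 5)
Your overall strategy---Laurent-expand the three constituents of $v_1$ at $\l_\ell^{(1,2)}$, discard the Schwarzian term as regular there, extract the coefficient of $(z-\l_\ell)^{-1}$, and identify the resulting condition with the appropriate order of \eqref{eq:BetheQ}---is exactly the paper's proof. However, your key intermediate assertion is false: the double poles of $-v_0^2/2v$ and $Q_2({\rm d}z)^2/(2v)$ do \emph{not} cancel. Near $\l_j^{(1)}$ one has $-\frac{v_0^2}{2v}\sim -\frac{{\rm d}z}{8\mu_j(z-\l_j)^2}$ while $\frac{Q_2({\rm d}z)^2}{2v}\sim \frac{3\,{\rm d}z}{8\mu_j(z-\l_j)^2}$, so a double pole $\frac{{\rm d}z}{4\mu_j(z-\l_j)^2}$ survives. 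This is precisely what the paper's proof displays, and it is consistent with the table at the end of Section 2.3, which lists $v_1$ as having poles of order $2$ at $\l_j^{(1,2)}$ with residue $0$: being of the second kind requires vanishing residues, not absence of double poles (the Frobenius exponent $\tfrac12$ enforces only the residue condition). The error is not cosmetic, because it corrupts the bookkeeping you yourself flag as the delicate step: the surviving double-pole coefficients, multiplied against the linear Taylor term of $1/(2\mu(z))$, feed $Q_0'(\l_j)$-dependent cross-terms into the residue, and these cancel against the $\mu_j'$-part of the subleading coefficient of $v_0$ only when all three contributions are kept. If you assume your cancellation and therefore drop those cross-terms, a spurious term proportional to $Q_0'(\l_j)/\mu_j^3$ survives and the residue condition comes out wrong. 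Tracked correctly, one finds $\res{\l_j^{(1)}}\, v_1$ proportional to $2\mu_j E_j - Q_1^{reg}(\l_j)$, whose vanishing is \eqref{Betther}.

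Your fallback shortcut in the last paragraph also misidentifies the relevant order: the residue of $v_1$ matches the $O(\hbar^{-1})$ equation of \eqref{eq:BetheQ}, namely $2\mu_j E_j = Q_1^{reg}(\l_j)$, which is exactly how the paper closes its proof. The $O(\hbar^{0})$ equation of \eqref{eq:BetheQ} reads $E_j^2 = Q_2^{reg}(\l_j)=F_j$ and is automatically satisfied by \eqref{Fell}, so matching at order $\hbar^0$ as you propose yields a tautology with no Bethe content. With these two corrections---retaining the double pole of $v_1$ (and its cross-terms) and matching at order $\hbar^{-1}$---your argument coincides with the paper's.
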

\begin{proof}
We use the expression
\begin{equation}
v_1=\frac{1}{2v}\left[\frac{1}{2}\mathcal{S}\left(\int v,z \right)+Q_2({\rm d}z)^2-v_0^2 \right].
\end{equation}
The term involving the Schwarzian derivative can be ignored since it   is regular at $\lambda_j$.  Using the expansions at $z=\lambda_j$
\begin{equation}
Q_0(z)=\mu_j^2+(z-\lambda_j)Q_0'(\lambda_j)+\mathcal{O}(z-\lambda_j)^2,
\end{equation}
\begin{equation}
Q_1(z)^2=\frac{\mu_j^2}{(z-\lambda_j)^2}+\frac{2\mu_j}{z-\lambda_j}Q_1^{reg}(\lambda_j)+\mathcal{O}(1),
\end{equation}
\begin{equation}
Q_2(z)=\frac{3}{4}\frac{1}{(z-\lambda_j)^2}+\frac{E_j}{z-\lambda_j}+\mathcal{O}(1).
\end{equation}
and expressing $v$ and $v_0$ in terms of $Q_0,Q_1$ we have
\begin{equation}
\frac{1}{2v}\left[Q_2(dz)^2-v_0^2 \right]=\pm\frac{1}{4\mu_j}\left[\frac{1}{(z-\lambda_j)^2}+\frac{1}{(z-\lambda_j)}(2\mu_j E_j -Q_1^{reg}(\lambda_j))+\mathcal{O}(1) \right].
\end{equation}
The vanishing of the residue is equivalent to the $O(\hbar^{-1})$ equation of \eqref{eq:BetheQ}.
\end{proof}
\end{itemize}

Let us summarize in a table all the properties of the WKB differential discussed in this section:
\begin{center}
\begin{tabular}{ c c c c }
Differential    & Pole                & Order of the pole   & Residue \\
\hline  
 $v=\mu dz$     & $z_j^{(1,2)}$       & $1$                 & $\pm r_j$ \\
                & $\infty^{(1,2)}$    & $1$                 & $\pm r_\infty$\\
\hline
$v_0=\frac{Q_1}
{2v}(dz)^2 $    & $z_j^{(1,2)}$       & $1$                 & $\pm\frac{1}{2}$ \\
                & $\lambda_j^{(1,2)}$ & $1$                 & $\pm\frac{1}{2}$ \\
\hline
$v_1=-\frac{v_0^2}{2v}-
\frac{v}{4}\mathcal{S}
(c,\xi)$        & $z_j^{(1,2)}$       & $2$                 & $\mp\frac{1}{4r_\infty} $\\
                & $\infty^{(1,2)}$    & $2$                 & $\mp\frac{1}{4r_k}$ \\
                & $\lambda_j^{(1,2)}$ & $2$                 & $0$ \\
                & $x_j$               & $2$                 & 
\end{tabular}
\end{center}

\subsection{WKB expansion of complex shear coordinates}

Let us assume that $Q_0$ is a Gaiotto-Moore-Nietzke (GMN) differential (i.e. it has no horizontal trajectories connecting two zeros \cite{GMN}). Then  the cover  \eqref{curve} (also called the "WKB curve") admits a canonical triangulation $\Sigma$, constructed as follows. Due to the genericity assumption on $Q_0$, the horizontal trajectories always start at a zero $x_j$ and end at a pole $z_k$: let us call by $\Gamma$ the critical graph whose edges are such trajectories. The WKB triangulation $\Sigma$ is defined as the graph having the poles $z_k$'s as vertices, with every face containing exactly one zero of $Q$. The dual graph to $\Sigma$, whose vertices are instead the zeros $x_j$, will be denoted by $\Sigma^*$. This construction is shown in Figure \ref{FigTrian}.

To each edge $e$ of the graph $\Sigma$ one assigns a coordinate $\rho_e\in\mathbb{C}$ whose exponential is a complex shear coordinate, the simplest example of a Fock-Goncharov coordinate (see Appendix A.2 of \cite{BK_TMF} for more details).

 To every edge $e$ of the triangulation it is possible to associate a cycle $\ell_e\in H_-$, defined to be the loop that goes clockwise around the edge $e^*$ of $\Sigma^*$; to the cycle $e$ one assigns the Fock-Goncharov coordinate $\rho_e$
$$
\{\rho_e,\rho_{e'}\}=e\circ e'
$$

Because the Poisson bracket is constant, it is possible to choose linear combinations of the $\zeta_e$'s that are Darboux conjugate. These  
 coordinates extend by linearity  to the $a_j$ and $b_j$ cycles, that we denote by $\rho_{a_j},\rho_{b_j}$,
 and their Goldman bracket is
\begin{equation}
\left\{\rho_{a_j},\rho_{b_k} \right\}_G=\frac{\delta_{jk}}{2}\; .
\end{equation}

The following proposition is an analog of  Prop. 5.2  in \cite{BK_TMF}. 
It shows  that the homological shear coordinates $\rho_\ell$ (for $\ell\in H^-(\mathcal C)$) admit an  asymptotic expansion in terms of periods of $S_{odd}$.
\begin{proposition}
The formal asymptotic expansion of the  homological shear coordinate $\rho_{\ell}$
for each $\ell\in H_-$ looks as follows:
\be
\rho_{\ell}\sim\f{1}{\hbar} \int_\ell v+ \int_\ell v_0 +\hbar\int_\ell v_1+\dots\ \ .
\la{rhoexp}
\ee
where $v_k$ are given by \eqref{defvk}.
%
The relation (\ref{rhoexp}) is understood in $PSL(2)$ sense i.e. up to an addition of   $
\pi i k$ for $k\in\Z$.%
\end{proposition}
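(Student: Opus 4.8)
The plan is to follow the strategy of Prop.~5.2 of \cite{BK_TMF}, adapting it to the presence of the apparent singularities $\l_j$. The statement to establish is that the (logarithmic) complex shear coordinate $\rho_\ell$, which is defined non-perturbatively from the connection data of \eqref{eq:Schro} across the triangulation $\Sigma$, coincides order-by-order in $\hbar$ with the period $\int_\ell S_{odd}$ of the formal odd WKB one-form $S_{odd}=\sum_{k\ge -1}\hbar^k v_k$. The heart of the argument is the standard fact, going back to \cite{Voros,GMN,AlBrid}, that the shear (Fock--Goncharov) coordinate attached to an edge $e$ exponentiates to a Voros symbol, i.e.\ to the period of the WKB one-form along the cycle $\ell_e\in H_-$ that encircles the dual edge $e^*$ of $\Sigma^*$ (Figure \ref{FigTrian}).

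First I would recall the normalized WKB solutions of \eqref{eq:Schro}, which by \eqref{formWKB} take the form $f^{\pm}=v^{-1/2}\exp\{\int_{x_0}^x Sv\}$ with $Sv=\sum_{k\ge -1}\hbar^k s_k v$ and the two signs of $s_{-1}=\pm1$ corresponding to the two sheets of the cover $\CC$ of \eqref{curve}. These sections abelianize the flat connection of \eqref{int1}: on $\CC$ the connection becomes diagonal and its parallel transport along a cycle $\ell_e$ is $\exp\{\int_{\ell_e}Sv\}$. Because the shear coordinate $\rho_e$ attached to $e$ is, by the identification of Fock--Goncharov coordinates with Voros symbols, precisely this abelianized monodromy around $\ell_e$, and because $\nu_*\ell_e=-\ell_e$ forces the $\nu$-even part of each $s_k v$ to integrate to zero over $\ell_e$, one obtains $\int_{\ell_e}s_k v=\int_{\ell_e}v_k$ with $v_k$ the odd part \eqref{defvk}, using $\nu^* v_k=-v_k$ from \eqref{Vrel}. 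This reproduces \eqref{rhoexp} for edge-cycles, and extension by linearity to the basis $\{a_j,b_j,t_j^-\}$ of $H_-$ yields it for every $\ell$.

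To organize the expansion itself I would feed the ansatz \eqref{formWKB} into the Riccati recursion \eqref{recur}--\eqref{sss}: the leading coefficient $s_{-1}=+1$ reproduces the classical WKB period $\tfrac1\hbar\int_\ell v$, while the successive $s_k$, $k\ge 0$, are determined recursively and contribute $\int_\ell v_k$ at order $\hbar^{k}$. The $PSL(2)$ ambiguity of $\pi i k$ in \eqref{rhoexp} is accounted for by the square-root prefactor $v^{-1/2}$ and the branch of the logarithm used to define $\rho_\ell$, which are fixed only up to signs and hence shift the coordinate by integer multiples of $\pi i$.

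The main obstacle, and the point where the argument genuinely departs from \cite{BK_TMF}, is the treatment of the apparent singularities $\l_j$. There the one-forms $v_0$ and $v_1$ are singular, so one must check that the cycles $\ell_e$ can be represented by contours avoiding the points $\l_j^{(1,2)}$ and that the resulting periods are unambiguous. This is precisely where the residue computations of Section \ref{sec:WKBDiffs} enter: $v_0$ has residues $\pm\tfrac12$ at $\l_j^{(1,2)}$, so moving a contour across such a point changes the period only by $\pi i$ and is absorbed into the $PSL(2)$ ambiguity, while the Lemma yielding \eqref{Beth1} guarantees $\res{\l_j^{(1,2)}}v_1=0$, so that at order $\hbar$ the period is genuinely well-defined. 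One must also verify that the connection (Stokes) relations across the critical graph $\Gamma$ are unaffected by the $-\1$ monodromy at the $\l_j$; this holds because such monodromy is trivial in $PSL(2)$ and the shear coordinates are $PSL(2)$ objects. With these points checked, the order-by-order matching proceeds exactly as in the reference.
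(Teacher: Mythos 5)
Your proposal follows the same route as the paper: reduce to Prop.~5.2 of \cite{BK_TMF} via abelianization of the connection on $\CC$ (identification of shear coordinates with Voros symbols), organize the expansion through the Riccati recursion \eqref{sss}--\eqref{recur}, project onto the odd part \eqref{defvk} using $\nu_*\ell=-\ell$ for $\ell\in H_-$ (which also handles the new even powers of $\hbar$ coming from $Q_1\neq 0$), and then isolate the genuinely new difficulty at the apparent singularities. Your residue bookkeeping there is correct and consistent with Section \ref{sec:WKBDiffs}: the $\pm\frac12$ residues of $v_0$ at $\lambda_j^{(1,2)}$ shift periods by $\pi i$, absorbed into the $PSL(2)$ ambiguity, and ${\rm res}_{\lambda_j^{(1,2)}}v_1=0$ by \eqref{Beth1}. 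But note that this only establishes well-definedness of the \emph{formal} periods on the right-hand side of \eqref{rhoexp}; it does not address whether the analytic shear coordinates actually admit that expansion near $\lambda_j$.

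The genuine gap is at precisely that step. You argue that the Stokes (connection) relations across $\Gamma$ are unaffected ``because the $-\1$ monodromy at $\lambda_j$ is trivial in $PSL(2)$.'' Trivial projective monodromy of the exact equation does \emph{not} rule out a Stokes phenomenon of the asymptotic series: a double zero of $Q_0$ is a point where all coefficients of \eqref{eq:Schro} are analytic, so the monodromy is literally the identity, and yet the Weber local model there produces genuine Stokes curves and nontrivial connection matrices. Since in the exact-WKB picture $\lambda_j$ behaves like a double turning point (double pole of $Q_2$ with biresidue $3/4$, simple pole of $Q_1$), one must rule out extra Stokes curves emanating from $\lambda_j$ before asserting that the analysis of \cite{BK_TMF} goes through. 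The paper closes this with a local computation absent from your argument: after a Borel-resummable, $\hbar$-dependent conformal change of coordinate $z\to\xi(z,\hbar)$, the local model near $\lambda_j$ is
\be
-\frac{{\rm d}^2 f}{{\rm d}\xi^2}+\left(\frac{4\xi^2}{\hbar^2}-\frac{3}{4\xi^2}\right)f=0,
\ee
whose general solution $A\,\xi^{-1/2}{\rm e}^{\xi^2/\hbar}+B\,\xi^{-1/2}{\rm e}^{-\xi^2/\hbar}$ is elementary, so no Stokes phenomenon occurs at this formal double turning point (cf.\ \cite{kawai2005algebraic}, Thm.~4.4, where the opers for Painlev\'e $I$--$VI$ are treated). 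It is the special biresidue $3/4$ together with the apparent-singularity (Bethe) condition \eqref{Betther} that trivializes the local connection problem; your $PSL(2)$-triviality observation is necessary but not sufficient, and as written your proof would fail at exactly this point.
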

{\it Sketch of the proof.} The proof is parallel to the proof of Prop. 5.2 of \cite{BK_TMF}. The difference is the presence of the term $Q_1$ in this paper which was absent in \cite{BK_TMF}; moreover, $Q_1$ has additional singularities at $\lambda_j$
which are apparent singularities of the equation (\ref{eq:Schro}) were absent in \cite{BK_TMF} and also in \cite{Alleg}. In the frameworks of  \cite{BK_TMF} and \cite{Alleg} the differential $V$ contains only odd powers of $\hbar$ (since in these papers it is assumed that $Q_1=0$) while in our present context we also have all even powers.


The presence of apparent singularities at $\lambda_j$ does not modify the asymptotics (\ref{rhoexp}) by the following reason:
the issue of apparent singularities at the poles of $Q_1$   is completely analogous to the  case of the Lax pair for the sixth Painlev\'e\ equation (i.e. our case with $g=1$); we refer to \cite{kawai2005algebraic}, Theorem 4.4. The reason why there are no different Stokes' regions in the WKB analysis near a $\l_j$ is that, up to a (Borel resummable $\hbar$--dependent) conformal change of coordinate $z\to \xi(z,\hbar)$, the local model of the equation in a neighbourhood of $z=\l_j$ is 
\be
-\frac {{\rm d}^2}{{\rm d}\xi^2} f +\le( \frac {4\xi^2}{\hbar^2} - \frac 3{4\xi^2} \ri)f =0\;.
\ee

This equation, while formally displaying a double turning point at $\xi=0$ (corresponding to $z=\l_j$), does not, in fact, exhibit any Stokes' phenomenon since its general solution is explicitly written as
\be
f(\xi) = \frac A{\sqrt \xi}{\rm e}^{\frac 1 \hbar \xi^2 } +\frac B{\sqrt \xi}{\rm e}^{-\frac 1 \hbar \xi^2 }.
\ee
This implies that the analysis of \cite{BK_TMF} goes through without further modifications. In the case of the opers relevant for all the Painlev\'e\ equations $I$--$VI$ this was discussed in detail in Chapter 4 of \cite{kawai2005algebraic} (and references therein). 

\QED

%


\section{ WKB expansion of the generating function }
\label{sec3}
The generating function $\Gcal$ defined by (\ref{defG}) has the following formal expansion in powers of $\hbar$:
\be
\Gcal=\f{\Gcal_{-2}}{\hbar^2}+ \f{\Gcal_{-1}}{\hbar}+ \Gcal_0+\dots\;.
\la{Ghbar1}
\ee 
In this section we compute the first three coefficients, $\Gcal_{-2}$, $\Gcal_{-1}$ and $\Gcal_0$. Let us introduce the following notation: for any two 1-forms $w$ and $\tilde{w}$ we consider  their periods
$(A_j,B_j)$ and $(\tilde A_j, \tilde B_j)$ and introduce the pairing 
\be\label{eq:RBPairing}
\langle w, \tilde{w} \rangle=\sum_{j=1}^g A_j \tilde{B}_j- B_j\tilde{A}_j
\ee

The expansion of  the symplectic potential $\theta_G$  from equation \eqref{eq:thetaG} can be written as follows using (\ref{rhoexp})  and the pairing \eqref{eq:RBPairing}:
\be
\theta_G=\f{\theta_{G}^{(-2)}}{\hbar^2}+ \f{\theta_G^{(-1)}}{\hbar}+ {\theta_{G}^{(0)}}+ O(\hbar)
\la{thetaGex}
\ee
where
\be
\theta_G^{(-2)}=\f{1}{2}\langle v,\delta v\rangle
\la{tG2}
\ee
\be
\theta_G^{(-1)}=\langle v_0,\delta v\rangle-\f{1}{2}\delta \langle v_0, v\rangle
\la{tG1}
\ee
\be
\theta_G^{(0)}=\f{1}{2}\langle v_0,\delta v_0\rangle+ \langle v_1, \delta v\rangle -\frac{1}{2}\delta \langle v_1, v\rangle.  
\la{tG0}
\ee

\subsection{Formula for $\mathcal{G}_{-2}$}

Chose a  set of generators of $H_1(\CC)$ which we denote by $\{a_j,b_j\}_{j=1}^g,\;\{t_j\}_{j=1}^{g-1}$.
Introduce the $a$ and $b$-periods of the differential $v$:  
$$A_j=\int_{a_j} v \;,\hskip0.7cm B_j=\int_{b_j} v$$
\begin{theorem}\label{thm:Gm2}
The equation \eqref{tG2} for $\mathcal{G}_{-2}$  can be written as
\be
\d \Gcal_{-2}= \f{1}{2}\sum_{j=1}^g  (A_k \d B_k -B_k\d  A_k)
\la{eqG2}
\ee
and its solution is given by
\be
\Gcal_{-2}=-\pi i \left(r_\infty {\rm reg}\int_{\infty^{(2)}}^{\infty^{(1)}} v +\sum_{j=1}^g r_j {\rm reg}\int_{z_j^{(2)}}^{z_j^{(1)}} v\right)
\la{regG2}
\ee
where
$$
{\rm reg}\int_{z_j^{(2)}}^{z_j^{(1)}} v=\lim_{\epsilon\to 0}\left(\int_{z_j^{(2)}+\epsilon}^{z_j^{(1)} +\epsilon} v -2r_j\log\epsilon\right)
$$
and
$$
{\rm reg}\int_{\infty^{(2)}}^{\infty^{(1)}} v=\lim_{R\to \infty}\left(\int_{R^{(2)}}^{R^{(1)}}  v +2r_\infty \log R\right)
$$
\end{theorem}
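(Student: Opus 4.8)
The first assertion, equation \eqref{eqG2}, is just the unravelling of the definition. By \eqref{tG2} we have $\delta\Gcal_{-2}=\theta_G^{(-2)}=\tfrac12\langle v,\delta v\rangle$, and since the variation $\delta$ is performed with the homology basis $\{a_j,b_j\}$ held fixed, the $a$- and $b$-periods of $\delta v$ are exactly $\delta A_j$ and $\delta B_j$. Substituting $w=v$, $\tilde w=\delta v$ into the pairing \eqref{eq:RBPairing} gives $\langle v,\delta v\rangle=\sum_{j=1}^g(A_j\,\delta B_j-B_j\,\delta A_j)$, which is \eqref{eqG2}. It then remains to exhibit a closed-form antiderivative of this $1$-form, and I would do this by direct verification: differentiate the right-hand side of \eqref{regG2} and check that it reproduces $\tfrac12\sum_j(A_j\,\delta B_j-B_j\,\delta A_j)$.

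The key preliminary observation is that $\delta v$ is a \emph{holomorphic} differential on $\mathcal{C}$, odd under the hyperelliptic involution $\nu$. Indeed, since $\delta$ fixes the positions $z_j$ and all residues $r_j,r_\infty$, equation $v^2=Q_0(dz)^2$ together with \eqref{eq:Q0exp} gives $\delta v=\frac{\delta Q_0}{2\mu}\,dz$ with $\delta Q_0=\sum_j\frac{\delta H_j}{z-z_j}$ (the double-pole coefficients $r_j^2$ do not vary). A local check then shows $\delta v$ is regular at the branch points, at the $z_j^{(1,2)}$, and at $\infty^{(1,2)}$; being anti-invariant under $\nu$, it lies in the $g$-dimensional space of holomorphic differentials of the form $z^k\,dz/\mu$. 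Consequently each endpoint integral in \eqref{regG2} varies simply: because the subtracted logarithmic counterterms $-2r_j\log\epsilon$ and $+2r_\infty\log R$ involve only the fixed residues, and because $\delta v$ is integrable at each puncture, one has $\delta\,\mathrm{reg}\!\int_{p^{(2)}}^{p^{(1)}}v=\int_{p^{(2)}}^{p^{(1)}}\delta v$ for every puncture $p$.

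Next I would apply Riemann's bilinear relations to the pair consisting of the third-kind differential $v$ (simple poles at $p^{(1,2)}$ with residues $\pm r_p$) and the holomorphic differential $\delta v$. Evaluating $\oint_{\partial\tilde{\mathcal{C}}}\bigl(\int^z v\bigr)\,\delta v$ in two ways — through the period pairing on the one hand and by collecting the residue contributions at the punctures on the other — converts the residue-weighted sum of endpoint integrals $\sum_p r_p\int_{p^{(2)}}^{p^{(1)}}\delta v$ into the period combination $\frac{1}{2\pi i}\sum_{j=1}^g(A_j\,\delta B_j-B_j\,\delta A_j)$; the basepoint drops out because the two residues in each pair cancel. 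Combined with the previous paragraph this yields $\delta\Gcal_{-2}=\tfrac12\sum_j(A_j\,\delta B_j-B_j\,\delta A_j)$, establishing \eqref{regG2} up to an additive constant on $\Acal^{\mathbf{r}}$.

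The main obstacle is bookkeeping rather than conceptual. The bilinear relation naturally produces a sum over all $g+3$ punctures $z_1,\dots,z_{g+2},\infty$, whereas \eqref{regG2} retains only $z_1,\dots,z_g$ and $\infty$ and carries a definite overall sign. Reconciling the two requires the two linear constraints that admissible variations obey, namely $\delta\!\sum_j H_j=0$ and $\delta C_0=0$ — equivalently $\sum_j\delta H_j=0$ and $\sum_j z_j\,\delta H_j=0$, expressing fixedness of $r_\infty$ and of the coefficient $C_0=\sum r_j^2$. These cut the space of $\delta v$ down to dimension $g$ and thereby render the regularized integrals at the two omitted punctures $z_{g+1},z_{g+2}$ dependent, modulo $\delta$-constants, on the retained ones. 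Pinning down the resulting overall sign and the role of the $\infty$-counterterm, which enters with the opposite sign from the finite punctures, is the delicate point: the orientation of $\partial\tilde{\mathcal{C}}$ and the ordering in \eqref{eq:RBPairing} must be fixed consistently to land on exactly the prefactor $-\pi i$ in \eqref{regG2}. Everything else is residue calculus.
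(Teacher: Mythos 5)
Your argument up to the last paragraph is sound and is, in essence, the paper's own route: the paper substitutes the decomposition \eqref{vwv} of $v$ into normalized third-kind differentials plus $\sum_j A_j u_j$ and uses $\oint_{b_k}w_{x,y}=2\pi i\int_x^y u_k$ to obtain \eqref{eq:BcycGm2}, which is exactly the reciprocity law you invoke directly for the pair $(v,\delta v)$; your observation that $\delta v$ is an odd holomorphic differential is the same as the paper's $\partial_{A_j}v=u_j$ in the coordinates $(A_j,\lambda_j)$, and your remark that the counterterms are $\delta$-constant, so that $\delta\,{\rm reg}\!\int v=\int\delta v$, is also how the paper passes from periods of $u_j$ to the regularized integrals of $v$.

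The final paragraph, however, contains a genuine error. The discrepancy you set out to explain --- the bilinear relation producing contributions from all $g+2$ finite punctures while \eqref{regG2} displays only $z_1,\dots,z_g$ --- is a typo in the paper, not a phenomenon to be reconciled: as printed, \eqref{vwv} would assert that $v$ has no poles over $z_{g+1},z_{g+2}$, which is false, and the analogous sums in Theorem \ref{thm:G0} correctly run over $k=1,\dots,g+2$. Your proposed mechanism cannot work: the two constraints $\sum_j\delta H_j=0$ and $\sum_j z_j\,\delta H_j=0$ are precisely the conditions you already spent in your second paragraph to prove that $\delta v$ is holomorphic at infinity, and once used there they impose no further linear relation among the $g+2$ quantities $\int_{z_j^{(2)}}^{z_j^{(1)}}\delta v$, which are Abel-type integrals evaluated at generic distinct points and are independent; so the contributions of $z_{g+1},z_{g+2}$ cannot be absorbed into the others, and the correct statement simply has $\sum_{j=1}^{g+2}$. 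Separately, you leave the overall sign ``delicate'' and undetermined, but it follows mechanically from conventions already fixed in the paper: testing \eqref{eq:RBPairing} on the pair $\bigl(w_{x,y},u_j\bigr)$, with $w_{x,y}$ having residue $+1$ at $y$ and vanishing $a$-periods, gives $\langle w_{x,y},u_j\rangle=-B_j(w_{x,y})=-2\pi i\int_x^y u_j$, whence for holomorphic $\tilde w$ one has $\langle v,\tilde w\rangle=-2\pi i\sum_q \mathrm{res}_q(v)\int^q\tilde w$, and therefore
\be
\delta\Gcal_{-2}=\f{1}{2}\langle v,\delta v\rangle=-\pi i\left(r_\infty\int_{\infty^{(2)}}^{\infty^{(1)}}\delta v+\sum_{j=1}^{g+2}r_j\int_{z_j^{(2)}}^{z_j^{(1)}}\delta v\right),
\ee
which integrates to \eqref{regG2} (with the corrected range of summation) and fixes the prefactor $-\pi i$ without any residual ambiguity.
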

\begin{proof}In the coordinate system $(A_j,\lambda_j)$, the form in the r.h.s. of (\ref{eqG2}) has only $\d A_j$ - contributions since it depends only on the moduli of $\CC$, and not on the point of Jacobian. One can write $v$ as follows,  using the properties of $v$ that we listed in Section \ref{sec:WKBDiffs}:
\be
v=r_\infty w_{\infty^{(2)}, \infty^{(1)}}+ \sum_{j=1}^g r_j w_{z_j^{(2)}, z_j^{(1)}} + \sum_{j=1}^g A_j u_j
\la{vwv}
\ee
where $w_{x,y}$ is the differential of third kind on $\CC$ with residues $-1$ and $+1$ at $x$ and $y$, respectively, normalized by the condition of vanishing $a$-periods,  and $u_j$ is the holomorphic differential normalized via $\int_{a_k} u_j=\delta_{jk}$. The $b$-period of $w_{x,y}$ is given by
\begin{equation}
    \oint_{b_j}w_{x,y}=2\pi i\int_{x}^{y}u_j\;,
\end{equation}
so that

\begin{equation}\label{eq:BcycGm2}
B_k=\int_{b_k}v = 2\pi i\left(  r_\infty \int_{\infty^{(2)}}^{ \infty^{(1)}} u_k+  \sum_{j=1}^g r_j \int_{z_j^{(2)}}^{ z_j^{(1)}} u_k\right) +\sum_{j=1}^g A_j \Omega_{jk}
\end{equation}

Moreover, since $z_j$ and $r_j$ are independent of the periods $A_j$,  from \eqref{vwv} and \eqref{eq:BcycGm2} we have
\be
\f{\delta v}{\delta A_j}  = u_j\;, \hskip0.7cm  \f{\d B_k}{\d A_j} = \Omega_{jk}
\la{Ajv}
\ee
and, therefore,
\begin{equation}
\sum_{k=1}^g  (A_k \d B_k -B_k\d  A_k)= -2\pi i \sum_{k=1}^g  \left(r_\infty \int_{\infty^{(2)}}^{\infty^{(1)}} u_k+ \sum_{j=1}^g r_j \int_{z_j^{(2)}}^{ z_j^{(1)}} u_k\right) \d A_k 
\end{equation}
which, due to (\ref{Ajv}),  equals to 
$$
 -2\pi i\d  \left(r_\infty {\rm reg} \int_{\infty^{(2)}}^{\infty^{(1)}} v+\sum_{j=1}^g r_j {\rm reg} \int_{z_j^{(2)}}^{ z_j^{(1)}} v\right)
$$
leading to (\ref{regG2}).
\end{proof}

\subsection{Formula for $\mathcal{G}_{-1}$}

We shall use the  version of Riemann bilinear relations (see \cite{Dub}, eq. (2.5.6)) given in \eqref{RBL}.

The first theorem we need is the following: let $\mu_j=a(\lambda_j)$ and 
\be
\l_j^{(1)} = (\lambda_j,\mu_j)
\la{pj}
\ee
and consider the divisor $D=\l_1^{(1)}+\dots+\l_g^{(1)}$.

According to Th.\ref{TKK} the symplectic potential for Kirillov-Kostant symplectic form can be expressed as follows in terms  $a$-periods of $v$ and the divisor $D$ as follows:
\be
\sum_{j=1}^g \tr L_j G_j^{-1} \delta G_j = \sum_{j=1}^g A_j \delta q_j =- \left( \sum_{j=1}^g q_j \delta A_j\right)+ \delta\left( \sum_{j=1}^g q_j  A_j\right)
\la{tKK1}
\ee
where
\be
q=\Acal_{\infty^{(2)}}(D)-K^{\infty^{(2)}}\;,
\la{qD}
\ee
 where $K^{\infty^{(2)}}$ is the vector of Riemann constants at the point $\infty^{(2)}$.

Using this fact we shall prove the following formula for $\Gcal_{-1}$:

\begin{theorem}\label{thm:Gm1}
The equations for $\Gcal_{-1}$ look as follows
\be
\delta \Gcal_{-1}=  \langle v_0, \delta v\rangle- 2\pi i\sum_{j=1}^g \tr L_j G_j^{-1} \delta G_j  + \f{1}{2}\delta \langle v, v_0\rangle
\la{eqGm1}
\ee
and  the solution is  given by the formula
\be
\Gcal_{-1}= \f{1}{2}\langle v, v_0\rangle   -2\pi i \sum_{j=1}^g q_j  A_j  -\pi i \sum_{k=1}^g B_k - \pi i  \sum_{j=1}^g j A_j
\la{Gm1}\ee
\end{theorem}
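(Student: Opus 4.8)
The plan is to obtain \eqref{eqGm1} as the order-$\hbar^{-1}$ component of the defining relation \eqref{defG}, and then to integrate the resulting closed one-form. First I would extract the coefficient of $\hbar^{-1}$ in $\d\Gcal=\Fcal^*\theta_G-2\pi i\,\theta_{KK}$. Since $G_k,L_k$ are $\hbar$-independent, the potential $\theta_{KK}=\hbar^{-1}\tr\sum_k L_kG_k^{-1}\delta G_k$ lives entirely at order $\hbar^{-1}$ and contributes $-2\pi i\sum_j\tr L_jG_j^{-1}\delta G_j$, while $\Fcal^*\theta_G$ contributes $\theta_G^{(-1)}=\langle v_0,\delta v\rangle-\tfrac12\delta\langle v_0,v\rangle$ from \eqref{tG1}. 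Using the antisymmetry of the pairing \eqref{eq:RBPairing}, $\langle v_0,v\rangle=-\langle v,v_0\rangle$, so that $-\tfrac12\delta\langle v_0,v\rangle=\tfrac12\delta\langle v,v_0\rangle$, which is exactly \eqref{eqGm1}.

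To integrate it I would first dispose of the exact pieces. Substituting \eqref{tKK1} gives $\tr\sum_jL_jG_j^{-1}\delta G_j=\sum_jA_j\delta q_j=-\sum_jq_j\,\delta A_j+\delta\big(\sum_jq_jA_j\big)$, so that
\[
\delta\Gcal_{-1}=\langle v_0,\delta v\rangle+2\pi i\sum_{j}q_j\,\delta A_j+\delta\Big(\tfrac12\langle v,v_0\rangle-2\pi i\sum_jq_jA_j\Big).
\]
The exact term integrates to the first two summands of \eqref{Gm1}, so the entire problem reduces to proving that the non-exact remainder $\langle v_0,\delta v\rangle+2\pi i\sum_jq_j\,\delta A_j$ equals $-\pi i\sum_k\delta B_k-\pi i\sum_j j\,\delta A_j$.

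The key computation is $\langle v_0,\delta v\rangle$. By the representation \eqref{vwv} together with \eqref{Ajv}, $\delta v=\sum_j u_j\,\delta A_j$, so $\langle v_0,\delta v\rangle=\sum_j\langle v_0,u_j\rangle\,\delta A_j$. From the residue computations of Section \ref{sec:WKBDiffs}, $v_0$ has simple poles with residues $\pm\tfrac12$ only at $\lambda_l^{(1,2)}$ and $\infty^{(1,2)}$ (the apparent cancellation $Q_1\sim\gamma_k/(z-z_k)$ against $\mu\sim r_k/(z-z_k)$ makes it regular at $z_k^{(1,2)}$), so I would write $v_0=\tfrac12\sum_l w_{\lambda_l^{(2)},\lambda_l^{(1)}}+\tfrac12 w_{\infty^{(2)},\infty^{(1)}}+(\text{holomorphic})$. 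Since $\langle u_k,u_j\rangle=0$ and the $w_{x,y}$ are $a$-normalized, the period pairing \eqref{eq:RBPairing} combined with the reciprocity $\oint_{b_j}w_{x,y}=2\pi i\int_x^y u_j$ yields $\langle v_0,u_j\rangle=-\pi i\big(\sum_l\int_{\lambda_l^{(2)}}^{\lambda_l^{(1)}}u_j+\int_{\infty^{(2)}}^{\infty^{(1)}}u_j\big)$. Folding each $\int_{\lambda_l^{(2)}}^{\lambda_l^{(1)}}u_j$ with $\nu^*u_j=-u_j$ into $2\int_{\infty^{(2)}}^{\lambda_l^{(1)}}u_j-\int_{\infty^{(2)}}^{\infty^{(1)}}u_j$, the divisor $D=\sum_l\lambda_l^{(1)}$ assembles into the Abel map $\Acal_{\infty^{(2)}}(D)$, hence into $q_j$ via \eqref{qD}.

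The \textbf{main obstacle} is the final matching step. After the $\lambda$-dependence cancels against $2\pi i\sum_jq_j\,\delta A_j$, the claim reduces to the purely curve-theoretic identity
\[
2K_j^{\infty^{(2)}}-(g-1)\int_{\infty^{(2)}}^{\infty^{(1)}}u_j=\sum_{k=1}^g\Omega_{kj}+j \pmod{\Lambda},
\]
with $\Lambda$ the period lattice. This is a statement about the vector of Riemann constants of the hyperelliptic cover $\CC$: its left side is governed by $\Acal_{\infty^{(2)}}(\mathcal K_{\CC})$ with $\mathcal K_{\CC}\sim(g-1)(\infty^{(1)}+\infty^{(2)})$, the $(g-1)\int_{\infty^{(2)}}^{\infty^{(1)}}$ term coming from the non-Weierstrass base point $\infty^{(2)}$, while the column sum $\sum_k\Omega_{kj}$ and the integer $j$ arise from the explicit half-integer characteristic of $K^{\infty^{(2)}}$ in the chosen homology basis (for $g=1$ this is the check $2K_1=\tau+1$). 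Once this identity is established, the relation $\delta B_k=\sum_j\Omega_{kj}\,\delta A_j$ from \eqref{Ajv} converts $\sum_k\Omega_{kj}$ into $-\pi i\sum_k\delta B_k$ and the integer into $-\pi i\sum_j j\,\delta A_j$, so the remainder integrates to $-\pi i\sum_kB_k-\pi i\sum_j jA_j$ and \eqref{Gm1} follows. The delicate points throughout are tracking the $PSL_2$ ambiguity (equalities hold only modulo $\pi i$) and fixing base points and branch-cut conventions sharply enough that the integer coefficients come out exactly.
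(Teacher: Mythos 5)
Your proposal is correct and follows essentially the same route as the paper's proof: extract the $\hbar^{-1}$ order, absorb the exact pieces via $\theta_{KK}=\sum_j A_j\delta q_j$, compute $\langle v_0,u_j\rangle$ from the residues $\pm\frac12$ of $v_0$ at $\lambda_l^{(1,2)}$ and $\infty^{(1,2)}$ (you are right, incidentally, that $v_0$ is regular at $z_k^{(1,2)}$, the pole of $\mu$ cancelling that of $Q_1$), fold the integrals by the hyperelliptic involution to expose $\Acal_{\infty^{(2)}}(D)$ against $q_j$ from \eqref{qD}, and reduce to the vector of Riemann constants. Your ``main obstacle'' identity is exactly what the paper resolves by moving the base point to the branch point $x_1$ via the canonical-class/base-point-change relation and invoking the classical hyperelliptic formula $K^{x_1}_j=\frac j2+\frac12\sum_k\Omega_{jk}$, the same two ingredients you cite, so the approaches coincide in substance.
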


{\it Proof.} 
 Equation \eqref{eqGm1} follows from $\delta\mathcal{G}=\theta_G-2\pi i\theta_{KK}$, together with equations \eqref{tG1} and the \eqref{thetaKK}. Let us work in the coordinate system $(A_j,\lambda_j)$. Then   $\delta_{\lambda_j} v=0$. 
On the other hand, since $\pa_{A_j} v= u_j$ we compute
$$
 \langle v_0, \pa_{A_j} v\rangle=  \langle v_0, u_j\rangle = - \oint_{\partial \tilde{\Ccal}} \left(\int^x u_j\right) v_0
$$
\be
=-2\pi i \sum {\rm res} \left(\int^x u_j\right) v_0 =-\left(\pi i  \sum_{k=1}^g \int_{\l_k^{(2)}}^{\l_k^{(1)}} u_j +\pi i  \int_{\infty^{(2)}}^{\infty^{(1)}} u_j\right),
\la{v0vj}
\ee
where we used that $v_0$ has residues $\pm\frac{1}{2}$ at $\lambda_j^{{(1,2)}}$ and $\infty^{(1,2)}$, respectively. Therefore, 
\begin{equation}
\Gcal_{-1}= \f{1}{2}\langle v, v_0\rangle - 2\pi i\sum_{j=1}^g q_j  A_j  +f,
\end{equation}
where
$$
\delta f=  \langle v_0, \delta v\rangle + 2\pi i \sum_{j=1}^g q_j \delta A_j
$$
Therefore, $\delta_{\lambda_j} f=0$ and 
$$
\pa_{A_j} f=\langle v_0, u_j \rangle  +2\pi i q_j
$$
Using (\ref{v0vj}) and (\ref{qD}) we get 
\begin{equation}
\pa_{A_j} f=-\pi i\left(\sum_{k=1}^g \int_{\l_k^{(2)} }^{\l_k^{(1)} } u_j + \int_{\infty^{(2)} }^{\infty^{(1)} } u_j\right)+2\pi i \left(\sum_{k=1}^g\int_{\infty^{(2)} }^{\l^{(1)} _k} u_j -K^{\infty^{(2)} }_j\right)
\end{equation}
Choose a branch point $x_1$ as the corner of the fundamental polygon. Then this equation can be written as
\be
\f{1}{2\pi i}\pa_{A_j} f=-\left(\sum_{k=1}^g \int_{x_1}^{\l_k^{(1)}} u_j +\int_{x_1}^{\infty^{(1)} } u_j\right) + \sum_{k=1}^g \int_{x_1}^{\l_k^{(1)}} u_j + g\int_{\infty^{(2)} }^{x_1} u_j -K^{\infty^{(2)} }_j
\ee
or
\be
\f{1}{2\pi i} \pa_{A_j} f=-(g-1) \int_{\infty^{(2)} }^{x_1} u_j-K^{\infty^{(2)} }_j
\ee
Therefore, we get
$$
\f{\delta f}{\delta A_j} =  -2\pi i K^{x_1}_j 
$$
Using the representation
$$
K^{x_1}_j =\f{j}{2}+\f{1}{2}\sum_{k}\Omega_{jk}
$$
and relation
$$
\Omega_{jk}=\f{\delta B_k}{\delta A_j}
$$
we have
\be
f=-\pi i \sum_{j=1}^g j A_j -\pi i \sum_{k=1}^g B_k
\la{fexp}\ee

\QED

\subsection{Formula for $\Gcal_0$}

\begin{theorem}\label{thm:G0}
The equations for $\Gcal_{0}$ looks as follows
\be
\delta \Gcal_{0}=\theta_G^{(0)} =\f{1}{2}\langle v_0,\delta v_0\rangle+ \langle v_1, \delta v\rangle -\frac 1 2 \delta \langle v_1, v\rangle  
\la{eqGm0}
\ee
and  the solution is  given by the formula 
\be
 \Gcal_{0}=  -12\pi i \ln \tau_{B}(CP^1,Q_0) + F - \frac 1 2 \langle v_1,v\rangle
 \ee
 where
 \be
\frac{F}{i \pi } =  \frac 1 2 \sum_{j=1}^g \slint_{\l_j^{(2)} }^{\l_j^{(1)} }\!\!\!\! v_0  + \frac 1 2\, \slint_{\infty^{(2)} }^{\infty^{(1)} } \!\!\!\! \!\!v_0
+ \ln \frac {\prod_{a,k} (\l_a-z_k) } {\prod_j \mu_j\prod_{a<b} (\l_a- \l_b)}
\\
-\frac 1 {4r_\infty} \, \slint_{\infty^{(2)} }^{\infty^{(1)} }\!\!\!\!  v 
-
\sum_{k=1}^{g+2} \frac 1 {4r_k}\,  \slint_{z_k^{(2)} }^{z_k^{(1)} }  v  
,
\ee
and $\tau_B(CP^1,Q_0)$ is the Bergman tau-function defined by (\ref{Bergtau}).
The regularization is the ``$z$''--regularization, where we subtract the singular part computed in the $z$--coordinate. Specifically,
\be
\slint^{\l_j^{(1)} }_{\l_j^{(2)} }  v_0 = \lim_{p\to \l_j^{(1)} \atop q\to \l_j^{(2)} } \int_q^p v_0 - \frac 1 2 \ln (z(p)-\l_j)- \frac 1 2 \ln  (z(q)-\l_j)
\ee
\end{theorem}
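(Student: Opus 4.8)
The plan is to integrate \eqref{eqGm0} by the same method used for Theorem \ref{thm:Gm1}, exploiting that a primitive is guaranteed to exist. Indeed, since $\Omega_{KK}$ enters $\delta\Gcal$ only at order $\hbar^{-1}$ (see \eqref{sf}, \eqref{defG}), the symplectomorphism identity \eqref{KKG} forces the order-$\hbar^0$ part of Goldman's form to vanish, $\delta\theta_G^{(0)}=0$, so $\theta_G^{(0)}$ is closed and the task is only to exhibit the primitive explicitly. The last term in \eqref{eqGm0} is already exact and contributes $-\frac12\langle v_1,v\rangle$, so it remains to find a primitive of $\frac12\langle v_0,\delta v_0\rangle+\langle v_1,\delta v\rangle$. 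Throughout I would use the coordinates $(A_j,\lambda_j)$, in which $v$ depends only on the spectral curve, so that $\partial_{A_j}v=u_j$, $\partial_{\lambda_j}v=0$ and hence $\langle v_1,\delta v\rangle=\sum_j\langle v_1,u_j\rangle\,\delta A_j$, while $\tau_B(CP^1,Q_0)$, depending only on $Q_0$, varies only in the $A_j$.

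The next step is to localize the two pairings through the Riemann bilinear relations \eqref{RBL}. For $\langle v_1,u_j\rangle$ I would use the device of Theorem \ref{thm:Gm1}, writing $\langle v_1,u_j\rangle=-\oint_{\partial\tilde\CC}(\int^x u_j)\,v_1=-2\pi i\sum_p\res{p}[(\int^x u_j)v_1]$, so that the regular factor $\int^x u_j$ carries no singularity and each local term is explicit. The double poles of $v_1$ at $z_k^{(1,2)}$ and $\infty^{(1,2)}$, with biresidues $\mp\frac1{4r_k}$ and $\mp\frac1{4r_\infty}$, integrate in $A_j$ to the regularized integrals $-\frac1{4r_\infty}\slint_{\infty^{(2)}}^{\infty^{(1)}}v$ and $-\sum_k\frac1{4r_k}\slint_{z_k^{(2)}}^{z_k^{(1)}}v$ appearing in $F$, while the poles at $\lambda_j^{(1,2)}$ contribute nothing because $\res{\lambda_j^{(1,2)}}v_1=0$ by the apparentness (Bethe) conditions. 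The term $\frac12\langle v_0,\delta v_0\rangle$ pairs two forms that both have simple poles (at $z_k^{(1,2)},\lambda_j^{(1,2)},\infty^{(1,2)}$, with residues $\pm\frac12$); here the antisymmetry device cannot remove the logarithmic primitives of $\int^x v_0$, and evaluating the boundary integral yields exactly the $z$-regularized integrals $\frac12\slint_{\lambda_j^{(2)}}^{\lambda_j^{(1)}}v_0$ and $\frac12\slint_{\infty^{(2)}}^{\infty^{(1)}}v_0$ together with the algebraic term $\ln\frac{\prod_{a,k}(\lambda_a-z_k)}{\prod_j\mu_j\prod_{a<b}(\lambda_a-\lambda_b)}$, the latter tracking how the pole positions, their residues, and the local coordinates (through $\mu_j$) depend on $\lambda_j$.

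The heart of the computation, and the step I expect to be the main obstacle, is the branch-point contribution to $\langle v_1,u_j\rangle$. There $v_1$ carries the projective-connection term $\frac{(dz)^2}{4v}\mathcal{S}(\int^z v,z)$, whose residues at the $x_k$ do not reduce to elementary data but instead reproduce the variation of the Bergman tau function. The plan is to split $\mathcal{S}(\int v,z)$ into the Bergman projective connection plus a meromorphic quadratic differential and to identify the branch-point residues of the former, via the variational formula for $\tau_B$ recalled in Appendix \ref{secBerg} and \eqref{Bergtau}, with $\partial_{A_j}\ln\tau_B$. Propagating the universal factor $\frac1{12}$ of that formula through the $2\pi i$ of \eqref{RBL} is precisely what fixes the coefficient $-12\pi i$ of $\ln\tau_B$; getting this normalization right, together with the correct base points in the regularized integrals, is the delicate point, and it is consistent that $v_0$, being holomorphic at the $x_k$, contributes no $\tau_B$.

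Finally I would assemble the contributions and verify, by differentiating the candidate $\Gcal_0=-12\pi i\ln\tau_B+F-\frac12\langle v_1,v\rangle$, that its variation equals $\theta_G^{(0)}$. The check splits according to the coordinates: the $\delta A_j$ components collect $\tau_B$, the $\slint v$ terms, and the $A_j$-part of the $v_0$ pairing, while the $\delta\lambda_j$ components are matched by the $\slint v_0$ integrals, the algebraic logarithm, and the $\lambda_j$-dependence of $\langle v_1,v\rangle$. The vanishing of $\res{\lambda_j^{(1,2)}}v_1$ ensures that no spurious $\delta\lambda_j$ terms survive, and the guaranteed closedness of $\theta_G^{(0)}$ makes this component-wise matching sufficient to conclude.
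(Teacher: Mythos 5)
Your overall strategy is the paper's: work in the coordinates $(A_j,\lambda_j)$, split off the exact term $-\frac12\langle v_1,v\rangle$, localize the remaining pairings by the Riemann bilinear relations, absorb the Schwarzian part $\frac{(dz)^2}{2v}\mathcal S\left(\int v,z\right)$ of $v_1$ into $-12\pi i\ln\tau_B$ via the variational formula \eqref{defBerg}, and produce the $z$-regularized integrals. But there is one genuine error in your residue bookkeeping. At $\lambda_j^{(1,2)}$ you discard the contribution to $\langle v_1,\delta v\rangle$ on the grounds that $\res{\lambda_j^{(1,2)}}v_1=0$. The Bethe conditions kill only the \emph{simple-pole} coefficient; $v_1$ still has a \emph{double} pole there, coming from the $-\frac{v_0^2}{2v}$ part, with coefficient $\pm\frac1{4\mu_j}$. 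In the localized pairing the residue of $\bigl(\int^x u_j\bigr)v_1$ at a double pole picks up the derivative of the regular factor, namely $\pm\frac1{4\mu_j}\,\frac{u_j}{dz}(\lambda_j^{(1,2)})$, and since $u_j$ is odd under the hyperelliptic involution the two sheets add rather than cancel, giving (in the paper's normalization) the term
\be
-\sum_j\frac 1{2\mu_j}\,\frac{\pa_{A_j} v}{{\rm d} z}\bigg|_{\l_j^{(2)}}^{\l_j^{(1)}}
=-\,\pa_{A_j}\ln\prod_{k}\mu_k .
\ee
This is precisely the $A_j$-variation of the $\prod_j\mu_j$ factor in the algebraic logarithm of $F$: since $\mu_j=\mu(\lambda_j)$ depends on the curve moduli and hence on the $A_k$, your final component-wise check cannot close. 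You attribute the whole logarithm $\ln\frac{\prod_{a,k}(\l_a-z_k)}{\prod_j\mu_j\prod_{a<b}(\l_a-\l_b)}$ to the $\delta\lambda_j$-components of the $v_0$ pairing, but $\pa_{A_j}F$ then contains $-\pa_{A_j}\ln\prod_k\mu_k$ with nothing on the other side to match it.

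A second, lesser gap is at the branch points. You assign all branch-point residues to the Schwarzian piece (hence to $\tau_B$) and note only that $v_0$ is holomorphic at the $x_k$. However, $\delta v_0=\frac{\delta Q_1({\rm d} z)^2}{2v}-\frac{v_0}{v}\delta v$ contains a term singular at the $x_k$ (where $v$ has double zeros in the local coordinate), and the non-Schwarzian part $-\frac{v_0^2}{2v}$ of $v_1$ has double poles there as well; the paper devotes an explicit computation in the coordinate $z=x_k+\zeta^2$ to show that these two families of branch-point residues cancel against each other. Without that cancellation, your localization leaves unaccounted residues at the $x_k$ beyond those absorbed by $\tau_B$. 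With these two points repaired — restoring the $\lambda_j$ double-pole terms and proving the branch-point cancellation — your argument coincides with the paper's proof.
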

\noindent {\it Proof.}
Let us write the potential $\theta_G^{(0)}$ as follows:
\be
\theta_G^{(0)}=  \Theta^{(0)} +\Bil {\widehat 
{v}_1}{\delta v}+\delta \langle v, v_1\rangle 
\la{th0G}
\ee
where
\be
\label{potential}
\Theta^{(0)}  = \frac 1 2 \Bil{v_0}{\delta v_0} +\Bil {\wt {v}_1}{\delta v} \;.
\ee
Here we have set
\be
\label{vvt}
\widehat{v}_1=\f{ (d z)^2  }{2v}{\mathcal S}\le(\int v, z\ri),\ \ \ \ 
\widetilde{v}_1=-\frac {v_0^2}{2v}-\f{ (d z)^2  }{2v}{\mathcal S}(c, z)
\ee
so that the differential  $v_1$ is the sum
\be
v_1=\wt {v}_1+ \widehat{v}_1
\ee
The integration of the last term in (\ref{th0G}) is trivial. The second term can be integrated in terms of the  Bergman tau-function described in the Proposition \ref{prop43}
The integration of $\Theta^{(0)}$ is provided in Lemma \ref{lemma41}.
\QED
\begin{proposition}[Proposition \ref{tauprop} ]
\label{prop43}
The solution to the equation
\be
-12\pi i \, \delta\ln \tau_B(\CC,{\rm d} z)=\langle \widehat{v}_1, \delta v\rangle
\ee
is given by  the {\rm Bergman tau-function}:
\be
\tau_B=
\exp\left\{-\f{1}{6} \sum_{k=1}^{g+2} \f{1}{r_k} {\rm reg} \int_{x_1}^{z_k} v  \right\}
 \prod_{j<k}(x_j-x_k)^{5/144}  \prod_{j<k}(z_j-z_k)^{1/6} \prod_{j,k}(x_j-z_k)^{-7/72}
\la{Bergtau}
\ee
where the regularized integrals are defined by (\ref{regint}).
\end{proposition}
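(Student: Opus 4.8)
The plan is to read the right-hand side of the stated equation as a variational formula for the Bergman tau-function and then to integrate it explicitly. As a first step I would identify the quadratic differential sitting inside $\widehat v_1$: writing $w=\int^z v$ for the flat coordinate of the Abelian differential $v$, the Schwarzian term $\mathcal S(\int v,z)(dz)^2$ is the difference $S_v-S_z$ of the projective connections attached respectively to $w$ and to the base coordinate $z$, so that $\widehat v_1=\tfrac1{2v}(S_v-S_z)$. This is precisely the kind of projective-connection anomaly that controls the Kokotov--Korotkin variation of the Bergman tau-function of the cover $f:\CC\to\mathbb P^1$ equipped with the pulled-back differential $dz$, and the aim is to prove $-12\pi i\,\delta\log\tau_B(\CC,dz)=\langle\widehat v_1,\delta v\rangle$ and then integrate.

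The second step is to evaluate the pairing $\langle\widehat v_1,\delta v\rangle$ through the Riemann bilinear relations in the form \eqref{RBL}, which convert the antisymmetric pairing into $2\pi i$ times a sum of local contributions at the singular points of the integrand, namely the branch points $x_j$ and the poles $z_k^{(1,2)},\infty^{(1,2)}$. Two observations organize this computation. First, by $\nu$-parity (the differential $\widehat v_1$ is odd while $\mathcal S(\int v,z)$ is $\nu$-invariant), $\widehat v_1$ has no residue at the branch points despite having a fourth-order pole there, which follows from $w\sim(z-x_j)^{3/2}$ and hence $\mathcal S(\int v,z)\sim-\tfrac5{8}(z-x_j)^{-2}$. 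Second, since the $r_k$ are held fixed, $\delta v$ is regular at $z_k$ and at $\infty$, so the contributions there arise from pairing the poles of $\widehat v_1$ against a regular $\delta v$; I expect these to assemble into $\prod_{j<k}(z_j-z_k)^{1/6}$ together with the exponential prefactor $\exp\{-\tfrac16\sum_k r_k^{-1}\mathrm{reg}\int_{x_1}^{z_k}v\}$.

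At the branch points both $\widehat v_1$ and $\delta v$ are singular (the latter because the $x_j$ move), and the finite contribution is extracted via the Rauch-type variation of $v$ with respect to the branch points. These are the terms producing the $x_j$-dependence, namely the weights $\prod_{j<k}(x_j-x_k)^{5/144}$ and the mixed weights $\prod_{j,k}(x_j-z_k)^{-7/72}$. With $\delta\log\tau_B$ thereby written as an explicit closed $1$-form in the coordinates $x_j,z_k$ at fixed $r_j$, the final step is to integrate it and to verify that the logarithmic derivatives of the claimed product reproduce each residue; the rational exponents should come out as the universal hyperelliptic tau weights, corrected by the presence of the extra marked points $z_k$ and by the non-standard divisor of $dz$.

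The main obstacle is the branch-point analysis. In the local coordinate $t=\sqrt{z-x_j}$ the differential $\widehat v_1$ has a fourth-order pole, and extracting a finite contribution requires combining this $t^{-4}$ singularity with the simultaneously singular variation $\delta v$, that is, keeping the finite part of $(\int^x\widehat v_1)\,\delta v$ under the Rauch formula for $\delta\mu$. One must also account for the holomorphic discrepancy $S_B-S_v$ between $\widehat v_1$ and the genuine input $\tfrac1{2v}(S_B-S_z)$ of $\tau_B(\CC,dz)$, which I expect to feed the exponential factor, and carry the $z$-regularizations of \eqref{regint} consistently throughout. Matching the resulting finite parts to the precise fractions $5/144$ and $-7/72$ is where the delicate bookkeeping lies; once it is done, the identification with the Bergman tau-function of Appendix \ref{secBerg} (Proposition \ref{tauprop}) and the product formula follow.
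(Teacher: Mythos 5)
Your plan is genuinely different from the paper's proof, and in its present form it contains a structural gap rather than just deferred bookkeeping. The paper never integrates the pairing residue-by-residue. It observes that, since $\frac{\Scal\left(\int^x v,\,z(x)\right)({\rm d}z)^2}{v}=2\widehat v_1$, the stated variational equation is (up to the sign/orientation conventions relating \eqref{RBL} and \eqref{eq:RBPairing}) nothing but the defining system \eqref{defBerg} of the Bergman tau-function rewritten in pairing form, cf.\ \eqref{dlogtau1}. The real content is the closed formula \eqref{Bergtau}, which the paper obtains by specializing the known prime-form representation \eqref{taupint} (imported from the cited literature) to the divisor $(Q_0)=\sum_{j=1}^{2g+2}x_j-2\sum_{k=1}^{g+2}z_k$: one evaluates the distinguished local coordinates $\zeta_k(x)=\exp\left\{\frac{1}{r_k}\int_{x_1}^{x}v\right\}$ at the $z_k$, which produces the factor $\exp\left\{-\frac16\sum_k \frac{1}{r_k}\,{\rm reg}\int_{x_1}^{z_k}v\right\}$ with the regularization \eqref{regint}, and $\xi_j=\left(\int_{x_j}^x v\right)^{2/3}$ at the branch points, which produces powers of $a_j=C_0^{1/2}\prod_{k\neq j}(x_k-x_j)^{1/2}/\prod_k(x_j-z_k)$; collecting exponents ($-1/6$ per $\zeta$-factor, $1/48$ per $\xi$-factor, hence $1/72$ per $a_j$) yields exactly $5/144$, $1/6$ and $-7/72$. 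No Rauch-type residue computation occurs, and your worry about a discrepancy $S_B-S_v$ is moot in this setting: on a genus-zero base the Bergman projective connection vanishes identically in the $z$-coordinate, so $\widehat v_1$ is precisely one half of the integrand in \eqref{defBerg}.

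The gap in your route is this: you keep $r_k$ fixed and (as in the paper's setting) the $z_k$ fixed, so that $\delta v$ is regular at $z_k^{(1,2)}$ and $\infty^{(1,2)}$, and the one-form $\langle\widehat v_1,\delta v\rangle$ then has components only along the curve moduli $A_j$. But such an equation determines $\ln\tau_B$ only up to an arbitrary additive function of the frozen data $\{z_k\},\{r_k\}$: the factor $\prod_{j<k}(z_j-z_k)^{1/6}$ is constant along every allowed variation, so it cannot be ``assembled'' from residues at the $z_k$ as your second step claims, and no matching of logarithmic derivatives can ever confirm its exponent. To detect that exponent variationally you would have to let the $z_k$ move; but then $\delta v$ acquires a double pole at $z_k^{(1,2)}$ (varying $\pm r_k\,\frac{{\rm d}z}{z-z_k}$ in $z_k$), destroying the regularity on which your computation at the $z_k$ rests --- which is precisely why the paper imports \eqref{taupint}, established on the larger moduli space, instead of re-deriving it. Beyond this inconsistency, the decisive quantitative steps of your proposal --- the finite parts at the branch points under the Rauch variation and the emergence of $5/144$ and $-7/72$ --- are only announced (``I expect'', ``once it is done''), so even in the directions where your scheme is coherent it remains a program rather than a proof. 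The local facts you do establish (oddness of $\widehat v_1$ under $\nu$, $\Scal(\int v,z)\sim-\frac58(z-x_j)^{-2}$ and the vanishing of residues of $\widehat v_1$ at branch points, poles of $\widehat v_1$ at $z_k^{(1,2)}$) are correct, but they do not close the argument.
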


The integration of the term $\Theta^{(0)}$ is provided by the following lemma:

\begin{lemma}
\la{lemma41}
The solution to equation
\be
\delta F=\Theta^{(0)}
\ee
is given by  

\be
\frac{F}{i \pi } =  \frac 1 2 \sum_{j=1}^g \slint_{\l_j^{(2)} }^{\l_j^{(1)} }\!\!\!\! v_0  + \frac 1 2 \slint_{\infty^{(2)} }^{\infty^{(1)} } \!\!\!\! \!\!v_0
+ \ln \frac {\prod_{a,k} (\l_a-z_k) } {\prod_j \mu_j\prod_{a<b} (\l_a- \l_b)}
\\
-\frac 1 {4r_\infty}  \slint_{\infty^{(2)} }^{\infty^{(1)} }\!\!\!\!  v
-
\sum_{k=1}^{g+2} \frac 1 {4r_k}\,  \slint_{z_k^{(2)} }^{z_k^{(1)} }  v  \;
.
\ee
\end{lemma}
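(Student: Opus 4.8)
The plan is to integrate the closed 1-form $\Theta^{(0)}=\tfrac12\Bil{v_0}{\delta v_0}+\Bil{\widetilde v_1}{\delta v}$ explicitly, following the template already used for Theorems \ref{thm:Gm2} and \ref{thm:Gm1}. I would work throughout in the coordinates $(A_j,\lambda_j)$, with $A_j=\int_{a_j}v$ the $a$-periods of $v$ and $\lambda_j$ the positions of the apparent singularities. The two structural facts that make these coordinates convenient are exactly the ones exploited before: $\pa_{A_j}v=u_j$ (the normalized holomorphic differential), and $\pa_{\lambda_j}v=0$, since $v=\mu\,{\rm d}z$ depends only on the moduli of $\CC$ and not on the divisor $D$. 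Before integrating I would record that $\Theta^{(0)}$ is indeed closed: by \eqref{KKG} the form $\Fcal^*\theta_G-2\pi i\theta_{KK}$ is closed, and since $\theta_{KK}=O(\hbar^{-1})$ the order-$\hbar^0$ piece $\theta_G^{(0)}$ is closed on its own; as $\Bil{\widehat v_1}{\delta v}$ integrates (Proposition \ref{prop43}) and the remaining piece is $\delta$-exact, $\Theta^{(0)}$ is closed and a potential $F$ exists.

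First I would dispose of the term $\Bil{\widetilde v_1}{\delta v}$. Because $\pa_{\lambda_j}v=0$, only the period directions survive, so $\Bil{\widetilde v_1}{\delta v}=\sum_j\Bil{\widetilde v_1}{u_j}\,\delta A_j$. I evaluate $\Bil{\widetilde v_1}{u_j}$ through the Riemann bilinear identity \eqref{RBL}, rewriting it as $2\pi i$ times the sum of residues of $(\int^x\widetilde v_1)\,u_j$ over the poles of $\widetilde v_1$. The relevant Laurent data are available from Section \ref{sec:WKBDiffs}: double poles at $\infty^{(1,2)}$ and $z_k^{(1,2)}$ with residues $\mp\tfrac1{4r_\infty}$, $\mp\tfrac1{4r_k}$, residue-free poles at $\lambda_j^{(1,2)}$, and second-kind poles at the branch points. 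Exactly as in Theorem \ref{thm:Gm2}, integrating $\sum_j\Bil{\widetilde v_1}{u_j}\,\delta A_j$ using $\pa_{A_k}\,{\rm reg}\!\int_{z_k}v=\int_{z_k}u_k$ converts the residue contributions at $\infty$ and $z_k$ into the regularized integrals $-\tfrac1{4r_\infty}\slint_{\infty^{(2)}}^{\infty^{(1)}}v-\sum_k\tfrac1{4r_k}\slint_{z_k^{(2)}}^{z_k^{(1)}}v$. The branch-point contributions, which carry zero residue, must be checked to integrate to the elementary $\lambda$-dependent prefactor (or to cancel), since they are absent from the final form of $F$.

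Next comes the genuinely new term $\tfrac12\Bil{v_0}{\delta v_0}$. Here $v_0$ has simple poles with residues $\pm\tfrac12$ at $\lambda_j^{(1,2)}$ and $\infty^{(1,2)}$ only — it is regular at $z_k^{(1,2)}$ (the pole of $Q_1$ being cancelled by that of $v$) and holomorphic at the branch points — which is why the final formula contains $\slint v_0$ over $\lambda_j$ and $\infty$ but not over $z_k$. Writing the pairing in period form, $\tfrac12\Bil{v_0}{\delta v_0}=\tfrac12\sum_k\bigl(A_k^{v_0}\delta B_k^{v_0}-B_k^{v_0}\delta A_k^{v_0}\bigr)$, I recognize the same shape as $\delta\Gcal_{-2}$, but with the essential novelty that the poles $\lambda_j^{(1,2)}$ are themselves coordinates. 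Splitting $\delta=\sum\delta A_j\,\pa_{A_j}+\sum\delta\lambda_j\,\pa_{\lambda_j}$: in the $A_j$-directions the poles of $v_0$ are fixed and the Theorem \ref{thm:Gm2} mechanism applies verbatim, producing $\tfrac12\slint_{\infty^{(2)}}^{\infty^{(1)}}v_0+\tfrac12\sum_j\slint_{\lambda_j^{(2)}}^{\lambda_j^{(1)}}v_0$; in the $\lambda_j$-directions $\pa_{\lambda_j}v_0$ acquires a double pole at the moving point $\lambda_j^{(1,2)}$, and evaluating $\Bil{v_0}{\pa_{\lambda_j}v_0}$ by residues there — via the expansion $v_0\sim\pm\tfrac{{\rm d}z}{2(z-\lambda_j)}$ whose subleading coefficient brings in $\mu_j=\sqrt{Q_0(\lambda_j)}$ — yields a closed 1-form in the $\lambda$'s integrating to $\ln\frac{\prod_{a,k}(\lambda_a-z_k)}{\prod_j\mu_j\prod_{a<b}(\lambda_a-\lambda_b)}$.

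The hard part will be this last computation: handling the \emph{moving} poles of $v_0$ at $\lambda_j^{(1,2)}$, and showing that after imposing the prescribed ``$z$''-regularization the finite parts assemble precisely into the stated logarithm, with $\mu_j$ in the denominator and the Vandermonde factors $\prod(\lambda_a-\lambda_b)$, $\prod(\lambda_a-z_k)$. This is where residue bookkeeping, the treatment of the logarithmic terms in $\int^x v_0$ at the moving poles, and the matching of regularization conventions all come together, and it is the step where sign and normalization errors are easiest to make. The closedness of $\Theta^{(0)}$ noted at the outset — equivalently $\pa_{A_j}\pa_{\lambda_k}F=\pa_{\lambda_k}\pa_{A_j}F$ — provides a valuable internal consistency check on this bookkeeping. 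Assembling the three contributions (the $\slint v_0$ integrals from the period directions of the first term, the $\slint v$ integrals from the second term, and the logarithm from the $\lambda$-directions) produces the claimed expression for $F$.
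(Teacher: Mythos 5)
Your plan follows the paper's strategy in outline --- same coordinates $(A_j,\l_j)$, the same conversion of the pairings into residue sums via \eqref{RBL}, and the same split into $\delta A_j$- and $\delta\l_j$-directions --- but it has a concrete gap in the treatment of $\Bil{\wt v_1}{\delta v}$. You enumerate the poles of $\wt v_1$ at $\infty^{(1,2)}$, $z_k^{(1,2)}$ and the branch points, and dismiss the points $\l_j^{(1,2)}$ as ``residue-free.'' But a residue-free \emph{double} pole still contributes to the pairing: near $\l_j^{(1,2)}$ one has $\wt v_1 = \pm \frac 1{4\mu_j}\frac 1{(z-\l_j)^2}+\mathcal O(1)$, and since $\int^p \pa_{A_j} v$ is analytic but non-constant there, the residue of $2\wt v_1 \int^p \pa_{A_j} v$ picks up the derivative of the antiderivative, i.e.\ $\frac{\pa_{A_j} v}{{\rm d} z}\big|_{\l_j^{(1,2)}} = \pm \pa_{A_j}\mu_j$. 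In the paper's proof this is exactly the term $-\sum_j \frac 1{2\mu_j}\frac{\pa_{A_j} v}{{\rm d} z}\Big|_{\l_j^{(2)}}^{\l_j^{(1)}} = -\pa_{A_j}\ln \prod_k \mu_k$, which supplies the $A$-dependence of the factor $\prod_j \mu_j$ in the final logarithm. Your plan attributes the whole logarithm $\ln \frac{\prod_{a,k}(\l_a - z_k)}{\prod_j \mu_j \prod_{a<b}(\l_a-\l_b)}$ to the $\l$-directions of $\frac 1 2 \Bil{v_0}{\delta v_0}$; but $\mu_j = \sqrt{Q_0(\l_j)}$ depends on the moduli $A_k$ as well as on $\l_j$, so with the $\l_j^{(1,2)}$ contribution dropped your $A$-direction differential cannot match the $\l$-direction answer, and the cross-derivative consistency check you yourself propose would fail. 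This is a missing term, not a bookkeeping detail.

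Two secondary points. First, the branch-point contributions neither vanish term-by-term nor ``integrate to the prefactor'': they cancel \emph{between} the two terms, using $\delta v_0 = \frac{\delta Q_1 ({\rm d} z)^2}{2v} - \frac{v_0}{v}\,\delta v$ together with $\wt v_1 = -\frac{v_0^2}{v}+\mathcal O(1)$ at a branch point (the paper verifies this in the local coordinate $\zeta=\sqrt{z-x_k}$), so processing $\frac 1 2 \Bil{v_0}{\delta v_0}$ and $\Bil{\wt v_1}{\delta v}$ in isolation, as you propose, does not work at the branch points. Second, in the $\l_j$-direction the residue computation at the moving double pole leaves you with the regularized value $v_0^{reg}(\l_j^{(1)}) = \frac{Q_1^{reg}(\l_j)}{2\mu_j} + \frac 1 2 \pa_{\l_j}\ln\mu_j$, which still contains the unknowns $\gamma_k$; the Bethe equations \eqref{Betther} are essential to convert this into $\frac 1 2\bigl(\sum_{\ell\neq j}\frac 1{\l_j-\l_\ell} - \sum_k \frac 1{\l_j - z_k}\bigr) + \frac 1 2 \pa_{\l_j}\ln\mu_j$ and hence into the stated logarithm --- an ingredient absent from your outline. (Minor: the residues you quote, $\mp\frac 1{4r_k}$, $\mp\frac 1{4r_\infty}$, are those of the full $v_1$; for $\wt v_1$ one has $\res{z_k^{(1,2)}}\wt v_1 = \mp\frac 1{8r_k}$ and $\res{\infty^{(1,2)}}\wt v_1 = \mp\frac 1{8r_\infty}$, the final coefficients $\frac 1{4r}$ arising from the factor $2\wt v_1$ in the residue formula.)
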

{\it Proof.}
The computation can be performed in any set of coordinates; we choose to use coordinates $\lambda_j's$ and periods $\{A_j\}$ (alternatively one could use any any moduli of the spectral curve). Then the spectral curve is $\lambda_j$--independent. 
By application of the Riemann bilinear identities we obtain
\be
\frac 1{i\pi} \Theta^{(0)} = \sum \res{} \le(   v_0(x) \int^x \delta v_0 +  2\widetilde{v}_1 \int^x \delta v\ri).
\label{resid}
\ee
where the sum extends over all poles of the expression in the residue bracket.  These  poles are located  at:
the points $z_j^{(1,2)}$,
the branch points $x_j$ of the curve $\CC$,
 the  points $\lambda_j^{(1,2)}$,
 the points $\infty^{(1,2)}$.

\paragraph{Branch points.} We are going to show that the residues at the branchpoints of the two terms in \eqref{resid} cancel each other. To this end we observe that from 
\be
v_0 = \frac {Q_1 {\rm d} z^2}{2v}, \label{v0}
\ee
we have 
\be
\delta v_0 = \frac {\delta Q_1 ({\rm d} z)^2}{2v}  - \frac {Q_1({\rm d} z)^2}{2v^2} \delta v= \frac {\delta Q_1 ({\rm d} z)^2}{2v}  -\frac{v_0}{v} \delta v
\label{dv0}
\ee
The first term does not have poles at the branch points because ${\rm d} z^2$ and $v$ both have double zeros. The second term is present only for differential in the moduli of the curve (leaving $\l_j$'s constant). 
Secondly, in the expression for $\wt v_1$ the only term with poles at the branch point are
\be
\label{v1} 
\wt v_{1} = -\frac {v_0^2}{v} +\mathcal O(1).
\ee
Thus, 
\be
 \sum_{b.pts} \res{x_k} \le(
v_0(p) \int^p \pa_{A_j} v_0 + 2 \wt v_1 \int^p \pa_{A_j} v
\ri) \mathop{=}^{\eqref{dv0}}\sum_{b.pts} \res{x_k} \le(
-v_0(p) \int^p \frac{v_0}{v}\pa_{A_j} v + 2 \wt v_1 \int^p \pa_{A_j} v
\ri)
\ee

\be
= \sum_{b.pts} \res{x_k} \le(
-v_0(p) \int^p \frac{v_0}{v}\pa_{A_j} v - \frac{ v_0^2}{v} \int^p \pa_{A_j} v
\ri)
\label{resk}
\ee
The computation of this residue is easier if done in the local coordinate $\zeta $ given by  $z = x_k + \zeta^2$: in this coordinate each of the differentials (being all odd under the hyperelliptic involution) are expressed as functions of $\zeta^2$. We denote 
\be
v_0 = f_0(\zeta^2) {\rm d} \zeta,\ \ \ v = \zeta^2 h(\zeta^2){\rm d} \zeta, \ \ \ 
\pa_{A_j} v = g(\zeta^2) {\rm d} \zeta,
\ee
where we have used that $v$ has a double zero at $\zeta=0$ (and $h(0)\neq 0$).
We also observe that in the computation of the residues the base point of integration of the integrals  is irrelevant because it adds a constant and this yields  no residue.  Then we can represent the indefinite integral 
$\int^p \frac {v_0}{v} \pa_{A_j} v$ as a locally defined meromorphic  {\it odd} function with a simple pole at $\zeta=0$. In explicit terms we have 
\be
\eqref{resk} = -\res{\zeta=0}  \le( -  f_0(\zeta^2) \frac {f_0(0) g(0)}{\zeta h(0)} + \frac {f_0(0)^2}{\zeta^2 h(\zeta^2)} \int_0^\zeta g(\xi^2){\rm d}\xi + \mathcal O(1)\ri) {\rm d} \zeta = 0.
\ee
\paragraph{Contribution of the other residues.}
Consider first  one of the moduli,  $A_j$, of the curve that does not modify the $z$--projection of the divisor $D$.


Recall that $\res{\l_j^{(1,2)}} v_0= \res{\infty^{(1,2)} } v_0= \pm \frac 1 2$ and thus $\pa_{A_j} v_0$ is locally analytic at the points $\l_j$ and $\infty$; the same applies to $\pa_{A_j} v$. Viceversa, from \eqref{v1} it follows that $\wt  v_1$ has a double pole at $\l_j$ with coefficient 
\be
\wt v_1 = {\pm} \frac 1{4 \mu_j}  \frac 1{(z-\l_j)^2} + \mathcal O(1),  \  \  \text { near  } \l_j^{(1,2)} 
\ee
where we emphasize the absence of residue. 
Finally we need the residues of $\wt v_1$ at the points  $z_k^{(1,2)}  $'s and $\infty^{(1,2)} $; a short computation using \eqref{vvt} yields
\bea 
\res{z_k^{(1,2)}  } \wt v_1 = \mp  \frac 1 {8r_k} \ ;\ \ \ \ \res{\infty^{(1,2)}  } \wt v_1 = \mp \frac 1 {8r_\infty}\;.
\eea 
Keeping this in mind, the result is  then
\bea
 \sum_{\l_j^{(1,2)} , \infty^{(1,2)} , z_k^{(1,2)}  }& \res{} \le(
v_0(p) \int^p \pa_{A_j} v_0 + 2 \wt  v_1 \int^p \pa_{A_j} v
\ri)=
\frac 1 2\sum_j \int_{\l_j^{(2)} }^{\l_j^{(1)} } \pa_{A_j} v_0  + \frac 1 2 \int_{\infty_-}^{\infty+} \pa_{A_j} v_0
\\
& - \sum_j\frac 1 {2\mu_j} \frac{\pa_{A_j} v }{{\rm d} z}\bigg|_{\l_j^{(2)} }^{\l_j^{(1)} } 
-\frac 1 {4r_\infty}  \int_{\infty^{(2)} }^{\infty^{(1)} }\!\!\!\! \pa_{A_j} v
-
\sum_{k=1}^{n+2} \frac 1 {4r_k} \int_{z_k^{(2)} }^{z_k^{(1)} } \pa_{A_j} v
\\
=&
\frac 1 2\sum_j \int_{\l_j^{(2)} }^{\l_j^{(1)} } \pa_{A_j} v_0  + \frac 1 2 \int_{\infty_-}^{\infty+} \pa_{A_j} v_0\,
{-}\, {\pa_{A_j}  }\ln\prod_{k=1}^{g+2} \mu_k  
{-\frac 1 {4r_\infty} } \int_{\infty^{(2)} }^{\infty^{(1)} }\!\!\!\! \pa_{A_j} -
\sum_{k=1}^{g+2} \frac 1 {4r_k} \int_{z_k^{(2)} }^{z_k^{(1)} } \pa_{A_j} v
\eea
where we have used that $\frac{\pa_{A_j} v}{{\rm d} z}\bigg|_{\l_j^{(1,2)}  } = \pm \pa_{A_j}\mu_j $.
We now observe that the derivative is made at $z$--value fixed under the integral sign; therefore we can pull the derivative outside provided we interpret the integration as a $z$--regularized integral:
\bea
 \sum_{j}& \res{\l_j^{\pm}} \le(
v_0(p) \int^p \pa_{A_j} v_0 + 2 v_1 \int^p \pa_{A_j} v
\ri)
=\\
=&\pa_{A_j} \le(
\frac 1 2\sum_j \slint_{\l_j^{(2)} }^{\l_j^{(1)} }  v_0  
+
 \frac 1 2 \slint_{\infty_-}^{\infty+}  v_0 
 {-}
  \ln \prod_{k=1}^{g+2} \mu_k  
{-\frac 1 {4r_\infty} } \slint_{\infty^{(2)} }^{\infty^{(1)} }\!\!\!\!  v
-
\sum_{k=1}^{n+2} \frac 1 {4r_k} \slint_{z_k^{(2)} }^{z_k^{(1)}  v  }\ri),
\eea
where, by definition, the regularization is made by subtraction of the singular part in the $z$--coordinate of the antiderivative.

\paragraph{Variations of  divisor $D$.}
We now consider a derivative $\pa_{\l_j}$. Since $v$ and the spectral curve are independent of $\lambda_j$'s,  only the first term in \eqref{resid} gives a nonzero contribution.

Using the Riemann bilinear relations we find  
\be
\Theta(\pa_{\l_j}) =\le(      \le(\res{\infty^{(1)} }+\res{\infty^{(2)} }\ri) 
+ 
\sum_{\ell}  \le(\res{\l_\ell^{(1)} }+\res{\l_\ell^{(2)} }\ri) \ri)
 v_0 \int\frac {\pa v_0}{\pa \l_j}.
\ee
Now observe that for $\ell\neq j$  and for the residues at infinity  the integrand  is locally analytic and hence the differential $v_0\int \pa_{\l_j} v_0$ has a simple pole; we can pull the derivative outside of the integration because the regularization depends on $\l_\ell$ but not on $\l_j$.
Thus we have 
\be
\Theta(\pa_{\l_j}) = \frac  1 2  \le(\int_{\infty^{(2)} }^{\infty^{(1)} } + \sum_{\ell=1\atop \ell\neq j}^g \int_{\l_\ell^{(2)} }^{\l_\ell^{(1)} }\ri) \!\!\frac {\pa v_0}{\pa \l_j} 
+\le( \res{\l_j^{(1)} }+\res{\l_j^{(2)} }\ri)v_0 \int\frac {\pa v_0}{\pa \l_j}.
\ee
 We are left with the contribution of $\ell=j$:
 \bea
  \res{(\l_j, \pm \mu_j)} v_0 \int\frac {\pa v_0}{\pa \l_j}.
  \label{redv0}
 \eea
  The local behaviour of the indefinite integral is (near $\l_j^{(1)} $):
\be
  \int^p \pa_{\l_j} v_0 = \int ^p\frac \pa{\pa \l_j} \le(
  \frac {1} {2(z-\l_j)} + \mathcal O(1)
  \ri)\d z= \int ^p \le(
  \frac {1} {2(z-\l_j)^2} + \mathcal O(1)
  \ri)\d z= \frac {-1}{2(z-\l_j)} + \mathcal O(1).
\ee
Therefore, the local behaviour of the function we are taking the residue of in \eqref{redv0}  is 
 \be
  v_0 \int\frac {\pa v_0}{\pa \l_j} = \le(\frac {\pm 1}{2(z-\l_j)} + A^{(1,2)} _j + ...\ri)\le(\frac {\mp 1}{2(z-\l_j)} + C^{(1,2)} _j + \dots \ri)
 \ee
 where $A_j^{(1,2)}  = v_0^{reg}(\l_j^{(1,2)} ) = \pm v_0^{reg}(\l_j^{(1)} )$. This  means that the result is $\frac{A_j^{(2)}  - A_j^{(1)}  +  C_j^{(1)}  - C_j^{(2)} }2$.  Now, by definition of regularization:
\be
\frac{C_j^{(1)}  - C_j^{(2)} }2 =\frac 1 2 \, \slint_{\l_j^{(2)} }^{\l_j^{(1)} } \frac {\pa v_0}{\pa \l _j}.
\ee
We then observe that 
\be
\frac 1 2 \,\slint_{\l_j^{(2)} }^{\l_j^{(1)} } \frac {\pa v_0}{\pa \l _j}
=
\frac 1 2 \frac {\pa }{\pa \l _j}\,
 \slint_{\l_j^{(2)} }^{\l_j^{(1)} } v_0 - \frac 1 2  v_0^{reg}\bigg|_{\l_j^{(2)} }^{\l_j^{(1)} } 
 = 
\frac 1 2 \frac \pa{\pa \l_j} \,\slint_{\l_j^{(2)} }^{\l_j^{(1)} } v_0 +\frac{ A_j^{(2)}  -  A_j^{(1)}   }2.
\ee
Therefore we compute
\bea
\res{\l_j^{(1,2)}  }\left( v_0 \int \pa_{\l_j} v_0\right) &=  \frac {A_j^{(2)}  - A_j^{(1)}  + C_j^{(1)}  - C_j^{(2)} }2 = 
\frac 1 2 \,\slint _{\l_j^{(2)} }^{\l_j^{(1)} } \frac \pa{\pa \l_j}v_0 
+ \frac{ A_j^{(2)}  - A_j^{(1)} }2 =
\frac 1 2 \frac \pa{\pa \l_j}\,\slint _{\l_j^{(2)} }^{\l_j^{(1)} } v_0 - 2A_j^{(1)}  .
\label{143}
\eea
It remains to compute $A_j^{(1)} $: 
 from the definition of $v_0$ \eqref{v0} it follows promptly
\be
A_j^{(1)}  =  \frac{Q_1^{reg}(\l_j)}{2\mu_j} +  \frac 1 2 \pa_{\l_j}\ln \mu_j
=
\frac 1{2\mu_j} \le(\sum_{\ell\neq j} \frac {\mu_{\ell}}{\l_j-\l_\ell} + \sum_{k} \frac {\gamma_k} {\l_j-z_k}\ri)  +  \frac 1 2 \pa_{\l_j}\ln \mu_j
\ee
Using the Bethe equations \eqref{Betther} one finds then 
\be
A_j^{(1)}  = \frac{1}2 \le(\sum_{\ell \neq j} \frac 1{\l_j-\l_\ell} - \sum_k \frac 1{\l_j-z_k} \ri) 
+
   \frac 1 2 \pa_{\l_j}\ln \mu_j.
\ee
Inserting this expression into \eqref{143} we then obtain:
\bea
\res{\l_j^{(1,2)}  } v_0 \int \pa_{\l_j} v_0
&=\frac 1 2 \frac \pa{\pa \l_j}\slint _{\l_j^{(2)} }^{\l_j^{(1)} } v_0  -  \le(\sum_{\ell \neq j} \frac 1{\l_j-\l_\ell} - \sum_k \frac 1{\l_j-z_k} \ri) -  \pa_{\l_j}\ln \mu_j
\eea
where, again the regularization so far is made in the $z$--coordinate.

Thus 
\be
\sum_{j=1}^{g} \Bil{v_0}{\frac {\pa v_0}{\pa \l_j}} \d \l_j 
=
 \frac 1 2  \d_{\l}
\le( \slint_{\infty^{(2)} }^{\infty^{(1)} }+   \sum_{j=1}^g \slint_{\l_j^{(2)} }^{\l_j^{(1)} } 
\ri)v_0 
+ \d_\l \ln \frac { \prod_{a,k} (\l_a-z_k)}{\prod \mu_j \prod_{a<b} (\l_a-\l_b)}
\ee
This concludes the proof. \QED

\appendix
\section{ Kirillov-Kostant symplectic potential }

\subsection{ Szeg\"o kernel and its variations}

Here we list the necessary information about Szeg\"o kernel and its variations. 
For a Riemann surface of genus $g$ denote the Abel map by $\Acal(x)$, introduce holomorphic differentials 
$u_j$ normalized via $\int_{a_j} u_k=\delta_{jk}$ and the prime-form $E(x,y)$. Let $q\in \C^g$.
The Szeg\"o kernel $S_q$ is then given by
\be
S_q(x,y)= \f{\Theta(\Acal(x)-\Acal(y)+q)}{\Theta(q) E(x,y)}
\la{Sz}
\ee
The Szeg\"o kernel has the following properties. First, it has simple pole on the diagonal of the form:
\be\label{ee:SzExp}
S_q(x,y)= \left(\f{1}{\xi(x)-\xi(y)} +\mathcal{O}(1) \right)\sqrt{{\rm d}\xi(x)}\sqrt{{\rm d}\xi(y)}
\ee
where $\xi$ is a local coordinate near the diagonal.
Second, it has the following periodicity properties: $S_q(x,y)$ remains invariant (up to a sign) when $x$ or $y$ are analytically continued along any $a$-cycle $a_j$; under analytical continuation along $b_j$ one has
$$
S_q(x+b_j,y)= e^{-2\pi i q_j}S_q(x,y)\;, \hskip0.7cm
S_q(x,y+b_j)= e^{2\pi i q_j}S_q(x,y) \;\;\; 
$$

The Szeg\"o kernel satisfies the following identity due to Fay \cite{Fay73}:
\be
\label{Fayid}
S_q(x,y) S_q(y,x)= -B(x,y)- \sum_{\a,\b=1}^g \p_\a\p_\b \log \Theta_q(0) u_\a(x) u_\b(y)
\ee
where $B(x,y)={\rm d}_x{\rm d}_y\log E(x,y)$ is the canonical bimeromorphic differential.

The Szeg\"o kernel depends on the moduli of the Riemann surface $\CC$ (we consider here the moduli space 
of hyperelliptic curves of genus $g$ defined by (\ref{spcurve})) and on the vector $q$, which defines a point of the Jacobian of $\CC$. Variational formula for Szeg\"o kernel  on the space $\Acal^{\bf r}$ can be conveniently written in terms of coordinate system $(A_j,q_j)$ where $A_j=\int_{a_j}v$ and $q_j$ are components of vector $q$. The moduli of the curve $\CC$ (for fixed ${\bf r}$ and $z_j$) depend (locally) only on the periods $A_j$.

The variational formulas are given in the next proposition.

\begin{proposition} The following variational formulas hold: 
\be
\f{\p}{\p A_j}S_q(x,y)=-\f{2\pi i }{4}\sum_{i=1}^{2g+2} \f{u_j}{{\rm d} \ln (v/{\rm d} z)}(x_i)\,\res{x_i}\f{W_t [S_q(x,t),\, S_q(t,y)]}{v(t)}
\la{SI}
\ee
\be
\f{\p}{\p q_\gamma}S_q(x,y)=-\int_{t\in a_\gamma} S_q(x,t) S_q(t,y)
\la{Sqv}
\ee
where $W_t (f,g)$ denotes Wronskian of two functions $f(t)$ and $g(t)$
 .
\end{proposition}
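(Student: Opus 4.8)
The plan is to prove the two formulas by genuinely different mechanisms, because \eqref{Sqv} is a variation \emph{along the Jacobian} (the curve $\CC$ and the differential $v$ are frozen, only the vector $q$ moves), whereas \eqref{SI} is a variation \emph{of the moduli} of $\CC$ at fixed $q$, realized through the period coordinates $A_j=\int_{a_j}v$.

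For \eqref{Sqv} I would fix $x,y$ and study $\omega(t):=S_q(x,t)\,S_q(t,y)$ in the variable $t$. By the periodicity recorded after \eqref{Sz}, along any $a$-cycle each factor is invariant up to sign so $\omega$ is invariant, while along $b_\gamma$ the two factors acquire $e^{2\pi i q_\gamma}$ and $e^{-2\pi i q_\gamma}$ and the phases cancel; hence $\omega$ is a single-valued meromorphic one-form in $t$, and by the diagonal expansion \eqref{ee:SzExp} its only poles are simple, at $t=x$ and $t=y$, with residues $\mp S_q(x,y)$. I would then evaluate $\int_{a_\gamma}\omega$ from the theta representation \eqref{Sz}: using the quasi-periodicity of $\Theta$ under $\Acal(t)\mapsto\Acal(t)+e_\gamma$ together with Fay's addition theorem for the product of two prime forms, the $a_\gamma$-integral collapses to $-\partial_{q_\gamma}$ acting on the theta quotient, which is exactly $-\partial_{q_\gamma}S_q(x,y)$. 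The conceptual content is first-order perturbation theory for the spin-$\tfrac12$ propagator: changing $q_\gamma$ shifts the flat holonomy $e^{2\pi i q_\gamma}$, and the standard variation $\delta S=-S\,(\delta A)\,S$ localizes the perturbation to the integral of $S\otimes S$ along the cycle dual to $b_\gamma$. I would pin down the sign and normalization on the degenerate configuration $y\to x$, where Fay's identity \eqref{Fayid} gives $\int_{a_\gamma}\omega=-\sum_\alpha\partial_\alpha\partial_\gamma\log\Theta(q)\,u_\alpha(x)$ (using $\oint_{a_\gamma}B(x,\cdot)=0$ and $\oint_{a_\gamma}u_\beta=\delta_{\beta\gamma}$), which matches $\partial_{q_\gamma}$ of \eqref{Sz} since the simple pole of $S_q$ has $q$-independent residue and is therefore killed by differentiation.

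For \eqref{SI} I would use Rauch-type variational calculus on the stratum of hyperelliptic abelian differentials, with the periods $A_j$ of $v$ as local coordinates. The key input is the response of $S_q$ to the displacement of one simple zero $x_i$ of $v$ (a branch point of $\CC$) at fixed $q$; working in the parameter $\zeta$ with $z=x_i+\zeta^2$ one shows that this response is the residue at $x_i$ of the quadratic differential $W_t[S_q(x,t),S_q(t,y)]$ divided by the one-form $v$ that trivializes the deformation. To pass to $\partial_{A_j}$ I would invoke the chain rule $\partial_{A_j}=\sum_i(\partial x_i/\partial A_j)\,\partial_{x_i}$, reading the displacement of the zero off from $\partial_{A_j}v=u_j$ (equation \eqref{Ajv}): since $v/{\rm d} z=\mu$ has a simple zero at $x_i$ in the parameter $\zeta$, the motion of $x_i$ is encoded by the function $u_j/{\rm d}\log(v/{\rm d} z)$, which itself vanishes to first order at $x_i$; kept inside the residue it cancels one order of the double pole of $W_t[\,\cdot\,]/v$ and produces the finite quantity appearing in \eqref{SI}, the constant $\tfrac{2\pi i}{4}$ being fixed by matching normalizations.

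The genuine obstacle is \eqref{SI}, in two respects. First, the Rauch formula must be established for a \emph{half-differential} kernel: the square-root branch at $x_i$ has to be controlled in the parameter $\zeta$, and one must verify that the deformation is trivialized by $v=\mu\,{\rm d} z$ rather than by ${\rm d} z$ — this is precisely what makes $W_t[\,\cdot\,]/v$ carry a double, rather than simple, pole at $x_i$. Second, the constant bookkeeping is delicate: one has to confirm that the simple zero of $u_j/{\rm d}\log(v/{\rm d} z)$ at $x_i$ exactly matches the extra order of the pole so that the product-residue in \eqref{SI} is finite, and that combining the Rauch normalization with the $2\pi i$ from the $a$-period of the canonical bidifferential reproduces precisely $\tfrac{2\pi i}{4}$. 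By contrast \eqref{Sqv} is classical, the only care being the signs and normalization of the two simple-pole residues.
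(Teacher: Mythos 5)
Your overall route coincides with the one the paper takes, with the caveat that the paper proves nothing in situ: its proof of this proposition consists of citations — Prop.~1 of \cite{KalKor} for \eqref{Sqv}, Theorem~2 of \cite{KalKor} (the Rauch-type variational formula for $S_q$ on moduli of differentials) for \eqref{SI}, converted by the chain-rule argument of formula (3.10) of \cite{BK_Hitchin}. Your treatment of \eqref{Sqv} is essentially a complete reconstruction of the classical argument behind the first citation: single-valuedness of $\omega(t)=S_q(x,t)S_q(t,y)$ in $t$ (the $b$-monodromies $e^{\pm 2\pi i q_\gamma}$ of the two factors cancel), simple poles only at $t=x,y$ with residues $\mp S_q(x,y)$ by \eqref{ee:SzExp}, identification of the $a_\gamma$-period through the Fay-type propagator identity, and the normalization check at $y\to x$ via \eqref{Fayid} — your computation $\int_{a_\gamma}\omega\to-\sum_\alpha\partial_\alpha\partial_\gamma\log\Theta(q)\,u_\alpha(x)$, using $\oint_{a_\gamma}B(x,\cdot)=0$, is correct and does match $-\partial_{q_\gamma}S_q$ in that limit. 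A further point in your favor: you parsed the statement of \eqref{SI} correctly. Taken literally, the prefactor $u_j/{\rm d}\ln(v/{\rm d}z)$ \emph{vanishes} at $x_i$ (simple zero in the parameter $\zeta=\sqrt{z-x_i}$, since $v/{\rm d}z=\mu\sim a_i\zeta$) while $W_t[\cdot,\cdot]/v$ has a double pole there, so ``evaluation times residue'' must be read as the residue of the product; this is exactly how the paper itself manipulates the formula in the proof of Theorem \ref{TKK} (``since the branch points are simple\dots'').

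The gap is in \eqref{SI}, and it sits exactly where you flag it: the sentence ``one shows that this response is the residue at $x_i$ of $W_t[S_q(x,t),S_q(t,y)]$ divided by $v$'' asserts the entire analytic content of the formula — it is the statement of Theorem~2 of \cite{KalKor}, which the paper invokes rather than proves. Your chain-rule scaffolding around that lemma is sound: from $\mu(x_i(A),A)=0$ and $\partial_{A_j}v=u_j$ (equation \eqref{Ajv}) one gets $\partial_{A_j}x_i=-\lim_{z\to x_i}\big(u_j/{\rm d}z\big)\big/\partial_z\mu$, and indeed $u_j/{\rm d}\ln\mu\approx-(\partial_{A_j}x_i)\,\mu$ near $x_i$, which is what converts $\sum_i(\partial_{A_j}x_i)\,\partial_{x_i}S_q$ into the residue expression once the local deformation formula is known. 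But without actually deriving that formula — e.g., by an Ahlfors--Rauch argument adapted to the half-differential kernel in the coordinate $\zeta$, or by matching singularity structure and twists of both sides as sections in $x$ — the constant $2\pi i/4$ cannot be ``fixed by matching normalizations,'' because at that stage there is nothing yet to match against. So: \eqref{Sqv} is proved; for \eqref{SI} you have a correct reduction of the proposition to the one lemma that carries all of its weight, left unproved — which, to be fair, is also the logical status of the paper's own proof, since it outsources the same lemma to \cite{KalKor}.
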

{\it Proof.} The formula (\ref{Sqv}) was proved in  \cite{KalKor} (Prop.1).

The formula (\ref{SI}) follows from two results. The first is Theorem 2 of \cite{KalKor}
where the variational formulas for $S_q$ on moduli spaces of meromorphic abelian differentials were derived. Then (\ref{SI}) can be obtained from the formula of  \cite{KalKor} via chain rule, following verbatim the proof of formula (3.10) from
\cite{BK_Hitchin} where the variational formulas for Abelian differentials were derived.

\QED

We shall need the following lemma which is valid for any $n$-sheeted cover of $\mathbb{P}^1 $; this statement is equivalent to relations (4.12), (4.13) from  \cite{Annalen}.

\begin{lemma}
\label{lemma51}
Let $\CC$ be an $n$-sheeted cover of $\mathbb{P}^1 $ with projection $f:\CC\to \mathbb P^1$. Then 
the following identity holds:
\be
\sum_{i=1}^n S_q(x,t^{(i)})S_q(t^{(i)},y)= S_q(x,y)\left(\f{1}{f(x)-t}-\f{1}{f(y)-t}\right){\rm d}t 
\la{Sasso}
\ee
\end{lemma}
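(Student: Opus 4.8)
The plan is to fix the points $y$ and $t$ and to regard both sides of \eqref{Sasso} as half-order differentials in the remaining variable $x$, valued in the flat line bundle $\mathcal{L}_q$ whose transition functions along $b_j$ are $e^{-2\pi i q_j}$ (and $\pm 1$ along $a_j$). Both sides are sections of the \emph{same} bundle: on the left each factor $S_q(x,t^{(i)})$ carries exactly this quasi-periodicity in $x$, while the coefficients $S_q(t^{(i)},y)$ are $x$-independent; on the right the quasi-periodicity is supplied by $S_q(x,y)$, and the scalar factor $1/(f(x)-t)$ is harmless since $f$ is a single-valued meromorphic function on $\CC$. Hence the difference $D(x)$ of the two sides is a section of $K^{1/2}\otimes\mathcal{L}_q$, a bundle of degree $g-1$.

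First I would locate the poles of $D(x)$. Choosing $t$ generic so that the preimages $t^{(1)},\dots,t^{(n)}$ are distinct and disjoint from $y$ and from the branch points of $f$, the only possible singularities in $x$ are at $x=y$ and at $x=t^{(j)}$. At $x=y$ both sides are regular: on the left the $S_q(t^{(i)},y)$ are mere coefficients, so no pole appears, while on the right the bracket $\tfrac{1}{f(x)-t}-\tfrac{1}{f(y)-t}$ vanishes to first order as $f(x)\to f(y)$ and cancels the simple diagonal pole of $S_q(x,y)$ recorded in \eqref{ee:SzExp}.

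The core computation is the matching of the simple poles at $x=t^{(j)}$. Using $f$ as a local coordinate near $t^{(j)}$ (legitimate away from branch points), the diagonal expansion \eqref{ee:SzExp} gives $S_q(x,t^{(j)})\sim \sqrt{df(x)}\,\sqrt{dt}\,/(f(x)-t)$, so on the left only the $j$-th summand is singular and contributes a principal part proportional to $S_q(t^{(j)},y)$; on the right only the term $S_q(x,y)\,dt/(f(x)-t)$ is singular, with residue governed by $S_q(t^{(j)},y)$ as well. I expect this residue comparison—keeping track of the half-differential normalization and the two $\sqrt{dt}$ factors so that the coefficients of $dt/(f(x)-t)$ coincide exactly—to be the main obstacle; the relative sign between the $1/(f(x)-t)$ and $1/(f(y)-t)$ terms is fixed precisely by the regularity at $x=y$ established above.

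Once the principal parts agree, $D(x)$ is a holomorphic section of $K^{1/2}\otimes\mathcal{L}_q$. Since $S_q$ in \eqref{Sz} is defined exactly when $\Theta(q)\neq 0$, the associated degree-$(g-1)$ bundle is non-effective and admits no nonzero holomorphic section; therefore $D\equiv 0$, which is \eqref{Sasso} for generic $t$, and the identity extends to all $t$ by continuity. For completeness one checks that the apparent singularities at branch points of $f$ (where $f$ ceases to be a local coordinate) are removable, e.g.\ via the same continuity argument in $t$, so that no extra poles are introduced.
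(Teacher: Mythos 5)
Your proof is correct, but it takes a genuinely different route from the paper's. You fix $y$ and $t$ and compare the two sides of \eqref{Sasso} as sections, in $x$, of the twisted spin bundle $K^{1/2}\otimes\mathcal{L}_q$ on the cover $\CC$: after matching the simple poles at $x=t^{(j)}$ via \eqref{ee:SzExp} (and noting, as you do, that regularity at $x=y$ is what fixes the relative sign of the two terms in the bracket), the difference is a holomorphic section of a degree-$(g-1)$ bundle that is non-effective precisely because $\Theta(q)\neq 0$, i.e.\ $h^0(K^{1/2}\otimes\mathcal{L}_q)=0$ by the Riemann vanishing theorem, hence identically zero. The paper instead fixes $x$ and $y$ and reads \eqref{Sasso} as an identity between 1-forms in the variable $t$ on the base: since the $b$-period twists $e^{\pm 2\pi i q_j}$ of the factors $S_q(x,t^{(i)})$ and $S_q(t^{(i)},y)$ cancel in the product, and the sum over the fiber is invariant under sheet monodromy, the left-hand side descends to a single-valued meromorphic 1-form in $t$ on $\mathbb{P}^1$ with simple poles only at $t=f(x)$ and $t=f(y)$; because the sphere carries no holomorphic differentials, such a form is determined by one residue, which the diagonal behaviour of $S_q$ identifies as $S_q(x,y)$. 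The trade-off: the paper's argument needs nothing beyond the residue theorem in genus zero, but hinges on the holonomy-cancellation/descent observation; yours stays on $\CC$, pays for it with the theta-divisor input and the principal-part checks at all $n+1$ candidate poles, but follows the standard Fay-identity template and does not use the genus of the base in its uniqueness step. Your closing genericity-in-$t$/continuity remark is needed in both approaches (the paper elides it), and your handling of the branch points is sound, since for generic $t$ the points $t^{(j)}$ avoid the ramification locus and $f$ is a legitimate local coordinate there.
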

{\it Proof.} For completeness here we give a short independent proof of this fact.  The l.h.s. of 
(\ref{Sasso}) is a 1-form in $t$   depending only on the point of the base. It has simple poles at 
$z=f(x)$ and $z=f(y)$. The coefficient depends on $x$ and $y$
and must coincide with $S_q(x,y)$ due to the holonomy properties of $S_q$.
\QED

We shall also use the following notations. First, introduce the system of distinguished local coordinates on $\CC$.
Near a branch point $x_j$ it is given by $\zeta_j(z)=\sqrt{z-x_j}$. Near $\infty^{(1,2)}$ the distinguished coordinate is chosen to be $\xi(z)=1/z$. Finally, near any other point with projection $z_0$ on $z$-plane the distinguished coordinate is $z-z_0$.
Now we define the following:
\be
S_q(x,\infty^{(j)})=\frac{S_q(x,y)}{{\sqrt{{\rm d}(z^{-1}(y))}}}\Big|_{y=\infty^{(j)}}\;,\hskip0.7cm j=1,2
\la{Sinf}
\ee
\be
S_q(x,x_k)=\frac{S_q(x,y)}{{\sqrt{{\rm d}\sqrt{f(y)-x_j}}}}\Big|_{y=x_k}\hskip0.7cm k=1,\dots,2g+2
\la{Sbp}
\ee

Using these notations we get from  (\ref{Sasso}) in the limit $t\to \infty$:
\be
\sum_{i=1}^n S_q(x,\infty^{(i)})S_q(\infty^{(i)},y)= S_q(x,y)(f(x)-f(y))
\la{SSS}
\ee

\subsection{Eigenvectors of rational matrix functions via Szeg\"o kernel on spectral curve}

The Szeg\"o kernel can be conveniently used to construct eigenvectors of the rational matrix-valued function $R(z)$ (\ref{abcd}).
The construction is parallel to the one used in \cite{Annalen} to construct solutions of matrix Riemann-Hilbert problems. 

\begin{proposition}
Consider a pair $(\CC,q)$
where $q\in \C^g$ and $\CC$ is the spectral curve given by equation 
\be
 \mu^2=C\frac{\prod_{j=1}^{2g+2}(z-x_j)}{\prod_{j=1}^{g+2} (z-z_j)^2}
 \la{spc}
\ee
such that for the differential $v=\mu {\rm d} z$ we have
\be
\res{z_j^{(1,2)}}v =\pm r_j\;, \hskip0.7cm
\res{\infty^{(1,2)} } v =\pm r_\infty
\la{resco}
\ee
where $r_1,\dots, r_{2g+2}, r_\infty$ are some constants.
Consider the  divisor $D=\l_1^{(1)} +\dots+\l_g^{(1)}$ defined by
\be
\Acal_{\infty^{(2)}}(D)=-q +K^{\infty^{(2)}}
\la{defD}
\ee

Consider the canonical polygon $\tilde{\CC}$ invariant under the hyperelliptic involution $\nu$.
Define the following column-vector for $x\in \tilde{\CC}$:
\be
\psi(x)=\f{1}{\sqrt{{\rm d} f(x)}} \left(\ba{c} S_{q}(x,\infty^{(1)}) \\  S_{q}(x,\infty^{(2)} ) \ea\right)
\
\ee
and the $2\times 2$ matrix on $\tilde{\CC}$
\be
\Psi(x) =\left(\psi(x), \psi(x^*)\right)=\f{1}{\sqrt{{\rm d} f(x)}} \left(\ba{cc} S_{q}(x,\infty^{(1)}) & S_{q}(x^*,\infty^{(1)})\\  S_{q}(x,\infty^{(2)} )  & S_{q}(x^*,\infty^{(2)})  \ea\right)
\la{Psi}
\ee
Then the matrix $R$ defined by
\be
R(x) {\rm d} z (x)
=\Psi(x) \left(\ba{cc}  v & 0 \\ 0 & -v \ea\right) \Psi^{-1}(x)
\la{RPsi}
\ee
is a rational matrix invariant under the transformation $x\to x^\nu$ 
i.e. it depends only on $z$, and 
\begin{enumerate}
\item
$R(z)$  has simple poles only at $z_j$:
\be
R(z)=\sum_{j=1}^{g+2} \frac {R_j}{z-z_j}\;
\la{Rx}
\ee
\item
The eigenvalues of $R_j$ are equal to $\pm r_j$ and  
$$R_\infty:=-\sum_{j=1}^{g+2} R_j=   \left(\ba{cc} r_\infty & 0 \\ 0 & -r_\infty\ea\right)\;.$$
\item
 $\CC$ coincides with the spectral curve  ${\rm det} (R(z)-\mu I)=0$.
 \item
 The matrix entry $R_{21}(z)$ has on $\CC$ exactly $g$ zeros situated at 
 $\lambda_j\equiv f(\l_j^{(1)})$. 

\item
The points $\l_j^{(1)}=(\lambda_j,\mu_j)$ are such that $\mu_j=R_{11}(\lambda_j)$ i.e. the divisor $D$ defined by  
(\ref{defD}) is given by $D=\sum_{j=1}^g (\lambda_j, R_{11}(\lambda_j))$. Another characterization of the divisor is that it consists of the points  of the spectral curve above the finite part of the plane  where the eigenvector is proportional to the vector $(1,0)$.
\end{enumerate}

\end{proposition}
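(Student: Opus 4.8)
The plan is to verify the five listed properties in turn, after first showing that $R$ descends to a rational matrix function of $z$.

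\textbf{Descent to the base and the spectral curve.} First I would check $\nu$-invariance directly from \eqref{Psi}. Since $\psi(x^\nu)$ is the second column of $\Psi(x)$, one has $\Psi(x^\nu)=\Psi(x)\,\sigma_1$ with $\sigma_1$ the column-swap matrix, while $\nu^* v=-v$ (the hyperelliptic structure, cf. \eqref{Vrel}) and $dz$ is pulled back from the base. Substituting into \eqref{RPsi} and using $\sigma_1\,\mathrm{diag}(v,-v)\,\sigma_1=\mathrm{diag}(-v,v)$ gives $R(x^\nu)\,dz=R(x)\,dz$, so $R$ is a well-defined matrix function of $z=f(x)$ on $\mathbb P^1$. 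The eigen-relation $R(x)\psi(x)=(v/dz)\,\psi(x)$ is immediate from \eqref{RPsi}, so the two eigenvalues of $R(z)$ are exactly the two branches $\pm\mu$ of $\mathcal C$; this already yields property (3), namely $\det(R-\eta I)=\eta^2-Q_0$.

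\textbf{Rationality and pole structure.} This I expect to be the crux. The entries of $R$ are $\nu$-invariant bilinear combinations of Szeg\"o kernels divided by $\det\Psi$, e.g. $R_{21}\,dz=2\mu\,S_q(x,\infty^{(2)})S_q(x^\nu,\infty^{(2)})/(dz\,\det\Psi)$. On the base the only candidate singular points are the branch points $x_k$, the poles $z_k$, and $\infty$: indeed $S_q(\cdot,\infty^{(j)})$ has finite poles only on the diagonal (at $\infty^{(j)}$) and otherwise only theta-\emph{zeros}, so no finite poles arise from the numerators. Since $\det\Psi(x^\nu)=-\det\Psi(x)$, the denominator vanishes at every branch point, and a degree count via the Fay identity \eqref{Fayid} shows these are its only zeros. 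At a branch point $\det\Psi\to0$ but simultaneously $\mu\to0$; working in the local coordinate $\zeta_k=\sqrt{z-x_k}$ and expanding $\psi(x^\nu)=\psi(x)+O(\zeta_k)$, I would show the two vanishings cancel exactly, so $R$ stays finite — the same type of branch-point cancellation carried out in Lemma \ref{lemma41}, where the identities \eqref{Sasso}--\eqref{SSS} are the convenient tool. Granting this, the only genuine poles are at the $z_k$, giving the form \eqref{Rx}, and this branch-point cancellation plus the degree bookkeeping for $\det\Psi$ is the step requiring the most care.

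\textbf{Residues and behaviour at infinity.} Reading the residues off the eigenvalues: near $z_k$ the branches of $\mu$ behave as $\pm r_k/(z-z_k)$ by \eqref{resco}, so $R$ has a simple pole with residue matrix $R_k$ of eigenvalues $\pm r_k$, which is property (1)-(2). At $\infty$ the diagonal pole of $S_q(\cdot,\infty^{(1)})$ sits only at $\infty^{(1)}$ and $S_q(\cdot,\infty^{(1)})$ is regular at $\infty^{(2)}$ (and vice versa), so $\psi(\infty^{(1)})\propto(1,0)^T$, $\psi(\infty^{(2)})\propto(0,1)^T$; thus $\Psi$ diagonalizes $R$ at infinity and, with $\mu\sim\mp r_\infty/z$ from \eqref{resco}, this forces the leading term $R(z)\sim-R_\infty/z$ to be $R_\infty=\mathrm{diag}(r_\infty,-r_\infty)$, completing (2).

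\textbf{The divisor.} For (4)-(5) I would identify the zeros of $R_{21}$. From the formula above, $R_{21}$ vanishes over the zeros of $S_q(\cdot,\infty^{(2)})$; by the Riemann vanishing theorem this zero divisor is characterized by the same Abel-map relation as in \eqref{defD}, hence it is $D=\{\lambda_j^{(1)}\}$. Therefore $R_{21}=c(z)$ vanishes at the $g$ points $\lambda_j=f(\lambda_j^{(1)})$, and the degree count $c(z)\sim C/z^2$ shows there are exactly $g$ of them, giving (4). For (5), at $x=\lambda_j^{(1)}\in D$ the second entry $\psi_2=S_q(\cdot,\infty^{(2)})/\sqrt{dz}$ vanishes, so $\psi(\lambda_j^{(1)})\propto(1,0)^T$; since $(1,0)^T$ is an eigenvector of the upper-triangular matrix $R(\lambda_j)$ (as $R_{21}(\lambda_j)=0$) with eigenvalue $R_{11}(\lambda_j)$, and this eigenvalue must equal the spectral branch value $\mu_j=\mu(\lambda_j^{(1)})$, I obtain $\mu_j=R_{11}(\lambda_j)$ together with the stated characterization of $D$ as the locus over the finite plane where the eigenvector is proportional to $(1,0)$.
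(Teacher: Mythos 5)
Your overall architecture matches the paper's proof almost step for step: $\nu$-invariance via $\Psi(x^\nu)=\Psi(x)\sigma_1$, identification of the eigenvalues of $R$ with the two branches $\pm\mu$ (whence property (3) and, via \eqref{resco}, the residue statements (1)--(2)), and the identification of the zeros of $R_{21}$ with the divisor $D$ through the vanishing of $S_q(\cdot,\infty^{(2)})$ and the Riemann vanishing theorem applied to \eqref{defD}. Your treatment of (5) via upper-triangularity of $R(\lambda_j)$ and of the diagonality of $R_\infty$ via the diagonal pole of the Szeg\"o kernel is in fact spelled out more explicitly than in the paper, which leaves these points largely implicit. The one genuine divergence is at the crux you correctly flagged: rationality and regularity at the branch points. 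The paper does not perform a local cancellation at all; it imports from Korotkin's Annalen paper the fact that Fay's identities force $\det\Psi=\pm 1$ with the explicit inverse \eqref{Psiin} written again in terms of Szeg\"o kernels, after which every entry of $R$ is an explicit Szeg\"o bilinear times $v/({\rm d}f)^2$, and branch-point regularity is automatic (since $v$ has a double zero and $({\rm d}f)^2$ a double zero there). What the citation buys is precisely the global statement you defer to a ``degree count via the Fay identity \eqref{Fayid}'': that the unnormalized Szeg\"o determinant has \emph{no} zeros beyond the simple ones at the branch points, so no spurious poles of $R$ appear elsewhere (this is also what guarantees invertibility of $\Psi$ at the points over the $z_k$, which your reading-off of residues from the eigenvalues tacitly uses).

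Two caveats on your branch-point analysis, neither fatal. First, your local cancellation bookkeeping is sound but applies to the \emph{unnormalized} matrix of Szeg\"o kernels: its determinant has a simple zero in the coordinate $\zeta=\sqrt{z-x_k}$ (your expansion $\psi(x^\nu)=\psi(x)+O(\zeta)$ gives exactly this), which cancels against the simple zero of $\mu$; with the paper's normalization by $1/\sqrt{{\rm d}f}$ in \eqref{Psi}, the determinant $\det\Psi$ is \emph{constant} $\pm1$ and does not vanish at branch points --- the statement ``$\det\Psi(x^\nu)=-\det\Psi(x)$ hence it vanishes at fixed points'' ignores the sign ambiguity of the half-differential $\sqrt{{\rm d}f}$ through the branch point, so you should phrase the argument for the unnormalized determinant (note also that the scalar prefactor cancels in the conjugation \eqref{RPsi}, so this is harmless for $R$ itself). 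Second, your sketch needs the zero at the branch point to be \emph{exactly} simple, i.e.\ $\psi$ and $\partial_\zeta\psi$ linearly independent at $x_k$; this, together with the absence of other zeros, is again the content of the Fay-identity computation you postponed. So your plan is viable as an alternative, more hands-on route, but the step you yourself identified as requiring the most care is deferred rather than carried out, and it is exactly the step the paper settles by citation.
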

{\it Proof.} 
To prove that the matrix $R$ is invariant under the involution $x\to x^\nu$ we notice that $\Psi(x^\nu)=\Psi (x)\sigma_1$
and $v(x^\nu)=-v(x)$. Therefore, $ R(x^\nu)=R(x)$.

The eigenvalues of the matrices $R_j$ coincide with residues of $v$ at the points $z_j^{(1,2)}$ which are equal to
$\pm r_j$ due to (\ref{resco}). 

According to  \cite{Annalen} (see p.350, and formula (4.14)), Fay's identities imply that ${\rm det}\Psi=\pm 1$ and
\be
\Psi^{-1}(x)=
%
\f{1}{\sqrt{{\rm d} f(x)}} \left(\ba{cc} S_{q}(\infty^{(1)},x) & S_{q}(\infty^{(2)},x)\\  
S_{q}(\infty^{(1)},x^\nu )  & S_{q}(\infty^{(2)},x^\nu)  \ea\right)
\la{Psiin}
\ee

The matrix element $R_{21}$  is a rational function of $z=f(x)$  given by
$$
R_{21}(z) = 2\f{v(x)}{({\rm d} f(x))^2} S_{q}(x,\infty^{(2)} ) S_{q}(x^\nu,\infty^{(2)})
$$
which vanishes at the points of divisor $D+ D^\nu$ due to (\ref{defD}). Equivalently, 
it means that $R_{21}$, considered as function of $z$, vanishes at the points of $\pi(D)$.

Finally, the eigenvalues of $R_j$ from (\ref{Rx}) are equal to $\pm$ residues of $v$ at $z_j^{(1,2)}$, i.e. $\pm r_j$.

\QED

The expression (\ref{Psi}) can be alternatively written as follows:
\be
\Psi_{\a\b}(z)=\f{1}{\sqrt{{\rm d} f(x)}} S_{q}(z^{(\b)},\infty^{(\a)}) 
\la{Psi1}
\ee
and (\ref{Psiin}) as
\be
(\Psi^{-1})_{\a\b}(z)=\f{ 1}{\sqrt{{\rm d} f(x)}} S_{q}(\infty^{(\b)},z^{(\a)}) 
\la{psim1}
\ee

\begin{corollary}
The matrices $R_j$ can be diagonalized as follows:
\be
R_j= G_j \left(\ba{cc} r_j & 0 \\ 0 & -r_j\ea\right) G_j^{-1}
\la{Rdia}
\ee
where the formulas for $G_j$ are obtained for (\ref{Psi}), (\ref{Psi1}):

\be
(G_j)_{\a\b}=S_{q}(z_j^{(\b)},\infty^{(\a)}) 
\la{GS}
\ee
such that
\be
(G_j^{-1})_{\a\b}=S_{q}(\infty^{(\b)},z_j^{(\a)}) 
\la{GSi}
\ee
\end{corollary}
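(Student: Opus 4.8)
The plan is to read off the diagonalization directly from the factorized representation \eqref{RPsi} by computing the residue of its right-hand side at $z=z_j$. Since $R(z)=\sum_k R_k/(z-z_k)$, one has $R_j=\lim_{z\to z_j}(z-z_j)R(z)$, so I would analyze the local behaviour of $\Psi(x)\,{\rm diag}(v,-v)\,\Psi^{-1}(x)$ as the base point $z=f(x)$ approaches $z_j$.

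The first step is to isolate which of the three factors is singular at $z_j$. By \eqref{ee:SzExp} the Szeg\"o kernel $S_q(x,y)$ has its only singularity on the diagonal $x=y$, so for finite $z_j$ the entries $S_q(z_j^{(\beta)},\infty^{(\alpha)})$ are finite and (for generic $q$, where $\Theta(q)\neq0$) the matrices $\Psi$ and $\Psi^{-1}$ are holomorphic and invertible at $z_j$. The only pole comes from the middle factor: evaluating on the sheet $x=z^{(1)}$, the differential $v$ obeys $\res{z_j^{(1)}}v=r_j$, so $v/{\rm d}z=\mu\sim r_j/(z-z_j)$ there and
\[
\frac{1}{{\rm d}z}\,{\rm diag}(v,-v)=\frac{r_j}{z-z_j}\,\sigma_3+O(1).
\]
Substituting this expansion into \eqref{RPsi} and using the regularity of $\Psi,\Psi^{-1}$, the residue localizes to the values of these matrices at $z_j^{(1)}$, giving
\[
R_j=r_j\,\Psi(z_j^{(1)})\,\sigma_3\,\Psi^{-1}(z_j^{(1)}).
\]

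I would then identify $\Psi(z_j^{(1)})$ with $G_j$. By \eqref{Psi1} the matrix $\Psi(z_j^{(1)})$ (whose columns are indexed by the two sheets $z_j^{(1)},z_j^{(2)}$) equals $G_j$ up to the overall scalar $1/\sqrt{{\rm d}f}$, and since such a scalar cancels in any conjugation $M\mapsto M\sigma_3M^{-1}$ one may replace $\Psi(z_j^{(1)})$ by $G_j$. Writing $r_j\sigma_3={\rm diag}(r_j,-r_j)$ then yields \eqref{Rdia} with $(G_j)_{\alpha\beta}=S_q(z_j^{(\beta)},\infty^{(\alpha)})$; the inverse formula \eqref{GSi} follows at once from \eqref{psim1} evaluated at $x=z_j^{(1)}$, $x^\nu=z_j^{(2)}$, the consistency $G_jG_j^{-1}=I$ being exactly the statement $\det\Psi=\pm1$ together with \eqref{Psiin}, already established in the Proposition via Fay's identity.

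The whole argument is thus a one-line residue computation once \eqref{RPsi} is available; the only place demanding care is the bookkeeping of the half-differential normalizations and the sheet assignment. Concretely, I would verify that column $\beta=1$ of $G_j$ sits at $z_j^{(1)}$, where $\res{z_j^{(1)}}v=+r_j$, so that the eigenvalue $+r_j$ correctly occupies the first diagonal slot, and that the distinguished-coordinate factors $\sqrt{{\rm d}f}$ attached to $\Psi$ and $\Psi^{-1}$ drop out of the conjugation rather than contributing spurious factors. This normalization-and-sheet bookkeeping is the only potential source of a sign or factor error, and it is where I would focus the verification.
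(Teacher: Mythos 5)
Your proposal is correct and is essentially the paper's own (implicit) argument: the corollary follows by taking the residue of the representation \eqref{RPsi} at $z=z_j$, so that $R_j=\Psi(z_j^{(1)})\,\mathrm{diag}(r_j,-r_j)\,\Psi^{-1}(z_j^{(1)})$, and then reading off $G_j$ and $G_j^{-1}$ from \eqref{Psi1} and \eqref{psim1}, with the scalar half-differential normalization cancelling in the conjugation. Your sheet and sign bookkeeping (residue $+r_j$ at $z_j^{(1)}$ in the first diagonal slot, $\det\Psi=\pm1$ from Fay's identity guaranteeing \eqref{GSi}) matches the paper exactly.
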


\subsection{Kirillov-Kostant potential}

The Kirillov-Kostant symplectic form looks as follows in terms of eigenvector matrices $G_j$ \cite{BBT}:
\be
\omega_{KK}=-\sum_{j=1}^n {\rm tr}\; L_j   G_j^{-1} \d G_j \wedge    G_j^{-1} \d G_j 
\la{oKK}
\ee
The natural choice of symplectic potential $\theta_{KK}$ such that ${\rm d}\theta_{KK}=\omega_{KK}$ is
\be
\theta_{KK}=\sum_{j=1}^n {\rm tr}\; L_j   G_j^{-1} \d G_j 
\la{thKK}
\ee
or, alternatively,

\be\label{thKK2}
\theta_{KK} = 
\sum_{j=1}^{g+2} \res{z_j} \tr\bigg(\widehat v \Psi^{-1} \d \Psi\bigg)=
\sum_{j=1}^{g+2} \res{z_j} \tr\bigg(\widehat \mu \Psi^{-1} \d \Psi\bigg) {\rm d}z
\ee
where
\be
\widehat v=\widehat \mu  {\rm d}z=\left(\ba{cc}  v & 0 \\ 0 & -v \ea\right)
\ee

\begin{theorem}\la{TKK}
The form $\theta_{KK}$ in $(A_j,q_j)$-coordinates looks as follows:
\be
\theta_{KK}=\sum_{j=1}^g A_j \delta q_j
\ee
where $A_j=\int_{a_j} v$ are $a$-period of $v$ (defined up to an integer linear combination of Casimirs $r_j$).
\end{theorem}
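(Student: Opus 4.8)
The plan is to use the Szeg\"o-kernel realization \eqref{GS}--\eqref{GSi} of the matrices $G_j$ and to convert the defining expression \eqref{thKK} of $\theta_{KK}$ into a sum of residues, in the spirit of \eqref{thKK2}. Writing $\mu=v/{\rm d}z$, and using $L_j={\rm diag}(r_j,-r_j)$ together with $\res{z_j^{(1)}}v=r_j$, $\res{z_j^{(2)}}v=-r_j$, I would first check that for an arbitrary variation $\delta$ one has
\be
\theta_{KK}(\delta)=\sum_{j=1}^{g+2}\tr\,L_j\,G_j^{-1}\delta G_j=\sum_{j=1}^{g+2}\sum_{\alpha=1,2}\res{z_j^{(\alpha)}}\big(\mu\,g^{\delta}\big),\qquad g^{\delta}(x):=\sum_{\beta=1,2}S_q(\infty^{(\beta)},x)\,\delta S_q(x,\infty^{(\beta)}),
\ee
where $g^{\delta}$ is a meromorphic $1$-form on $\CC$ (a product of two half-forms in $x$). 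The strategy is then to split $\delta$ into the Jacobian directions $\partial_{q_\gamma}$ and the moduli directions $\partial_{A_k}$ and to evaluate each residue sum by the residue theorem on the fundamental polygon $\widetilde{\CC}$; the mechanism is that the two kinds of variation endow $g^{\delta}$ with different monodromy.

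For the Jacobian directions I would use \eqref{Sqv} and the quasi-periodicity of $S_q$. Since $S_q(\infty^{(\beta)},x)$ acquires the factor $e^{2\pi i q_\delta}$ under $x\to x+b_\delta$, while differentiating $S_q(x+b_\delta,\infty^{(\beta)})=e^{-2\pi i q_\delta}S_q(x,\infty^{(\beta)})$ in $q_\gamma$ produces the extra additive term $-2\pi i\delta_{\gamma\delta}S_q(x,\infty^{(\beta)})$, the form $g^{q_\gamma}$ has trivial $a$-monodromy and $b$-monodromy $g^{q_\gamma}(x+b_\delta)=g^{q_\gamma}(x)-2\pi i\,\delta_{\gamma\delta}\sum_\beta S_q(\infty^{(\beta)},x)S_q(x,\infty^{(\beta)})$. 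The coincidence limit $y\to x$ of \eqref{SSS}, which by \eqref{ee:SzExp} gives $\sum_\beta S_q(\infty^{(\beta)},x)S_q(x,\infty^{(\beta)})={\rm d}z$, then yields $g^{q_\gamma}(x+b_\delta)=g^{q_\gamma}(x)-2\pi i\,\delta_{\gamma\delta}\,{\rm d}z$. Computing $\oint_{\partial\widetilde{\CC}}\mu\,g^{q_\gamma}$ from this monodromy produces $2\pi i\int_{a_\gamma}v=2\pi i A_\gamma$, while the residue theorem equates the same contour integral to $2\pi i\sum_{j,\alpha}\res{z_j^{(\alpha)}}(\mu g^{q_\gamma})$, once I check that the residues at $\infty^{(1,2)}$ vanish (there $\mu$ has a simple zero and $g^{q_\gamma}$ only a simple pole, since the diagonal singularity of $S_q$ is $q$-independent and hence $\partial_{q_\gamma}S_q(x,\infty^{(\beta)})$ is regular at $x=\infty^{(\beta)}$). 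This gives $\theta_{KK}(\partial_{q_\gamma})=A_\gamma$.

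For the moduli directions the key observation is that the quasi-periodicity factors $e^{\mp2\pi i q_\delta}$ are independent of the $A_k$, so $\partial_{A_k}S_q$ obeys the same $b$-periodicity as $S_q$ and the $1$-form $g^{A_k}$ is genuinely single-valued on $\CC$. Consequently $\oint_{\partial\widetilde{\CC}}\mu\,g^{A_k}=0$, i.e. the residues of $\mu\,g^{A_k}$ sum to zero over all of $\CC$. After discarding the vanishing residues at $\infty^{(1,2)}$ (same argument as above), I would be left with $\theta_{KK}(\partial_{A_k})=\sum_{j,\alpha}\res{z_j^{(\alpha)}}(\mu g^{A_k})=-\sum_{i}\res{x_i}(\mu g^{A_k})$, the sum running over the branch points $x_i$ of $\CC$. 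The remaining task, which I expect to be the main obstacle, is to show that these branch-point residues vanish. Here I would insert the moduli variational formula \eqref{SI}, which expresses $\partial_{A_k}S_q$ as a sum over the $x_i$ of residues of $W_t[S_q(x,t),S_q(t,y)]/v(t)$; using that $v$ has a double zero at each $x_i$, the local structure of $S_q$ there, and Fay's identity \eqref{Fayid}, the resulting double residue (in $x$ and in the auxiliary variable $t$) at each branch point should cancel, exactly as in the parallel computations of \cite{Annalen,BK_Hitchin}. Granting this, $\theta_{KK}(\partial_{A_k})=0$.

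Combining the two computations gives $\theta_{KK}=\sum_{\gamma}A_\gamma\,\delta q_\gamma$ in the coordinates $(A_j,q_j)$, which is the assertion; the ambiguity of the $A_j$ by integer combinations of the residues $r_j$ is precisely the Casimir ambiguity noted in the statement. I anticipate that, beyond the branch-point cancellation, the only genuinely delicate point is the bookkeeping of half-form and local-coordinate conventions in passing between $G_j$, $\Psi$ and the $1$-forms $g^{\delta}$, which is what ultimately fixes all the signs.
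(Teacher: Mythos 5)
Your reduction of $\theta_{KK}$ to residues of $\mu\, g^{\delta}$ at the points $z_j^{(1,2)}$ is exactly the paper's formula \eqref{thKK2}, and your treatment of the Jacobian directions is correct but genuinely different from the paper's. The paper substitutes the variational formula \eqref{Sqv} into $\Psi^{-1}\partial_{q_j}\Psi$, collapses the sum over sheets with Lemma \ref{lemma51}, interchanges the $a_j$-integration with the residues, and evaluates the resulting diagonal residue via Fay's identity \eqref{Fayid}; you instead observe that $g^{q_\gamma}$ acquires the additive shift $-2\pi i\,\delta_{\gamma\delta}\,{\rm d}z$ under continuation along $b_\delta$ (using the coincidence limit of \eqref{SSS}, correctly deduced from \eqref{ee:SzExp}), read off $\oint_{\partial\tilde{\Ccal}}\mu\, g^{q_\gamma}=2\pi i A_\gamma$ from this monodromy, and then apply the residue theorem inside the fundamental polygon. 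This buys a shorter computation: the only pointwise input is the $q$-independence of the diagonal singularity of $S_q$ (giving regularity of $\mu g^{q_\gamma}$ at $\infty^{(1,2)}$), and the exchange-of-integrations step disappears. Likewise, your observation that $\mu\, g^{A_k}$ is single-valued, so that its residues sum to zero and $\theta_{KK}(\partial_{A_k})=-\sum_i \res{x_i}\big(\mu\, g^{A_k}\big)$, is a sound and cleaner reduction than the corresponding manipulation in the paper.

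However, for the moduli directions your proof stops exactly where the real work begins: you only assert that the branch-point residues ``should cancel'' by analogy with \cite{Annalen,BK_Hitchin}, and this is the decisive step, not routine bookkeeping. The variational formula \eqref{SI} expresses $\partial_{A_k}S_q(x,y)$ for $x$ away from the branch points, so the singular behaviour of $g^{A_k}$ as $x\to x_i$ arises from the collision of $x$ with the $t$-residue at $x_i$ inside \eqref{SI}; extracting $\res{x_i}\big(\mu\, g^{A_k}\big)$ therefore requires a genuine two-variable residue analysis. In the paper this occupies half the proof of Theorem \ref{TKK}: after inserting \eqref{SI} and using Lemma \ref{lemma51}, one interchanges the residues in $p$ and $t$ (legitimate because the branch points stay away from the $z_j^{(\alpha)}$), checks that the $p$-differential has no pole at infinity because the simple zero of $\mu$ cancels the pole of $f(p)$, evaluates the diagonal residue at $p=t$ via Fay's identity \eqref{Fayid} and the expansion \eqref{ee:SzExp} to be exactly ${\rm d}\log\mu$, after which the factor $u_j/{\rm d}\ln\mu$ coming from \eqref{SI} cancels the ${\rm d}\ln\mu$, leaving the residue of the holomorphic differential $u_j$ at each $x_i$, which vanishes. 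Without carrying out this computation (or an equivalent Rauch-type local analysis of $\partial_{A_k}S_q$ near the branch points), the claim $\theta_{KK}(\partial_{A_k})=0$, and hence the theorem, is not established by your argument.
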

{\it Proof.}
Consider first the contribution of $\delta q_j$ in (\ref{thKK}). First, using (\ref{Psi1}), (\ref{psim1})
and variational formula (\ref{Sqv}), we get

\be
\bigg(\Psi^{-1}(z) \pa_{q_j} \Psi (z)\bigg)_{\a\b}{\rm d} z  =-\sum_{\gamma=1}^2  \oint_{x\in a_j} S_q( z^{(\b)} ,x)S_q(x,\infty^{(\g)}) S_q(\infty^{(\g)}, z^{(\a)} )
\ee
or, using Lemma \ref{lemma51},
\be
\bigg(\Psi^{-1}(z) \pa_{q_j} \Psi (z)\bigg)_{\a\b}{\rm d} z  = - \oint_{x\in a_j} S_q( z^{(\b)} ,x)S_q(x, z^{(\a)} ) (f(x) -z)\;.
\ee
Recall  that $\mu(z)$ is a meromorphic function with simple poles at the points $p\to z_j^{(1,2)}$  with singular parts  $\mu=  \frac{\pm r_j}{z-z_j}$.

The contribution to $\theta_{KK}$ of $\d q_j$ is therefore given by:
\begin{equation}
\theta_{KK}(\partial_{q_j}) = -
 \sum_{k=1}^{g+2} \sum_{\a=1}^2 (L_k)_{\a\a}\oint_{x\in a_j} \frac{S_q( p ,x)S_q(x, p )}{{\rm d} f(p)} \bigg|_{p=z_k^{(\a)}} (f(x) -f(p))
\end{equation}
 which, using  \eqref{thKK2}, gives
\begin{equation}
 -
 \sum_{k=1}^{g+2} \sum_{\a=1}^2 \res{p=z_k^{(\a)}} \mu(p)\oint_{x \in a_j} 
 S_q( p ,x)S_q(x, p ) (f(x) -f(p)).
\end{equation}
 The integration contours $a_j$ in the $x$--variable can be chosen so as not to intersect the integration contours for the residues in the $p$--variable and hence the integrand is regular. Thus we can interchange the order of integrations:
 \be
\theta_{KK}(\partial_{q_j}) =-\oint_{x \in a_j}  \sum_{k=1}^{g+2} \sum_{\a=1}^2 \res{p=z_k^{(\a)}} \mu(p)S_q( p ,x)S_q(x , p ) (f(x) -f(p)) 
\la{thkk}
\ee
The sum over the residues  is the sum over all poles above the $z_j$'s of the differential (in the $p$ variable)
\be
\mu(p)S_q( p ,x)S_q(x, p ) (f(x) -f(p)) \;.
\la{muS}
\ee
This differential does not have a pole at $p=\infty^{(1,2)}$ because the eigenvalue, $\mu(p)$ of $R$ has a simple zero, which cancels the pole of $f(p)$. Moreover it 
 has an  additional  simple pole at $p=x$, with residue 
\be
\res{p=x} \mu (p)S_q( p ,x)S_q(x, p ) (f(x) -f(p))  = \mu (x) {\rm d}f(x).
\ee
 Thus we can use  residue theorem and rewrite (\ref{thkk}) as follows:
\be
\theta_{KK}(\partial_{q_j}) =\oint_{x\in a_j}\res{p=x} \mu(p)S_q( p ,x)S_q(x, p ) (f(x) -f(p)) 
 \ee
Furthermore, using Fay's identity (\ref{Fayid}), and using the fact that $B(p,x)$ behaves on the diagonal as $(f(p)-f(x))^{-2}{\rm d}f(p){\rm d}f(x)$, we get
\be
\res{p=x} \mu (p)S_q( p ,x )S_q(x, p ) (f(x) -f(p))=\mu (x){\rm d} f(x)
\ee
and, therefore,
\be
\theta_{KK}(\partial_{q_j})=\oint_{a_j}v=A_j
\ee

Consider now the contribution of $\delta A_j$ to $\theta_{KK}$.  
Remind that $\Psi(\l_j)=G_j$ while ${\rm res}|_{z_j} \hat{v} =L_j$. Now, using variational formulas (\ref{SI}) we have
$$
\left(\Psi^{-1}\p_{A_j}\Psi\right)_{\a\a}=
$$
$$
=\sum_{\g=1}^2 \frac{S_q(\infty^{(\g}),p)}{{\rm d}f(p)}\left(\f{\pi i}{2}\sum_{x_i} u_j(x_i)
{\rm res} |_{x=x_i}\f{W_x[S_q(p,x), S_q(x,\infty^{(\g)})]}{v(x)}\right)\bigg|_{p=z_j^{(\a)}}
$$
where
\be
u_j(x_i)= \frac{u_j}{{\rm d}\ln (v/\d\xi)}(x_i)
\ee

Therefore, using (\ref{SSS}) we have
$$
\theta_{KK}(\pa_{A_j}) =
\sum_{k=1}^{g+2} \sum_{\a=1}^2 (L_k)_{\a\a} \left(\Psi^{-1}\p_{A_j}\Psi\right)_{\a\a}=
 $$
\be
=
-\f{\pi i}{2}\sum_{\ell=1}^{g+2} \sum_{\a=1}^2 \res{}\big|_{p=z_\ell^{(\a)}} \le(\sum_{x_i\in b.pts} u_j(x_i) \res{t=x_i}\frac {W_x\bigg[S_q(p,x),\; S_q(x,p)(f(x)-f(p))\bigg]}{\mu(x){\rm d} f(x) }\mu(p)\ri)
\label{333}
\ee
or, since the branch points  are simple,
\be
\theta_{KK}(\pa_{A_j})=
-\f{\pi i}{2}\sum_{\ell=1}^{g+2} \sum_{\a=1}^2 \res{p=z_\ell^{(\a)}} \le(\sum_{x_i\in b.pts}  \res{t=x_i}  \frac {u_j(x)}{{\rm d}\ln \mu(x))} \frac {W_x\bigg[S_q(p,x),\; S_q(x,p)(f(x)-f(p))\bigg]}{\mu (x){\rm d} f(x) }\mu (p)\ri)
\ee
Once again we can swap the order of residues because the branch-points are away from the points $z_j^{(\a)}$. 
One can verify that the differential of $p$ in the inmost residue has poles only above the $z_j$'s and at $p=t$,  with no pole at $\infty$, for the same reason as in equation \eqref{muS}.
 To compute the residue at $p=x$, we use \eqref{ee:SzExp} that specifies the behavior of $S_{q}(p,x)$ for $p\sim x$:
\bea
&\res{p=x}\frac{W_x\bigg(S_q(p,x),\,S_q(x,p)[f(x)-f(p)] \bigg)}{\mu (x) {\rm d} f(x)}\mu (p)\nonumber\\
&=\res{p=x} 
\frac{\partial_x S_q(p,x)S_q(x,p)(f(x)-f(p))-S_q(p,x)\partial_x S_q(x,p)(f(x)-f(p))-S_q(p,x)S_q(x,p)}{{\mu (x){\rm d}f(x)}}\mu (p)
\nonumber\\
&=\res{p=x}
\frac{1}{(f(p)-f(x))^2}\frac{\mu  (p)}{\mu  (x)} 
=d\log \mu (x).
\eea
Therefore,
$$
 \theta_{KK}(\pa_{A_j})
 =-\f{\pi i}{2} \sum_{x_i} \res{x=x_i} \f{u_j(x)}{{\rm d}\ln \mu(x)} {\rm d}\ln \mu (x)=0
 $$

\QED

\section{Bergman tau-function}
\label{secBerg}
Here we summarize the key facts from the theory of  Bergman tau-function (see \cite{Annalen,JDG,KalKor} and the review \cite{KorRev}). 

First, let us write the function  (\ref{eq:Q0exp}) as follows:
\be
Q_0(z) =C_0 \frac {P(z)}{R^2(z)}
\la{Q01}
\ee
where 
$$P(z)=\prod_{j=1}^{2g+2}(x-x_j)\;,\hskip0.7cm R(z)=\prod_{j=1}^{g+2}(z-z_j)$$ 
and
$$
C_0=\sum_{j=1}^{g+2} r_j^2
$$

\subsection{Definition of  $\tau_B(Q_0)$ and its properties}

Here we remind the definition of  the Bergman tau-function on the space of quadratic differentials 
(see \cite{BK1}, where the tau-function needed here is denoted by $\tau_+$, and the review \cite{KorRev}).
The constructions of \cite{BK1} are adjusted to the case of genus zero base curve (then in the notations of 
\cite{BK1} we have $g_-=g$).



%


In the genus zero case the Bergman tau-function  $\tau_B$ is defined by the following equations with respect to the periods of $v$ along canonical cycles on $\CC$  (see \cite{KalKor,BK1}):
\be
\frac{\d\log\tau_B}{\d A_i }=-\frac{1}{24\pi i }\int_{b_i} \frac{\Scal\left(\int^x v, z(x)\right)(d z)^2}{v} \;,\hskip0.7cm
\frac{\d\log\tau_B}{\d B_i}=\frac{1}{24\pi i }\int_{a_i} \frac{\Scal\left(\int^x v, z(x)\right)(d z)^2}{v}
\la{defBerg}
\ee
for $i=1,\dots,g$. Here $z$ is the global coordinate on $CP1$ and $\Scal(\cdot,\cdot)$ is the Schwarzian derivative
(notice that in $z$-coordinate the Bergman projective connection $S_B$ on the Riemann sphere is identically zero).

Therefore,  the differential of $\log\tau_B$ on the symplectic leaf $\Qcal_{g,n}[{\bf r}]$ is given by
the following expression:
\be
\d\log\tau_B=\frac{1}{24\pi i }\sum_{j=1}^{g} \left[\left(\int_{a_i}  \frac{\Scal\left(\int^x v, z(x)\right) (d z)^2 }{v} \right)   \d B_j -  \left(\int_{b_i }\frac{\Scal\left(\int^x v, z(x)\right) (d z)^2 }{v} \right)   \d A_j  \right]\;.
\la{dlogtau1}
\ee

\subsection{Explicit formula for $\tau(Q_0)$}

Let us introduce the following regularized integrals on $\Ch$ of $v=\sqrt{Q_0}{\rm d} z$:
\be
{\rm reg}\int_{x_1}^{z_k} v:=\lim_{x\to z_k}\left\{ \int_{x_1}^x v - r_k \ln (z(x)-z_k)           \right\}
\la{regint}
\ee

The explicit formula for $\tau_B(Q_0)$ is then given by the following proposition:
\begin{proposition}
Choose the contours $l_j$ connecting $x_1$ with $z_j$ such that they lie entirely inside of the fundamental polygon 
$\Ch$. 

\be
\tau_B=
\exp\left\{-\f{1}{6} \sum_{k=1}^{g+2} \f{1}{r_k} {\rm reg} \int_{x_1}^{z_k} v  \right\}
 \prod_{j<k}(x_j-x_k)^{5/144}  \prod_{j<k}(z_j-z_k)^{1/6} \prod_{j,k}(x_j-z_k)^{-7/72}
 \la{tauf}
\ee
\la{tauprop}
\end{proposition}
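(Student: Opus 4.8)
The plan is to verify directly that the right-hand side of (\ref{tauf}) satisfies the defining system (\ref{defBerg}), equivalently that its logarithmic differential equals the closed one-form (\ref{dlogtau1}). Since (\ref{defBerg}) fixes $\log\tau_B$ up to an additive constant on the $g$-dimensional space of curves (\ref{spcurve}) with $z_k,r_k$ held fixed, this is all that is needed; the residual constant and the moduli-independent factor $\prod_{j<k}(z_j-z_k)^{1/6}$ are then pinned down by comparison with the general expression of \cite{BK1} (or by a degeneration check). I would use the branch points $x_j$ as (redundant, constrained) coordinates. The decisive simplification in the genus-zero-base setting is that $v=\sqrt{Q_0}\,{\rm d}z$ is explicit, so that, at fixed $C_0$ and $z_k$,
\be
\frac{\partial v}{\partial x_j}=-\frac{v}{2(z-x_j)},
\la{dvplan}
\ee
which furnishes completely explicit Rauch-type formulas $\partial_{x_j}A_i=\int_{a_i}\partial_{x_j}v$ and $\partial_{x_j}B_i=\int_{b_i}\partial_{x_j}v$ for the period derivatives.

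Substituting these into (\ref{dlogtau1}) and using the pairing (\ref{eq:RBPairing}), the coefficient of ${\rm d}x_l$ in $\d\log\tau_B$ becomes $\frac{1}{24\pi i}\langle P,\partial_{x_l}v\rangle$, where $P:=\Scal\big(\int v,z\big)({\rm d}z)^2/v$ is an odd meromorphic differential on $\CC$; by the Riemann bilinear relations (\ref{RBL}) this collapses to a sum of residues,
\be
\frac{\partial\log\tau_B}{\partial x_l}=\frac1{12}\sum\res{}\Big[\Big(\int^x P\Big)\,\partial_{x_l}v\Big],
\la{resplan}
\ee
with contributions from the branch points $x_k$, the poles $z_k$ and $\infty$, and the point $x_l$. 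Evaluating these residues requires only the local behaviour of $\Scal(\int v,z)$: near a branch point $x_k$, in the distinguished coordinate $\zeta_k=\sqrt{z-x_k}$, the double zero of $v$ forces
\be
\Scal\Big(\int v,z\Big)=-\frac{5}{8(z-x_k)^2}+O\big((z-x_k)^{-1}\big),
\la{Sexpplan}
\ee
so that the residue at each $x_k$ ($k\ne l$) produces a term $\propto(x_l-x_k)^{-1}$; near $z_k$ and $\infty$ one has instead $\Scal(\int v,z)\sim\tfrac12(z-z_k)^{-2}$, and the simple poles of $v$ with residues $\pm r_k,\pm r_\infty$ generate the $r_k$- and $r_\infty$-dependent contributions.

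It then remains to match (\ref{resplan}) with $\partial_{x_l}\log[\text{RHS of }(\ref{tauf})]$. The algebraic factors contribute $\frac{5}{144}\sum_{k\ne l}(x_l-x_k)^{-1}-\frac{7}{72}\sum_k(x_l-z_k)^{-1}$, the factor $\prod(z_j-z_k)^{1/6}$ drops out (constant on the moduli space), and the exponential prefactor contributes $-\frac16\sum_k\frac1{r_k}\,\partial_{x_l}\,{\rm reg}\!\int_{x_1}^{z_k}v$, computed by differentiating under the integral sign with (\ref{dvplan}), i.e. $\partial_{x_l}{\rm reg}\!\int_{x_1}^{z_k}v=-\frac12\,{\rm reg}\!\int_{x_1}^{z_k}\frac{v}{z-x_l}$ plus a boundary term when $l=1$. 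Matching term by term, the exponents $5/144$ and $7/72$ are reproduced from the coefficient $-5/8$ in (\ref{Sexpplan}) through the universal factor $\frac1{12}$ of (\ref{resplan}), and the $1/r_k$-structure arises from the residues at the $z_k$.

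The main obstacle is the bookkeeping of the residue sum in (\ref{resplan}) together with a consistent handling of the regularizations, so that the two sides agree identically rather than up to the admissible constant. Two points need particular care: first, the compatibility between the $z$-regularization of ${\rm reg}\!\int_{x_1}^{z_k}v$ in (\ref{regint}) and the logarithmic singularity of $\int^x P$ at $z_k$ (where $\int^x P\sim\tfrac1{2r_k}\log(z-z_k)$), which is precisely the place where the $z_k$-residues must be regularized; and second, the fact that admissible variations must preserve $z_k$ and $r_k$, so the free derivatives (\ref{dvplan}) at fixed $C_0$ have to be restricted to the constraint surface --- equivalently, one verifies the full variational system of \cite{BK1}, including pole variations, and then specializes. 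A simultaneous-scaling homogeneity check $x_j\mapsto s\,x_j,\ z_k\mapsto s\,z_k$ provides a useful control on the rational constants and on the overall normalization.
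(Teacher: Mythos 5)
Your proposal is correct in outline, but it takes a genuinely different route from the paper. The paper does not verify the variational equations \eqref{defBerg} at all: it imports the closed-form prime-form expression \eqref{taupint} for the Bergman tau-function on a genus-zero base from \cite{JDG,contemp,BK1,KalKor}, specializes it to the divisor $(Q_0)=\sum_{j=1}^{2g+2}x_j-2\sum_{k=1}^{g+2}z_k$, and then the entire proof consists of evaluating the prime forms in the distinguished local coordinates determined by $v$: the coordinate \eqref{zetak} at the poles produces exactly the exponential factor with the regularized integrals \eqref{regint}, the coordinates $\xi_j=\big(\int_{x_j}^z v\big)^{2/3}$ at the branch points produce factors $a_j^{2/3}$ with $a_j=C_0^{1/2}\prod_{k\neq j}(x_k-x_j)^{1/2}/\prod_k(x_j-z_k)$, and the exponents emerge from pure power bookkeeping ($5/144=1/48+1/72$ from the prime forms and the $a_j$'s, $-7/72=-1/12-1/72$, and $1/6$ directly from $E(z_j,z_k)$). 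Your plan instead re-derives the formula by checking the defining system \eqref{defBerg}/\eqref{dlogtau1} via Rauch-type deformations and residue calculus, which is essentially how the general formulas of the cited references were originally established; your local data are right (the factor $\frac{1}{12}$ in the residue formula, $\Scal(\int v,z)\sim-\frac{5}{8}(z-x_k)^{-2}$ at branch points and $\sim\frac{1}{2}(z-z_k)^{-2}$ at poles, and indeed the leading residue at $x_k$ yields $\frac{5}{144}(x_l-x_k)^{-1}$, matching the first algebraic factor). What the paper's route buys is brevity and a clean handling of regularizations through the distinguished coordinates; what yours buys is an independent verification not resting on the specialization being carried out correctly. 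Two caveats you flag deserve sharpening, as they are where the work actually lies: $\partial_{x_l}$ at fixed $C_0,z_k$ moves the residues $r_k$ and is therefore \emph{not} tangent to the leaf on which \eqref{defBerg} holds, so you must either project onto the fixed-$\mathbf{r}$ constraint surface or verify the extended variational system of \cite{BK1} including pole and residue variations; and \eqref{defBerg} determines $\log\tau_B$ only up to an arbitrary function of $(z_k,r_k)$ --- not a constant --- so the factor $\prod_{j<k}(z_j-z_k)^{1/6}$, the $z_k$-dependence of the exponential term, and the $r$-dependent normalization must all be fixed by comparison with \cite{BK1}, which is precisely the external input the paper invokes from the start. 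Once completed, your argument thus leans on the same prior literature, merely entering it at a different point.
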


{\it Proof.}
Denote by 
\be
E(z,w)=\f{z-w}{\sqrt{dz}\sqrt{dw}}
\ee
 the prime-form on $\mathbb P^1$. In terms of the  prime-form the Bergman tau-function is given by the following expression valid for the base curve of genus zero  in terms of the divisor $(Q_0)=\sum d_j p_j$:
\cite{JDG,contemp,BK1,KalKor}:
\be
\tau_B=
\left(\f{\qd_0(x)}{\prod_{j}E^{d_i}(x,p_j)}\right)^{-1/6} 
\prod_{i<j} E(p_i,p_j)^{\f{d_i d_j}{24}}\, .
\la{taupint}
\ee

The  prime-forms in (\ref{taupint})  are evaluated at the points $p_i$ as follows:
\be
E(x,q_i)=\lim_{y\to p_i} E(x,y) \sqrt{d\zeta_i(y)},
\la{defEpi}
\ee
\be
 E(p_i,p_j)=\lim_{x\to p_j, y\to p_i} E(x,y) \sqrt{d\zeta_i(y)} \sqrt{d\zeta_j(x)}
\la{defEppi}
\ee
where $\zeta$ are distinguished local coordinates near points $p_j$.

Let us apply (\ref{taupint}) to our case, when 
$$(\qd_0)=\sum_{k=1}^{2g+2}x_j -2 \sum_{k=1}^{g+2} z_k\;.$$
and
\be
\tau_B=\left(\f{\qd_0(x) \prod_{j}E^2(x,z_k) }{\prod_{j}E(x,x_j)}\right)^{-1/6}
\left( \f{\prod_{i<j} E(x_i,x_j)  \prod_{k<l} E^4(z_k,z_l)}{ \prod_{k,l} E^2(x_k,z_l) }  \right)^{{1}/{24}}\;.
\la{tau1}
\ee

Locally near $z_k^{(1)}$ we have 
\be
v\sim \f{r_k}{z-z_k}d z+\dots
\ee
and the distinguished local coordinate $\zeta_k$ near $z_k$ is given by
\be
\zeta_k(x)=\exp\left\{ \f{1}{r_k}\int_{x_1}^z v\right\}\;.
\la{zetak}
\ee
such that
\be
\f{d \zeta_k}{d z}\big|_{z_k}= \exp\left\{ \f{1}{r_k} {\rm reg} \int_{x_1}^{z_k} v\right\}
\ee
where the regularized integral is given by (\ref{regint}).

The total power of $d\zeta_j/d z (z_j)$ in (\ref{taupint}) is 
\be
-\f{1}{6}+\f{1}{12} (g+1) -\f{1}{24}(2g+2)=-\f{1}{6}
\ee
which gives the second term in (\ref{tauf}).

The distinguished local coordinates $\xi_j$ near $x_j$ are given by:
\be
\xi_j(x)=\left(\int_{x_j}^z v\right)^{2/3}
\ee
Since locally, near the branch point $x_j$, we have
\be
v\sim a_j(z-x_j)^{1/2}\;,
\ee
we get
\be
\xi_j(z)\sim a_j^{2/3} (z-x_j)
\ee
and
\be
\f{ d\xi_j(z)}{d z} (x_j)= a_j^{2/3}
\ee
where
$$
a_j=C_0^{1/2}\frac{\prod_{k\neq j} (x_k-x_j)^{1/2}}{\prod_{k} (x_j-z_k)}
$$

The total power of $d\xi_j/d z (x_j)$ in  (\ref{tau1}) is
\be
\f{1}{12}+\f{1}{2} \f{1}{24}(2g+1) -\f{1}{24}(g+2) =\f{1}{48}
\ee
Therefore, the total power of $a_j$ is 
$
\f{1}{72}
$\;.

The total power of $C_0$ comes from $Q$ and all $a_j$ which gives
$$
-\f{1}{6}+ \f{1}{2}\f{1}{72} (2g+2)  =\f{g-11}{72}
$$
However, we don't include the $C_0$ multiplier into $\tau_B$ since the latter is defined up to a constant which might depend on residues.

Let us now compute the powers of $x_j-x_k$, $z_j-x_k$ and $z_j-z_k$.

The product of $z_j-z_k$ comes only from $E(z_j,z_k)$, thus equals to $1/6$.

The product of $x_j-z_k$ comes from $E(x_j,z_k)$ (gives $-1/12$) and from product of $a_j$ (gives $-1/72$), and in total we get 
$$
-\f{1}{12}-\f{1}{72}= -\f{7}{72}
$$
Finally, the product of $x_j-x_k$ comes from $E(x_j,x_k)$ (gives $1/48$) and product of $a_j$ (gives $1/72$). In total we get $5/144$.
\QED

\providecommand{\href}[2]{#2}\begingroup\raggedright\endgroup

%
%

%
%
%
%

\begin{thebibliography}{10}

\bibitem{AlekMal2}
A.~Y. {Alekseev} and A.~Z. {Malkin}, \emph{{The hyperbolic moduli space of flat
  connections and the isomorphism of symplectic multiplicity spaces}},
  {\emph{eprint arXiv:dg-ga/9603017} (1996) dg}
  [\href{https://arxiv.org/abs/dg-ga/9603017}{{\ttfamily dg-ga/9603017}}].

\bibitem{AlBrid}
D.~Allegretti and T.~Bridgeland, \emph{{The monodromy of meromorphic projective
  structures}}, \href{https://doi.org/10.1090/tran/8093}{\emph{Trans. Am. Math.
  Soc.} {\bfseries 373} (2020) 6321}.

\bibitem{Alleg}
D.~G.~L. Allegretti, \emph{{Voros symbols as cluster coordinates}},
  \href{https://doi.org/10.1112/topo.12106}{\emph{J. Topol.} {\bfseries 12}
  (2019) 1031} [\href{https://arxiv.org/abs/1802.05479}{{\ttfamily
  1802.05479}}].

\bibitem{BBT}
O.~Babelon, D.~Bernard and M.~Talon, \emph{Introduction to Classical Integrable
  Systems}, Cambridge Monographs on Mathematical Physics. Cambridge University
  Press, 2003,
  \href{https://doi.org/10.1017/CBO9780511535024}{10.1017/CBO9780511535024}.

\bibitem{BGG2021}
M.~Bershtein, P.~Gavrylenko and A.~Grassi, \emph{{Quantum spectral problems and
  isomonodromic deformations}},
  \href{https://arxiv.org/abs/2105.00985}{{\ttfamily 2105.00985}}.

\bibitem{BK_Hitchin}
M.~Bertola and D.~Korotkin, \emph{{Spaces of Abelian Differentials and
  Hitchin’s Spectral Covers}},
  \href{https://doi.org/10.1093/imrn/rnz142}{\emph{International Mathematics
  Research Notices} (2019) }.

\bibitem{BK1}
M.~Bertola and D.~Korotkin, \emph{{Hodge and Prym Tau Functions, Strebel
  Differentials and Combinatorial Model of ${\mathcal {M}}_{g,n}$}},
  \href{https://doi.org/10.1007/s00220-020-03819-9}{\emph{Commun. Math. Phys.}
  {\bfseries 378} (2020) 1279}
  [\href{https://arxiv.org/abs/1804.02495}{{\ttfamily 1804.02495}}].

\bibitem{BK_TMF}
M.~Bertola and D.~A. Korotkin, \emph{{WKB expansion for a~Yang\textendash{}Yang
  generating function and the~Bergman tau function}},
  \href{https://doi.org/10.4213/tmf9834}{\emph{Teor. Mat. Phys.} {\bfseries
  206} (2021) 295}.

\bibitem{BK_JDG}
M.~Bertola and D.~Korotkin, \emph{Extended goldman symplectic structure in
  fock-goncharov coordinates}, {\emph{To appear in J. Diff. Geom.} (2021) }
  [\href{https://arxiv.org/abs/1910.06744}{{\ttfamily 1910.06744}}].

\bibitem{BK2iso}
M.~Bertola and D.~Korotkin, \emph{{Tau-Functions and Monodromy
  Symplectomorphisms}},
  \href{https://doi.org/10.1007/s00220-021-04224-6}{\emph{Commun. Math. Phys.}
  {\bfseries 388} (2021) 245}
  [\href{https://arxiv.org/abs/1910.03370}{{\ttfamily 1910.03370}}].

\bibitem{BLMST2017}
G.~Bonelli, O.~Lisovyy, K.~Maruyoshi, A.~Sciarappa and A.~Tanzini, \emph{{On
  Painlev{\'e}/gauge theory correspondence}},
  \href{https://doi.org/10.1007/s11005-017-0983-6}{\emph{Letters in
  Mathematical Physics} {\bfseries 107} (2017) 2359}
  [\href{https://arxiv.org/abs/1612.06235}{{\ttfamily 1612.06235}}].

\bibitem{Bri}
T.~Bridgeland, \emph{{Riemann-Hilbert problems from Donaldson-Thomas theory}},
  \href{https://doi.org/10.1007/s00222-018-0843-8}{\emph{Invent. Math.}
  {\bfseries 216} (2019) 69}
  [\href{https://arxiv.org/abs/1611.03697}{{\ttfamily 1611.03697}}].

\bibitem{BM}
T.~{Bridgeland} and D.~{Masoero}, \emph{{On the monodromy of the deformed cubic
  oscillator}}, {\emph{arXiv e-prints} (2020) arXiv:2006.10648}
  [\href{https://arxiv.org/abs/2006.10648}{{\ttfamily 2006.10648}}].

\bibitem{BS}
T.~Bridgeland and I.~Smith, \emph{Quadratic differentials as stability
  conditions}, {\emph{Publications math{\'e}matiques de l'IH{\'E}S} {\bfseries
  121} (2015) 155}.

\bibitem{CLT}
I.~Coman, P.~Longhi and J.~Teschner, \emph{{From quantum curves to topological
  string partition functions II}},
  \href{https://arxiv.org/abs/2004.04585}{{\ttfamily 2004.04585}}.

\bibitem{DMDGG2020}
F.~Del~Monte, H.~Desiraju and P.~Gavrylenko, \emph{{Isomonodromic tau functions
  on a torus as Fredholm determinants, and charged partitions}},
  \href{https://arxiv.org/abs/2011.06292}{{\ttfamily 2011.06292}}.

\bibitem{DMDG2022}
F.~Del~Monte, H.~Desiraju and P.~Gavrylenko, \emph{{Monodromy dependence and
  symplectic geometry of isomonodromic tau functions on the torus}},
  \href{https://arxiv.org/abs/2211.01139}{{\ttfamily 2211.01139}}.

\bibitem{DDP}
H.~Dillinger, E.~Delabaere and F.~Pham, \emph{R{\'e}surgence de voros et
  p{\'e}riodes des courbes hyperelliptiques},  in \emph{Annales de l'institut
  Fourier}, vol.~43, pp.~163--199, 1993.

\bibitem{Dub}
B.~Dubrovin, \emph{Integrable systems and riemann surfaces lecture notes}, .

\bibitem{Fay73}
J.~D. Fay, \emph{Theta functions on Riemann surfaces}, vol.~352. Springer,
  2006.

\bibitem{FG}
V.~Fock and A.~Goncharov, \emph{Moduli spaces of local systems and higher
  teichm\"uller theory},
  \href{https://doi.org/10.1007/s10240-006-0039-4}{\emph{Publications
  Math\'ematiques de l'IH\'ES} {\bfseries 103} (2006) 1}.

\bibitem{GMN}
D.~Gaiotto, G.~W. Moore and A.~Neitzke, \emph{Wall-crossing, hitchin systems,
  and the wkb approximation},
  \href{https://doi.org/https://doi.org/10.1016/j.aim.2012.09.027}{\emph{Advances
  in Mathematics} {\bfseries 234} (2013) 239}.

\bibitem{GIL2012}
O.~Gamayun, N.~Iorgov and O.~Lisovyy, \emph{{Conformal field theory of
  Painlev\'e VI}}, \href{https://doi.org/10.1007/JHEP10(2012)038}{\emph{JHEP}
  {\bfseries 10} (2012) 038} [\href{https://arxiv.org/abs/1207.0787}{{\ttfamily
  1207.0787}}].

\bibitem{GL2018}
P.~Gavrylenko and O.~Lisovyy, \emph{{Fredholm Determinant and Nekrasov Sum
  Representations of Isomonodromic Tau Functions}},
  \href{https://doi.org/10.1007/s00220-018-3224-7}{\emph{Commun. Math. Phys.}
  {\bfseries 363} (2018) 1} [\href{https://arxiv.org/abs/1608.00958}{{\ttfamily
  1608.00958}}].

\bibitem{GMS2020}
P.~Gavrylenko, A.~Marshakov and A.~Stoyan, \emph{{Irregular conformal blocks,
  Painlev\'e III and the blow-up equations}},
  \href{https://doi.org/10.1007/JHEP12(2020)125}{\emph{JHEP} {\bfseries 12}
  (2020) 125} [\href{https://arxiv.org/abs/2006.15652}{{\ttfamily
  2006.15652}}].

\bibitem{Gold84}
W.~M. Goldman, \emph{The symplectic nature of fundamental groups of surfaces},
  \href{https://doi.org/https://doi.org/10.1016/0001-8708(84)90040-9}{\emph{Advances
  in Mathematics} {\bfseries 54} (1984) 200}.

\bibitem{Hitchin}
N.~Hitchin, \emph{Frobenius manifolds}, pp.~69--112.
\newblock Springer Netherlands, Dordrecht, 1997.
\newblock 10.1007/978-94-017-1667-3\_3.

\bibitem{ILT2015}
N.~Iorgov, O.~Lisovyy and J.~Teschner, \emph{{Isomonodromic tau-functions from
  Liouville conformal blocks}},
  \href{https://doi.org/10.1007/s00220-014-2245-0}{\emph{Commun. Math. Phys.}
  {\bfseries 336} (2015) 671}
  [\href{https://arxiv.org/abs/1401.6104}{{\ttfamily 1401.6104}}].

\bibitem{ILTy2013}
N.~Iorgov, O.~Lisovyy and Y.~Tykhyy, \emph{{Painlev\'e VI connection problem
  and monodromy of $c=1$ conformal blocks}},
  \href{https://doi.org/10.1007/JHEP12(2013)029}{\emph{JHEP} {\bfseries 12}
  (2013) 029} [\href{https://arxiv.org/abs/1308.4092}{{\ttfamily 1308.4092}}].

\bibitem{ILP}
A.~R. Its, O.~Lisovyy and A.~Prokhorov, \emph{{Monodromy dependence and
  connection formulae for isomonodromic tau functions}},
  \href{https://doi.org/10.1215/00127094-2017-0055}{\emph{Duke Mathematical
  Journal} {\bfseries 167} (2018) 1347 }.

\bibitem{ItLTy2014}
A.~Its, O.~Lisovyy and Y.~Tykhyy, \emph{{Connection problem for the
  sine-Gordon/Painlev\'e III tau function and irregular conformal blocks}},
  \href{https://arxiv.org/abs/1403.1235}{{\ttfamily 1403.1235}}.

\bibitem{Its2016}
A.~Its and A.~Prokhorov, \emph{Connection problem for the tau-function of the
  sine-gordon reduction of painlev{\'e}-iii equation via the riemann-hilbert
  approach}, {\emph{International Mathematics Research Notices} {\bfseries
  2016} (2016) 6856}.

\bibitem{Iwaki:2019zeq}
K.~Iwaki, \emph{{2-Parameter $\tau $-Function for the First Painlev\'e
  Equation: Topological Recursion and Direct Monodromy Problem via Exact WKB
  Analysis}}, \href{https://doi.org/10.1007/s00220-020-03769-2}{\emph{Commun.
  Math. Phys.} {\bfseries 377} (2020) 1047}
  [\href{https://arxiv.org/abs/1902.06439}{{\ttfamily 1902.06439}}].

\bibitem{Jeong:2020uxz}
S.~Jeong and N.~Nekrasov, \emph{{Riemann-Hilbert correspondence and blown up
  surface defects}}, \href{https://doi.org/10.1007/JHEP12(2020)006}{\emph{JHEP}
  {\bfseries 12} (2020) 006}
  [\href{https://arxiv.org/abs/2007.03660}{{\ttfamily 2007.03660}}].

\bibitem{JMU1}
M.~Jimbo, T.~Miwa and K.~Ueno, \emph{Monodromy preserving deformation of linear
  ordinary differential equations with rational coefficients: I. general theory
  and $\tau$-function},
  \href{https://doi.org/https://doi.org/10.1016/0167-2789(81)90013-0}{\emph{Physica
  D: Nonlinear Phenomena} {\bfseries 2} (1981) 306}.

\bibitem{KalKor}
C.~{Kalla} and D.~{Korotkin}, \emph{{Baker-Akhiezer Spinor Kernel and
  Tau-functions on Moduli Spaces of Meromorphic Differentials}},
  \href{https://doi.org/10.1007/s00220-014-2081-2}{\emph{Communications in
  Mathematical Physics} {\bfseries 331} (2014) 1191}
  [\href{https://arxiv.org/abs/1307.0481}{{\ttfamily 1307.0481}}].

\bibitem{kawai2005algebraic}
T.~Kawai and Y.~Takei, \emph{Algebraic analysis of singular perturbation
  theory}, vol.~227. American Mathematical Soc., 2005.

\bibitem{JDG}
A.~Kokotov and D.~Korotkin, \emph{{Tau-functions on spaces of Abelian
  differentials and higher genus generalizations of Ray-Singer formula}},
  \href{https://doi.org/10.4310/jdg/1242134368}{\emph{Journal of Differential
  Geometry} {\bfseries 82} (2009) 35 }.

\bibitem{Annalen}
D.~Korotkin, \emph{Solution of matrix riemann-hilbert problems with
  quasi-permutation monodromy matrices},
  \href{https://doi.org/10.1007/s00208-004-0528-z}{\emph{Mathematische Annalen}
  {\bfseries 329} (2004) 335}.

\bibitem{KorSam}
D.~Korotkin and H.~Samtleben, \emph{{Quantization of coset space sigma models
  coupled to two-dimensional gravity}},
  \href{https://doi.org/10.1007/s002200050247}{\emph{Commun. Math. Phys.}
  {\bfseries 190} (1997) 411}
  [\href{https://arxiv.org/abs/hep-th/9607095}{{\ttfamily hep-th/9607095}}].

\bibitem{KorRev}
D.~Korotkin, \emph{Bergman tau-function: From einstein equations and
  dubrovin-frobenius manifolds to geometry of moduli spaces}, {\emph{Integrable
  Systems and Algebraic Geometry: Volume 2} {\bfseries 459} (2020) 215}
  [\href{https://arxiv.org/abs/1812.03514}{{\ttfamily 1812.03514}}].

\bibitem{contemp}
D.~Korotkin and P.~Zograf, \emph{Tau function and the prym class},
  {\emph{Algebraic and geometric aspects of integrable systems and random
  matrices} (2013) 241}.

\bibitem{Nekrasov:2020qcq}
N.~Nekrasov, \emph{{Blowups in BPS/CFT correspondence, and Painlev\'e VI}},
  \href{https://arxiv.org/abs/2007.03646}{{\ttfamily 2007.03646}}.

\bibitem{Voros}
A.~Voros, \emph{The return of the quartic oscillator. the complex wkb method},
  {\emph{Annales de l'I.H.P. Physique th\'eorique} {\bfseries 39} (1983) 211}.

\end{thebibliography}

\end{document}